\newcommand{\ket}[1]{{\left\vert{#1}\right\rangle}}
\newcommand{\qw}[1][-1]{\ar @{-} [0,#1]}
\newcommand{\qwx}[1][-1]{\ar @{-} [#1,0]}
\newcommand{\gate}[1]{*+<.6em>{#1} \POS ="i","i"+UR;"i"+UL **\dir{-};"i"+DL **\dir{-};"i"+DR **\dir{-};"i"+UR **\dir{-},"i" \qw}
\newcommand{\control}{*!<0em,.025em>-=-<.2em>{\bullet}}
\newcommand{\ctrl}[1]{\control \qwx[#1] \qw}
\newcommand{\targ}{*+<.02em,.02em>{\xy ="i","i"-<.39em,0em>;"i"+<.39em,0em> **\dir{-}, "i"-<0em,.39em>;"i"+<0em,.39em> **\dir{-},"i"*\xycircle<.4em>{} \endxy} \qw}
\newcommand{\lstick}[1]{*!R!<.5em,0em>=<0em>{#1}}
\newcommand{\Qcircuit}{\xymatrix @*=<0em>}
\newcommand{\be}{\begin{equation}}
\newcommand{\ee}{\end{equation}}
\newcommand{\ba}{\begin{array}}
\newcommand{\ea}{\end{array}}
\newcommand{\bea}{\begin{eqnarray}}
\newcommand{\eea}{\end{eqnarray}}
\newcommand{\ra}{\rangle}
\newcommand{\la}{\langle}
\newcommand{\irange}[2]{[{#1}..{#2}]}
\newcommand{\calB}{{\cal B}}
\newcommand{\calL}{{\cal L }}
\newcommand{\calF}{{\cal F }}
\newcommand{\calC}{{\cal C }}
\newcommand{\calS}{{\cal S }}
\newcommand{\calP}{{\cal P }}
\newcommand{\calM}{{\cal M }}
\newcommand{\FF}{\mathbb{F}}
\newcommand{\ZZ}{\mathbb{Z}}
\newcommand{\Gate}[1]{\textsc{#1}}
\newcommand{\hgate}{\Gate{h}}
\newcommand{\zgate}{\Gate{z}}
\newcommand{\ygate}{\Gate{y}}
\newcommand{\xgate}{\Gate{x}}
\newcommand{\czgate}{\Gate{cz}}
\newcommand{\pgate}{\Gate{p}}
\newcommand{\idgate}{\Gate{i}}
\newcommand{\notgate}{\Gate{not}}
\newcommand{\cnotgate}{\Gate{cnot}}
\newcommand{\cxgate}{\Gate{cnot}}
\newcommand{\swapgate}{\Gate{swap}}
\newcommand{\myAND}{{\bf and} }
\newtheorem{proposition}{Proposition}
\newtheorem{lemma}{Lemma}
\newtheorem{conj}{Conjecture}
\newtheorem{theorem}{Theorem}
\newtheorem{example}{Example}
\newcommand{\eq}[1]{Eq.~(\ref{eq:#1})}
\renewcommand{\sec}[1]{\hyperref[sec:#1]{Section~\ref*{sec:#1}}}
\newcommand{\ssec}[1]{\hyperref[ssec:#1]{Subsection~\ref*{ssec:#1}}}
\newcommand{\fig}[1]{\hyperref[fig:#1]{Fig.~\ref*{fig:#1}}}
\newcommand{\tab}[1]{\hyperref[tab:#1]{Table~\ref*{tab:#1}}}
\newcommand{\lem}[1]{\hyperref[lem:#1]{Lemma~\ref*{lem:#1}}}
\newcommand{\prop}[1]{\hyperref[prop:#1]{Proposition~\ref*{prop:#1}}}
\newcommand{\thm}[1]{\hyperref[thm:#1]{Theorem~\ref*{thm:#1}}}
\begin{document}

\title{Hadamard-free circuits expose the structure of the Clifford group}

\author{Sergey Bravyi and Dmitri Maslov}
\affil{IBM T. J. Watson Research Center, Yorktown Heights, NY 10598, USA}

\maketitle

\abstract{The Clifford group plays a central role in quantum randomized benchmarking, quantum tomography, and error correction protocols.  Here we study the structural properties of this group.  We show that any Clifford operator can be uniquely written in the canonical form $F_1HSF_2$, where $H$ is a layer of Hadamard gates, $S$ is a permutation of qubits, and $F_i$ are parameterized Hadamard-free circuits chosen from suitable subgroups of the Clifford group.  Our canonical form provides a one-to-one correspondence between Clifford operators and layered quantum circuits.  We report a polynomial-time algorithm for computing the canonical form.  We employ this canonical form to generate a random uniformly distributed $n$-qubit Clifford operator in runtime $O(n^2)$.  The number of random bits consumed by the algorithm matches the information-theoretic lower bound.  A surprising connection is highlighted between random uniform Clifford operators and the  Mallows distribution on the symmetric group.  The variants of the canonical form, one with a short Hadamard-free part and one allowing a circuit depth $9n$ implementation of arbitrary Clifford unitaries in the Linear Nearest Neighbor architecture are also discussed.  Finally, we study computational quantum advantage where a classical reversible linear circuit can be implemented more efficiently using Clifford gates, and show an explicit example where such an advantage takes place.
}

\section{Introduction}

Clifford circuits can be defined as quantum computations by the circuits with Phase ($\pgate$), Hadamard ($\hgate$), and $\cnotgate$ gates, applied to a computational basis state, such as $\ket{00...0}$.  While not universal for the purpose of quantum (or classical) computation and admitting an efficient classical simulation \cite{aaronson2004improved}, Clifford circuits play an important role in quantum computing: they underlie quantum error correction \cite{nielsen2010quantum} and serve as distillation subroutines in the fault-tolerant computation \cite{bravyi2005universal}.  Optimization of Clifford circuits is necessary within the scope of fault-tolerant computations, since it can happen that the Clifford overhead dominates the cost of an implementation \cite{maslov2016optimal}, despite the cost of the non-Clifford gates being higher than that of the Clifford gates. Of particular interest are algorithms for generation and compiling of random uniformly distributed Clifford operators~\cite{koenig2014efficiently}. The latter are known to form a unitary $3$-design~\cite{zhu2017multiqubit,webb2015clifford} and  play a central role in the characterization of noise in quantum computers via randomized benchmarking \cite{emerson2005scalable, knill2008randomized,magesan2011scalable}, efficient tomography of multi-qubit states~\cite{aaronson2018shadow,huang2020predicting}, compressed classical description of quantum states~\cite{gosset2018compressed}, randomized quantum code construction~\cite{bennett1996mixed}, and quantum data hiding~\cite{divincenzo2002quantum}.  Finally, Clifford circuits give rise to interesting toy models of topological quantum order and entanglement renormalization in quantum many-body systems~\cite{aguado2008entanglement,haah2014bifurcation}.

Obtaining an efficient circuit implementation of a Clifford group element is a problem that was studied well in the literature.  Aaronson and Gottesman~\cite{aaronson2004improved} showed that any Clifford operator admits an  11-stage decomposition of the form  -H-CX-P-CX-P-CX-H-P-CX-P-, where -H-, -P-, and -CX- represent circuit stages with $\hgate$, $\pgate$, and $\cnotgate$ gates, correspondingly.  This decomposition is asymptotically optimal (optimal up to a constant factor) in terms of the number of degrees of freedom as well as the number of gates~\cite{patel2008optimal}.  Roetteler and one of the authors~\cite{maslov2018shorter} leveraged Hadamard-free circuits (those including only $\pgate$, $\czgate$, $\cxgate$ and Pauli gates) to expose more structure of the Clifford group.  By employing Bruhat decomposition of the underlying symplectic group, they showed that any Clifford operator admits a 7-stage decomposition -CX-CZ-P-H-P-CZ-CX-, where -CZ- corresponds to a layer of $\czgate$ gates. The number of degrees of freedom in \cite{maslov2018shorter} is asymptotically tight (saturates the information-theoretic lower bound within a constant factor that asymptotically approaches $1$).
 
The main result of this paper is a canonical form of Clifford circuits featuring the exactly minimal number of degrees of freedom.  It is obtained by fine-tuning the Bruhat decomposition used earlier in \cite{maslov2018shorter} and leveraging the additional structure of Hadamard-free circuits. More precisely, we show that any Clifford operator can be uniquely written in the form $F_1 HS F_2$, where $H$ is a layer of Hadamard gates, $S$ is a swapping of $n$ qubits, and $F_i$ are Hadamard-free circuits chosen from suitable subgroups of the Clifford group, that depend on $H$ and $S$.  Here, $F_1$ corresponds to the -CX-CZ-P- part and $SF_2$ corresponds to the -P-CZ-CX- part in \cite{maslov2018shorter}; we optimize this layered decomposition by moving as many gates from $SF_2$ into $F_1$ as possible.  We provide a simple explicit characterization of $F_1$ and $F_2$ and give a polynomial-time algorithm for computing all layers in the above decomposition.  We describe two applications of the new canonical form.

First, we consider the problem of generating a random uniformly distributed $n$-qubit Clifford operator.  The state-of-the-art algorithm proposed by Koenig and Smolin~\cite{koenig2014efficiently} employs a sequence of $O(n)$ symplectic transvections to generate a random uniform Clifford operator.  This algorithm has runtime $O(n^3)$.  In contrast, we describe an algorithm with the runtime $O(n^2)$ that outputs a random uniformly distributed Clifford operator specified by its canonical form.  If needed, the canonical form can be converted to the stabilizer tableaux in time $O(n^\omega)$, where $\omega\,{\approx}\,2.3727$ is the matrix multiplication exponent~\cite{williams2012multiplying}.  Both Koenig-Smolin~\cite{koenig2014efficiently} and our algorithms are optimal in the sense that they consume the number of random bits that matches the  information-theoretic lower bound. We provide a Python implementation of the new algorithm (see Appendix~\ref{app:C} and Appendix~\ref{app:D}).  Our algorithm highlights a surprising connection between random uniform  Clifford operators and the Mallows distribution on the symmetric group~\cite{mallows1957non} that plays an important role in several ranking algorithms~\cite{lu2011learning}.  We show the Mallows distribution is also relevant in the context of sampling the uniform distribution on the group of invertible binary matrices.  A nearly linear-time algorithm for sampling the Mallows distribution (and its quantum generalization) is described. 

Second, we propose two new circuit decompositions.  One decomposition is useful in quantum protocols where a Clifford circuit is followed by the measurement of $n$ qubits in the computational basis. Examples of such protocols include the shadow tomography of multi-qubit states~\cite{aaronson2018shadow,huang2020predicting} or the last stage of randomized benchmarking~\cite{magesan2011scalable}.  The key observation is that the Hadamard-free operators that appear at the left/right stages of the canonical form map basis vectors to basis vectors. Applying a Hadamard-free operator immediately before the measurement of all qubits is equivalent to a simple classical postprocessing of the measurement outcomes.  Thus one of the two Hadamard-free stages in the canonical form can be skipped.  More generally, a  Clifford circuit $C$ followed by the measurement of all qubits can be replaced by another Clifford circuit $D$ as long as $DC^{-1}$ is Hadamard-free. We show that for any circuit $C$ there exists a circuit $D$ as above that contains at most $nk-\frac{k(k+1)}{2}$ two-qubit gates, where $k$ is the number of Hadamards in the canonical form of $C$.  We also show how to rewrite the canonical form in a way that reduces the number of two-qubit gate layers of the form -CZ- and -CX-. This reduced decomposition implies the ability to execute arbitrary Clifford circuits in the Linear Nearest Neighbor architecture in the two-qubit gate depth of $9n$. 

Due to the prominent role played by Hadamard-free Clifford operations as parts of Clifford circuits, we studied their circuit structure further.  Surprisingly, we discovered that in certain cases Hadamard gates can reduce the cost of implementing Hadamard-free operators.  For example, we demonstrate that so long as one is concerned with the entangling gate count (considering only $\cnotgate$ and $\czgate$ gates), certain linear reversible functions may be implemented more efficiently as Clifford circuits, as opposed to circuits relying only on the $\cnotgate$ gates.  This result can be viewed as a toy example of computational quantum advantage.  We also develop upper and lower bounds on the resource counts enabled by the introduction of Hadamard gates and prove two lemmas giving rise to two algorithms for optimizing the number of two-qubit gates in the $\cnotgate$ circuits.

We assume reader's familiarity with the concepts relating to Clifford group in the context of quantum circuits (quantum gates, Clifford tableaux, symplectic group).  Basic background information can be found in \cite{aaronson2004improved, nielsen2010quantum}, and more advanced relevant background in \cite{maslov2018shorter}. 

The rest of the paper is organized as follows. \sec{main} formally defines the canonical form of Clifford circuits, proves its existence and uniqueness, and gives an efficient algorithm for computing it.  The problem of generating random uniform Clifford operators is addressed in \sec{RandomClifford}.  Applications of the canonical form for optimization of Clifford circuits are discussed in \sec{er}. Finally, \sec{CNOT} investigates conditions under which classical reversible linear circuits can be implemented more efficiently with Clifford gates.  Appendix~\ref{app:A} addresses the problem of sampling the uniform distribution on the group $\mathrm{GL}(n)$ and highlights the role of the Mallows distribution.  Appendix~\ref{app:B} describes a version of Bruhat decomposition of the Clifford group related to the one reported in \sec{main} without explicitly developing the structure of Hadamard-free layers.  Python implementation of our algorithms can be found in Appendix~\ref{app:C} and Appendix~\ref{app:D}.

\section{Canonical form of Clifford circuits}\label{sec:main}

In this section we describe an exact parameterization of the $n$-qubit Clifford group by quantum circuits expressed using the gate set $\{ \xgate, \zgate, \pgate, \hgate, \cxgate, \czgate, \swapgate\}.$  Here $\xgate=\big(\begin{smallmatrix}0 & 1\\1 & 0 \end{smallmatrix}\big)$ and $\zgate=\big(\begin{smallmatrix}1 & 0\\0 & -1 \end{smallmatrix}\big)$ are single-qubit Pauli operators, $\pgate=\big(\begin{smallmatrix}1 & 0\\0 & i \end{smallmatrix}\big)$ is the single-qubit Phase gate, $\hgate=\frac{1}{\sqrt{2}}\big(\begin{smallmatrix}1 & 1\\1 & -1 \end{smallmatrix}\big)$ is the single-qubit Hadamard gate, and $\cxgate$ and $\czgate$ are two-qubit controlled-$\xgate$ and controlled-$\zgate$ gates correspondingly.  We denote the identity transformation $\big(\begin{smallmatrix}1 & 0\\0 & 1 \end{smallmatrix}\big)$ as $Id$.  Qubits are labeled by the integers $1,2,\ldots,n$.  We write $\cxgate^\downarrow$  to indicate that the the control qubit $c$ and the target qubit $t$ obey the relation $c\,{<}\,t$.  Our notations for the Clifford group and some of its important subgroups are summarized in the table below.

\begin{center}
\begin{tabular}{c|c|c}
\hline
{\bf Notation} & {\bf Name} & {\bf Generating set} \\
\hline
\hline
$\calC_n$ &  Clifford group & $\xgate$, $\cxgate$, $\hgate$, $\pgate$ \\
\hline
$\calF_n$ & $\hgate$-free group & $\xgate$, $\cxgate$, $\czgate$, $\pgate$ \\
\hline
$\calB_n$ & Borel group & $\xgate$, $\cxgate^\downarrow$, $\czgate$, $\pgate$ \\
\hline
$\calS_n$ & Symmetric group & $\swapgate$ \\
\hline
$\calP_n$ & Pauli group & $\xgate$, $\zgate$\\
\end{tabular}
\end{center}

First, let us explicitly describe the $\hgate$-free group $\calF_n$ and the Borel group $\calB_n\,{\subseteq}\, \calF_n$. By definition, any $\hgate$-free Clifford operator maps basis vectors to basis vectors, while possibly gaining a phase. The action of an operator $F\in \calF_n$ on a basis vector $x\in \{0,1\}^n$ can thus be compactly described as
\be
\label{eq:Hfree1}
F|x\ra = i^{x^T \Gamma x} O|\Delta x \;  (\mbox{mod $2$})\ra,
\ee
where $O\in \calP_n$ is a Pauli operator, $\Gamma,\Delta\in \mathbb{F}_2^{n\times n}$ are matrices over the binary field, $\Gamma$ is symmetric, and $\Delta$ is invertible.  Here and below we consider bit strings $x$ as column vectors, write $x^T$ for the transposed row vectors, and write $\Gamma x$, $\Delta x$ for the matrix-vector multiplication.  We denote the operator $F$ defined in \eq{Hfree1} as $F(O,\Gamma,\Delta)$.  This operator is said to have trivial Pauli part if $O\,{=}\,Id$.

An operator $F(O,\Gamma,\Delta)$ belongs to the Borel group $\calB_n$, iff the matrix $\Delta$ is lower-triangular and unit-diagonal, i.e., $\Delta_{i,j}=0$ for $i\,{<}\,j$ and $\Delta_{i,i}=1$ for all $i$.  Note that any lower-triangular unit-diagonal matrix $\Delta$ is automatically invertible.  Any element of the Borel group admits representation by a canonical quantum circuit, 
\be
\label{eq:Borel1}
F(O,\Gamma,\Delta) = O \prod_{i=1}^n  \pgate_i^{\Gamma_{i,i}}
\prod_{1\le i<j\le n} \czgate_{i,j}^{\Gamma_{i,j}} \prod_{1\le i<j\le n} \cxgate_{i,j}^{\Delta_{j,i}}.
\ee 
Here the product of $\cxgate$ gates is ordered such that the control qubit
index increases from the left to the right.  For example, for $n{=}4$ the last product in \eq{Borel1} is
\[
\cxgate_{1,2}^{\Delta_{2,1}}  \cxgate_{1,3}^{\Delta_{3,1}}  \cxgate_{1,4}^{\Delta_{4,1}}
\cxgate_{2,3}^{\Delta_{3,2}}  \cxgate_{2,4}^{\Delta_{4,2}} \cxgate_{3,4}^{\Delta_{4,3}}.
\]  
By counting the number of bits needed to specify the data set $\{O,\Gamma,\Delta\}$ one gets $|\calB_n|=2^{n^2+2n}$ (we ignore the overall phase of Clifford operators). Given a permutation of qubits $S\in \calS_n$, we write $j{=}S(i)$ if $S$ maps the $i$-th qubit to the $j$-th qubit.  Given integers $a\le b$, we write $\irange{a}{b}$ to denote the set of integers
$i$ such that $a\le i\le b$.

\begin{theorem}[\bf Canonical Form]
\label{thm:CliffordExact}
Any  Clifford operator $U\in \calC_n$ can be uniquely written as
\be
\label{eq:CliffordExactParam}
U = F(Id,\Gamma,\Delta) \cdot \left( \prod_{i=1}^n \hgate_i^{h_i} \right) S \cdot F(O',\Gamma',\Delta')
\ee
where $h_i\,{\in}\, \{0,1\}$, $S\,{\in}\, \calS_n$ is a permutation of $n$ qubits, and $F(Id,\Gamma,\Delta)$, $F(O',\Gamma',\Delta')$ are elements of the Borel group $\calB_n$ such that the matrices $\Gamma$, $\Delta$ obey the following rules for all $i,j\,{\in}\, \irange{1}{n}$:
\begin{enumerate}
\item[\bf C1] if $h_i=0$ and $h_j=0$,   then $\Gamma_{i,j}=0$;
\item[\bf C2] if $h_i=1$  and $h_j=0$  and $S(i)>S(j)$, then $\Gamma_{i,j}=0$;
\item[\bf C3] if  $h_i=0$  and $h_j=0$ and  $S(i)>S(j)$, then $\Delta_{i,j}=0$;
\item[\bf C4] if  $h_i=1$ and  $h_j=1$ and $S(i)<S(j)$, then $\Delta_{i,j}=0$;
\item[\bf C5] if  $h_i=1$ and $h_j=0$, then $\Delta_{i,j}=0$.
\end{enumerate}
The canonical form \eq{CliffordExactParam} can be computed in time $poly(n)$, given the stabilizer tableaux of $U$. 
\end{theorem}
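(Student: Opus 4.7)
The plan is to leverage the Bruhat-type decomposition $U = F_1' H S F_2'$ with $F_1', F_2' \in \calF_n$ from \cite{maslov2018shorter} and then eliminate all residual freedom in that decomposition by a canonical choice dictated by C1--C5. First, since conjugation of a Pauli by any Clifford is again a Pauli, the Pauli part of $F_1'$ can be pushed through $H S F_2'$ and absorbed into the Pauli factor $O'$ of $F_2'$; thus we may assume $F_1'$ has trivial Pauli part. Second, using the representation \eq{Hfree1}, any Hadamard-free operator of the form $F(Id,\Gamma,\Delta)$ can be brought to Borel form by splitting off the ``upper-triangular'' part of the CNOT layer of $\Delta$ and moving it past $HS$ (the conjugate remains Hadamard-free, hence absorbable into $F_2'$ after an appropriate adjustment of the permutation). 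So we may assume $F_1 \in \calB_n$ and $F_2 \in \calB_n$ and write them as $F(Id,\Gamma,\Delta)$ and $F(O',\Gamma',\Delta')$.

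With this setup, existence reduces to a finite set of commutation rules: each Borel generator sitting at the right end of $F_1$ --- a $\pgate_i$, a $\czgate_{i,j}$ with $i \neq j$, or a $\cxgate^\downarrow_{j,i}$ --- is, upon conjugation by $(HS)^{-1}$, either (a) a Hadamard-free gate whose Borel-ordered placement in $F_2$ is consistent with the lower-triangular CNOT structure, in which case we can (and must) move it out of $F_1$, or (b) a gate that is either not Hadamard-free, or Hadamard-free but not in Borel order, in which case it must remain in $F_1$. A straightforward case analysis on the pair $(h_i, h_j)$ together with the ordering of $S(i)$ vs.\ $S(j)$ shows that case (a) occurs precisely in the positions singled out by C1--C5, so performing all case-(a) moves exhaustively produces a decomposition satisfying the constraints. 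Uniqueness is then best verified by counting: summing, over all $(H,S)$, the number of pairs $(\Gamma,\Delta)$ admitted by C1--C5 and multiplying by $|\calB_n| = 2^{n^2+2n}$ for the unconstrained $F_2$ factor must equal $|\calC_n| = 2^{n^2+2n}\prod_{j=1}^n(4^j-1)$; combined with the surjectivity established above, this forces bijectivity. The main obstacle I expect is verifying this combinatorial identity, which amounts to a Bruhat-cell decomposition of $\mathrm{Sp}(2n,\FF_2)$ indexed by $(H,S)$ with cell dimensions determined by the number of free entries of $(\Gamma,\Delta)$ permitted by C1--C5.

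For the polynomial-time algorithm, I would work directly on the stabilizer tableau $T$ of $U$. Right-multiplying $U$ by an elementary Borel generator corresponds to an $O(n)$ column update of $T$, and likewise for left-multiplication. The algorithm performs a Gaussian-elimination-style reduction of $T$ in which pivots are scheduled so that every forbidden entry of $(\Gamma,\Delta)$ is cleared by recording the corresponding absorbing gate into $F_2$; once the reduction is complete, the permutation $S$ is read off from the support pattern of the reduced tableau and the Hadamard layer $H$ from the mismatch between its $X$- and $Z$-blocks. Each elementary step costs $O(n)$, giving total runtime $O(n^3)$; the existence part of the theorem ensures the reduction terminates in a tableau of the required ``$HS$'' form, while uniqueness guarantees that the output does not depend on the pivoting tie-breaks within the prescribed schedule.
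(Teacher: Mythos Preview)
Your ingredients match the paper's --- Bruhat decomposition, the gate-by-gate conjugation table behind C1--C5, and the counting identity $\sum_{h,S}2^{I_n(h,S)}=\prod_j(4^j-1)$ --- but the existence step has a real gap. You assert that ``performing all case-(a) moves exhaustively produces a decomposition satisfying the constraints,'' as if the entries of $(\Gamma,\Delta)$ could be pushed through $W$ one position at a time. They cannot: the Borel generators in \eq{Borel1} do not all commute (for instance $\cxgate_{i,j}$ and $\cxgate_{j,k}$), so a case-(b) gate at the right end blocks access to the gates on its left, and commuting past it alters which positions of the residual $(\Gamma,\Delta)$ are nonzero. What existence actually requires is that every $F\in\calB_n$ factors as $F=F_LF_R$ with $F_L$ obeying C1--C5 and $W^{-1}F_RW\in\calB_n$; this is the content of \lem{HHbar}, and it does not follow from the single-gate case analysis. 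The paper obtains it by first using the counting identity --- the very one you reserve for uniqueness --- inside \lem{FHS} to show that the C1--C5 constraints cut out the \emph{subgroup} $\calB_n(\bar h,S)$ rather than merely a subset; once that is known, the size relation $|\calB_n(\bar h,S)|\cdot|\calB_n(h,S)|=|\calP_n|\cdot|\calB_n|$ forces the cosets $F_L\,\calB_n(h,S)$ to tile $\calB_n$. So the counting carries both existence and uniqueness, and the exhaustive-moves heuristic is not a substitute for the factorization lemma.

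A minor point: the Bruhat decomposition \eq{bruhat001} already places both outer factors in $\calB_n$ with $(h,S)$ uniquely determined, so your preliminary reduction from $\calF_n$ to $\calB_n$ ``after an appropriate adjustment of the permutation'' is unnecessary and, if it actually changed $S$, would be incorrect.
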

As a simple example, consider the following circuit identity
\[
{\Qcircuit @C=0.7em @R=0.7em @!{
&\targ	& \gate{\hgate} 	& \targ 	& \qw \\ 
&\ctrl{-1} & \qw & \ctrl{-1} & \qw }}
\hspace{1em}\raisebox{-0.9em}{=}\hspace{1em}
{
\Qcircuit @C=0.7em @R=0.7em @!{
&  \control \qw	& \gate{\hgate} 	& \control \qw	& \qw & \qw \\ 
&\ctrl{-1} & \qw & \ctrl{-1} & \gate{\zgate} & \qw
}}
\]
It can be verified by noting that $\xgate \hgate \xgate \,{=}\, - \zgate \hgate \zgate$.  The circuit on the left is not canonical since the $\cxgate$ gates have a wrong order of the control and target qubits. The circuit on the right is canonical with the identity permutation layer, $S(1)\,{=}\,1$, $S(2)\,{=}\,2$, the Hadamard layer $h\,{=}\,10$, 
\[
\Gamma = \Gamma'=\left[\ba{cc} 0 & 1\\1 & 0\\ \ea\right],
\quad
\Delta =\Delta' = \left[\ba{cc} 1 & 0\\0 & 1\\ \ea\right],
\]
and $O'\,{=}\,\zgate_2$.  Direct inspection shows that all the rules C1-C5 are satisfied. More generally, the rules C1-C5 describe the ways in which the individual $\pgate$ (C1 for $i{=}j$), $\czgate$ (C1-C2), and $\cnotgate$ (C3-C5) gates in the canonical description of $F(Id,\Gamma,\Delta)$ per \eq{Borel1} can be moved through the Hadamard-SWAP stage $hS$ in \eq{CliffordExactParam} to the right, and in doing so are transformed into a gate that belongs to the Borel group, i.e. $\pgate$, or $\czgate$, or $\cxgate^\downarrow$. Such gates can be absorbed into the stage $F(O',\Gamma',\Delta')$. The decomposition from \eq{CliffordExactParam} is utilized for the generation of random uniformly distributed Clifford operators. Indeed, in \sec{RandomClifford} we show that $U$ is uniformly distributed if the pair $h,S$ is sampled from a suitable generalization of the  Mallows distribution on the symmetric group~\cite{mallows1957non}.  For fixed $h$ and $S$, the matrices $\Gamma,\,\Delta,\,\Gamma'$, and $\Delta'$ have to be sampled uniformly subject to the rules C1-C5.  

We will divide the proof of \thm{CliffordExact} into two parts. The first part, presented in Subsection~\ref{ssec:part1}, proves the existence and the uniqueness of the canonical form \eq{CliffordExactParam}. An efficient algorithm for computing the canonical form is described in Subsection~\ref{ssub:part2}.
 
\subsection{The existence and the uniqueness of the canonical form}
\label{ssec:part1}

For any operator $W\,{\in}\,\calC_n$ define a set $\calB_n W \calB_n := \{ FWF'{:} \; F,F' \in \calB_n\}$. Our starting point is the Bruhat decomposition  of the Clifford group \cite[Section~IV]{maslov2018shorter}.

\vspace{1mm}\noindent {\bf Bruhat decomposition.}{\it
The Clifford group $\calC_n$ is a disjoint union 
\be
\label{eq:bruhat001}
\calC_n =  \bigsqcup_{h \in \{0,1\}^n} \;  \bigsqcup_{S\in \calS_n}
 \calB_n  \left( \prod_{i=1}^n \hgate_i^{h_i} \right) S \; \calB_n.
\ee
}
It follows that any Clifford operator can be written as $U{=}FWF'$, where $F,\,F'\in \calB_n$ and $W$ is a layer of Hadamards $\hgate$ on qubits $i$ such that $h_i{=}1$ followed by a qubit permutation $S$.  The disjointness of the union in \eq{bruhat001} implies that $W$ is uniquely defined by $U$. However, the operators $F$ and $F'$ are generally non-unique.  For example, if $W\,{=}\,Id$ then $U\,{=}\,FF'$. In this case, one can arbitrarily assign the gates to either $F$ or $F'$, so long as their product evaluates to the desired $U$. 

We will prove that the Bruhat decomposition $U{=}FWF'$ becomes unique if we restrict $F$ to a certain subgroup of $\calB_n$ that depends on $h$ and $S$. From now on we focus on some fixed pair $h\,{\in}\, \{0,1\}^n$ and $S\,{\in}\, \calS_n$. Define the operator
\be
\label{eq:W}
W= \left( \prod_{i=1}^n \hgate_i^{h_i} \right) S
\ee
and a group
\be
\label{eq:bruhat003}
\calB_n(h,S) = \{ F \in \calB_n{:} \; W^{-1} F W \in \calB_n\}.
\ee
Note that $\calB_n(h,S)$ is a group since it is the intersection of two groups $\calB_n$ and $W\calB_n W^{-1}$. We will need two technical lemmas, both proved at the end of this subsection.  The first lemma clarifies the rules C1-C5.
\begin{lemma}
\label{lem:FHS}
Let $\bar{h}=h\oplus 1^n$ be the bitwise negation of $h$. Then
\be
\label{eq:HHbar}
\calB_n(\bar{h},S)  \cap \calB_n(h,S)  = \calP_n.
\ee
An operator $F(O,\Gamma,\Delta)\in \calB_n$  belongs to the group $\calB_n(\bar{h},S)$ if and only if $\Gamma$ and $\Delta$  obey the rules C1-C5.  
\end{lemma}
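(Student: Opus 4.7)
The plan is to prove the second (characterization) claim first and then deduce the intersection formula from it.  For the characterization, I would expand an arbitrary $F=F(O,\Gamma,\Delta)\in\calB_n$ into its canonical product \eq{Borel1}, namely a Pauli $O$ times the generators $\pgate_k^{\Gamma_{k,k}}$, $\czgate_{k,l}^{\Gamma_{k,l}}$ (for $k<l$), and $\cxgate_{k,l}^{\Delta_{l,k}}$ (for $k<l$).  Writing $\bar{W}=(\prod_i\hgate_i^{\bar{h}_i})S$ and using that $\bar{W}^{-1}\calB_n\bar{W}$ is a group, the question reduces to determining, for each generator $G$ in this product, when $\bar{W}^{-1}G\bar{W}\in\calB_n$.

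Each such conjugation is purely local, using the standard Clifford identities $\hgate_l\cxgate_{k,l}\hgate_l=\czgate_{k,l}$ and $\hgate_k\czgate_{k,l}\hgate_k=\cxgate_{l,k}$ and $\hgate_k\hgate_l\cxgate_{k,l}\hgate_k\hgate_l=\cxgate_{l,k}$, together with the facts that $\hgate\pgate\hgate\notin\calB_1$, $\hgate_k\hgate_l\czgate_{k,l}\hgate_k\hgate_l\notin\calB_n$, and $\hgate_k\cxgate_{k,l}\hgate_k\notin\calF_n$ (the last operator is not even $\hgate$-free).  The subsequent conjugation by $S$ merely relabels qubits, so a $\cxgate_{a,b}$ lies in $\calB_n$ only when $S(a)<S(b)$.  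A case analysis over the four values of $(\bar{h}_k,\bar{h}_l)$ for each generator, translated back to $h=1\oplus\bar{h}$, shows that admissibility is equivalent to the matching rule in C1--C5: C1 at $i{=}j$ handles $\pgate_k$; C1 and C2 handle $\czgate_{k,l}$ (respectively, the subcase where both qubits are Hadamard-conjugated and the subcase where exactly one is and the resulting $\cxgate$ is mis-oriented after $S$); C3, C4, C5 handle the three failure modes of $\cxgate_{k,l}$ (both qubits Hadamard-conjugated and reversed, neither conjugated and mis-oriented after $S$, and exactly one conjugated on the control side, breaking $\hgate$-freeness).  This immediately gives the forward implication, since products of Borel elements are Borel.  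The reverse implication---that violating any rule truly forces $\bar{W}^{-1}F\bar{W}\notin\calB_n$ and cannot be rescued by cancellations between generators---is the main obstacle.  I expect to handle it by parameter counting: the forward direction embeds the set of $(O,\Gamma,\Delta)$ satisfying C1--C5 into $\calB_n(\bar{h},S)$; the Bruhat decomposition \eq{bruhat001} computes $|\calB_n(\bar{h},S)|=|\calB_n|^2/|\calB_n\bar{W}\calB_n|$, and matching this against a direct binary count of admissible tuples forces the embedding to be a bijection.  An alternative, more computational, route is to verify that in the symplectic representation each $\Gamma_{i,j}$ and each subdiagonal $\Delta_{i,j}$ contributes a nonzero entry to a distinct position of the tableau of $\bar{W}^{-1}F\bar{W}$; the Borel condition demands vanishing of specific blocks, and a violated rule deposits a nonzero entry into a forbidden position that no other generator can erase.

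For the intersection formula, I would apply the characterization twice, once for $\bar{h}$ and once for $h$.  The rule set for $\calB_n(h,S)$ is obtained from C1--C5 by the substitution $h\leftrightarrow\bar{h}$.  Pairing each rule with its twin and using that $S$ is injective, every diagonal $\Gamma_{i,i}$ is killed by either C1 (if $h_i=0$) or its twin (if $h_i=1$); every off-diagonal $\Gamma_{i,j}$ is killed by C1 or C1' if $h_i=h_j$ and by some combination of C2 and C2' (covering both sub-cases $S(i)<S(j)$ and $S(i)>S(j)$) if $h_i\neq h_j$; and every subdiagonal $\Delta_{i,j}$ is similarly killed by one of C3--C5 or its twin.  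Hence $\Gamma=0$ and $\Delta=I$, so $F=O\in\calP_n$.  The reverse inclusion $\calP_n\subseteq\calB_n(\bar{h},S)\cap\calB_n(h,S)$ is immediate because $\calP_n\subseteq\calB_n$ and the Pauli group is normalized by every Clifford operator, so $W^{-1}\calP_n W=\bar{W}^{-1}\calP_n\bar{W}=\calP_n\subseteq\calB_n$.
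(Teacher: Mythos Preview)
Your plan matches the paper's proof almost exactly: a generator-by-generator case analysis (the paper runs it for $\calB_n(h,S)$ with the equivalent rules B1--B5, obtained from C1--C5 by flipping every bit of $h$) to get sufficiency, then a counting argument to upgrade to necessity, then the pairing of each rule with its $h\leftrightarrow\bar h$ twin to force $\Gamma=0$, $\Delta=I$ and obtain \eq{HHbar}.

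The one place your sketch needs tightening is the counting step.  The double-coset identity $|\calB_n(\bar h,S)|=|\calB_n|^2/|\calB_n\bar W\calB_n|$ is correct but circular by itself: you know neither side independently for a fixed $(\bar h,S)$.  The paper breaks the circularity \emph{globally}: from your forward inclusion $\calB_n'(h,S)\subseteq\calB_n(h,S)$ (writing $\calB_n'$ for the set cut out by the rules) one gets
\[
|\calC_n|=\sum_{h,S}|\calB_nW\calB_n|\;\le\;\sum_{h,S}\frac{|\calB_n|^2}{|\calB_n'(h,S)|},
\]
and a direct combinatorial identity $\sum_{h,S}2^{I_n(h,S)}=\prod_{i=1}^n(4^i-1)$, proved by induction on $n$, shows the right-hand sum equals $|\calC_n|$.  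Equality of the total then forces equality term-by-term.  Your ``direct binary count of admissible tuples'' is exactly the right ingredient; just be sure to sum it over all $(h,S)$ and anchor it to $|\calC_n|$, not to an individual double coset.
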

Recall that $\calP_n$ denotes the Pauli group. The second lemma provides a unique decomposition of any operator $F\,{\in}\, \calB_n$ in terms of the elements of the groups $\calB_n(\bar{h},S)$ and $\calB_n(h,S)$.
\begin{lemma}
\label{lem:HHbar}
Any operator $F\,{\in}\, \calB_n$ can be uniquely written as $F=F_L F_R$ for some $F_L \in \calB_n(\bar{h},S)$ and some $F_R \,{\in}\,  \calB_n(h,S)$ such that $F_L$ has trivial Pauli part.
\end{lemma}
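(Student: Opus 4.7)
The plan is to split the argument into a uniqueness step (which follows cleanly from \lem{FHS}) and an existence step established by a cardinality count. Set $A = \calB_n(\bar h, S)$ and $B = \calB_n(h, S)$, and let $A_0 \subseteq A$ denote the subset of operators with trivial Pauli part, i.e., of the form $F(Id, \Gamma, \Delta)$.

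For uniqueness, suppose $F_L F_R = F_L' F_R'$ with $F_L, F_L' \in A_0$ and $F_R, F_R' \in B$. Then $F_L'^{-1} F_L = F_R' F_R^{-1}$ lies in $A \cap B = \calP_n$ by \eq{HHbar}, so $F_L = F_L' P$ for some Pauli $P$. Rewriting $F_L' P = (F_L' P F_L'^{-1}) F_L'$ and noting that $F_L' P F_L'^{-1}$ is Pauli (since $F_L'$ is Clifford), the Pauli part of $F_L$ equals $F_L' P F_L'^{-1}$. Triviality of this Pauli part, together with the fact that Clifford conjugation acts as an automorphism of $\calP_n$, forces $P = Id$, hence $F_L = F_L'$ and $F_R = F_R'$.

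For existence, the same argument shows that the product map $A_0 \times B \to \calB_n$, $(F_L, F_R) \mapsto F_L F_R$, is injective, so it suffices to verify $|A_0| \cdot |B| = |\calB_n| = 2^{n^2 + 2n}$. Applying \lem{FHS} with $h$ and $\bar h$ interchanged yields a characterization of $B$ by analogues of C1-C5 obtained by swapping $0$ and $1$ in every hypothesis on $h_i, h_j$; I call these C1'-C5'. Writing $|A_0| = |A|/|\calP_n|$ and noting that $|B|$ contains an independent $|\calP_n| = 2^{2n}$ Pauli factor, the desired equality reduces to showing that the total number of free bits in $(\Gamma, \Delta)$ for $A$ plus those for $B$ equals $n^2$. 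I would check this by a case analysis over $(h_i, h_j) \in \{0,1\}^2$ and the ordering of $S(i)$ versus $S(j)$: for each off-diagonal entry of $\Gamma$, each diagonal entry of $\Gamma$, and each sub-diagonal entry of $\Delta$, the rules C1-C5 versus C1'-C5' partition exactly complementarily, leaving the entry free in exactly one of $A$ or $B$. Summing gives $\binom{n+1}{2}$ free $\Gamma$-bits and $\binom{n}{2}$ free $\Delta$-bits, for a total of $n^2$.

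The principal obstacle is the complementarity check itself. For $\Gamma$ one must account for the symmetry $\Gamma_{i,j} = \Gamma_{j,i}$ by applying both C1 and C2 at $(i,j)$ and at $(j,i)$, and then compare with the analogous C1' and C2' evaluations. For $\Delta$, only entries with $i > j$ matter, and one tabulates how C3, C4, C5 (respectively C3', C4', C5') fire depending on $(h_i, h_j)$ and on whether $S(i)$ exceeds or falls below $S(j)$. Once this bookkeeping is complete, injectivity plus the matching cardinalities upgrades to surjectivity, and the decomposition $F = F_L F_R$ follows.
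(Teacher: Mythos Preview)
Your proposal is correct and follows essentially the same route as the paper: uniqueness via the intersection $\calB_n(\bar h,S)\cap\calB_n(h,S)=\calP_n$ from \lem{FHS}, and existence by a cardinality count showing $|A_0|\cdot|B|=|\calB_n|$. The only cosmetic difference is that the paper obtains the count from the explicit formula $I_n(h,S)+I_n(\bar h,S)=n^2$ (derived in the proof of \lem{FHS}), whereas you propose verifying directly that the constraint sets C1--C5 and C1$'$--C5$'$ partition the $(\Gamma,\Delta)$ entries complementarily---these are equivalent bookkeeping exercises.
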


Let us now prove the theorem. Using Bruhat decomposition, \eq{bruhat001}, write $U\,{=}\,LWR$ for some $L,\,R\in \calB_n$ and some $W$ defined in \eq{W}.  The disjointness of the union in \eq{bruhat001} implies that $W$ is uniquely defined.  Using \lem{HHbar} write $L = B C$ for some $B\,{\in}\, \calB_n(\bar{h},S)$ and $C\,{\in}\,  \calB_n(h,S)$ such that $B$ has trivial Pauli part.  Then
\[
U=LWR=BCWR=B W W^{-1} C W R = B W C'R,
\]
where $C'=W^{-1} C W\in \calB_n$ by definition of the subgroup $\calB_n(h,S)$, see \eq{bruhat003}. Setting $B'=C'R$ we get the desired decomposition \eq{CliffordExactParam}, namely, $U=BWB'$.  The inclusion $B\,{\in}\, \calB_n(\bar{h},S)$ and the assumption that $B$ has trivial Pauli part are equivalent to conditions C1-C5 due to \lem{FHS}.

It remains to check that this decomposition is unique.  Suppose 
\be
\label{eq:eq_unique}
BWB'=CWC'
\ee
for some $C\,{\in}\,\calB_n(\bar{h},S)$ and $C'\,{\in}\, \calB_n$ such that $C$ has trivial Pauli part.  We need to establish that in this case, $B\,{=}\,C$ and $B'\,{=}\,C'$. Indeed, \eq{eq_unique} gives
\be
\label{eq:eq_unique1}
W^{-1} (C^{-1} B) W= C'(B')^{-1} \in \calB_n.
\ee
Here the inclusion $C'(B')^{-1} \in \calB_n$ follows from the assumption $B',\,C'\in \calB_n$ and  the fact that $\calB_n$ is a group. From \eq{bruhat003} and \eq{eq_unique1} it follows that $C^{-1} B \in \calB_n(h,S)$.  However,  $B$ and $C$ are assumed to be in $\calB_n(\bar{h},S)$.  Since the latter is a group, $C^{-1} B$ is also in $\calB_n(\bar{h},S)$. We conclude that 
\[
C^{-1} B \in \calB_n(h,S) \cap  \calB_n(\bar{h},S) = \calP_n,
\]
that is, $C^{-1} B$ is a Pauli operator. However, $B$ and $C$ are assumed to have trivial Pauli part. Thus, $C^{-1}B=Id$, implying $B{=}C$.  From \eq{eq_unique} one obtains $B'{=}C'$, as claimed.

In the rest of this subsection we prove \lem{FHS} and \lem{HHbar}.

\begin{proof}[\bf Proof of Lemma~\ref{lem:FHS}]
It is more convenient to consider the group $\calB_n(h,S)$ instead of $\calB_n(\bar{h},S)$.  As above, define
\[
W= \left( \prod_{i=1}^n \hgate_i^{h_i} \right) S.
\]
Let $F\,{=}\,F(O,\Gamma,\Delta)$ be some element of $\calB_n$, see \eq{Hfree1} and \eq{Borel1}.  Recall that $\Gamma$ is a symmetric binary matrix, while $\Delta$ is a lower-triangular unit-diagonal matrix. We need to prove that $W^{-1} F W \in \calB_n$ if and only if  the following rules hold for all $i,j\,{\in}\, \irange{1}{n}$:
\begin{enumerate}
\item[\bf B1] if $h_i=1$ and $h_j=1$,   then $\Gamma_{i,j}=0$;
\item[\bf B2] if $h_i=0$  and $h_j=1$  and $S(i)>S(j)$, then $\Gamma_{i,j}=0$;
\item[\bf B3] if  $h_i=1$  and $h_j=1$ and  $S(i)>S(j)$, then $\Delta_{i,j}=0$;
\item[\bf B4] if  $h_i=0$ and  $h_j=0$ and $S(i)<S(j)$, then $\Delta_{i,j}=0$;
\item[\bf B5] if  $h_i=0$ and $h_j=1$, then $\Delta_{i,j}=0$.
\end{enumerate}
These rules are obtained from C1-C5 respectively by negating each bit of $h$. Suppose first that $F$ is a single gate from the generating set  $\{\xgate,\zgate,\pgate,\czgate,\cxgate^\downarrow\}$ of $\calB_n$, see \eq{Borel1}.

\noindent
{\em Case~1:}  $F{=}\xgate_i$ or $F{=}\zgate_i$.  Note that B1-B5 impose no restrictions on the Pauli part of $F$.  Since $W$ is a Clifford operator, $W^{-1} F W$ is a Pauli operator. As the Pauli group is a subgroup of $\calB_n$, we infer that $W^{-1} F W \in \calB_n$.

\noindent
{\em Case~2:}  $F{=}\pgate_i$. Then $W^{-1} F W{=}\pgate_{S(i)}$ if $h_i{=}0$ and $W^{-1} F W=\hgate_{S(i)} \pgate_{S(i)} \hgate_{S(i)}$ if $h_i\,{=}\,1$. Direct inspection of the unitary matrix reveals that the operator $\hgate \pgate \hgate$ is not $\hgate$-free. Thus, $W^{-1} F W \in \calB_n$ iff $h_i\,{=}\,0$. This gives the rule B1 with $i{=}j$. Indeed, $\pgate_i=F(I,\Gamma,0^{n\times n})$ where $\Gamma$ has a single non-zero element $\Gamma_{i,i}{=}1$.

\noindent
{\em Case~3:}  $F{=}\czgate_{i,j}$. 
The operator $W^{-1} F W$ depends on the bits $h_i$ and $h_j$ as shown in the
following table.
\begin{center}
\begin{tabular}{c|c|c}
\hline
$h_i$ & $h_j$ & $W^{-1} \czgate_{i,j} W$\\
\hline
$0$ & $0$ & $\czgate_{S(i),S(j)}$ \\
$0$ & $1$ & $\cxgate_{S(i),S(j)}$  \\
$1$ & $0$ & $\cxgate_{S(j),S(i)}$ \\
$1$ & $1$ & $(\hgate\otimes \hgate \cdot \czgate\cdot \hgate \otimes \hgate)_{S(i),S(j)}$ \\
\end{tabular}
\end{center}
If $h_i{=}h_j{=}0$, then $W^{-1} F W\in \calB_n$ regardless of $S$.  If $h_i{=}0$ and $h_j{=}1$, then $W^{-1} F W\in \calB_n$ iff $S(i){<}S(j)$ since $\calB_n$ only includes $\cxgate^\downarrow$ gates.  This gives the rule B2. Likewise, if $h_i{=}1$ and $h_j{=}0$, then $W^{-1} F W\in \calB_n$ iff $S(j){<}S(i)$. This gives the rule B2 with $i$ and $j$ exchanged.  Direct inspection shows that the operator $\hgate\otimes \hgate \cdot \czgate\cdot \hgate \otimes \hgate$ is not $\hgate$-free. Thus, if $h_i{=}h_j{=}1$ then $W^{-1} F W$ is not in $\calB_n$ regardless of $S$. This gives the rule B1 with $i{\ne}j$.

\noindent
{\em Case~4:}  $F{=}\cxgate_{i,j}$. Note that $i{<}j$ since we assumed $F\,{\in}\, \calB_n$.
Recall that $F(O,\Gamma,\Delta)$ includes $\cxgate_{i,j}$ iff $\Delta_{j,i}\,{=}\,1$, see \eq{Borel1}.
The operator $W^{-1} F W$ depends on the bits $h_i$ and $h_j$ as shown in the
following table.
\begin{center}
\begin{tabular}{c|c|c}
\hline
$h_i$ & $h_j$ & $W^{-1} \cxgate_{i,j} W$\\
\hline
$0$ & $0$ & $\cxgate_{S(i),S(j)}$ \\
$0$ & $1$ & $\czgate_{S(i),S(j)}$  \\
$1$ & $0$ & $(\hgate\otimes \idgate \cdot \cxgate\cdot \hgate \otimes \idgate)_{S(j),S(i)}$ \\
$1$ & $1$ & $\cxgate_{S(j),S(i)}$ \\
\end{tabular}
\end{center}
If $h_i{=}h_j{=}0$, then $W^{-1} F W\in \calB_n$ iff $S(i){<}S(j)$. This gives the rule B4.  Direct inspection shows that the operator $\hgate\otimes \idgate \cdot \cxgate\cdot \hgate \otimes \idgate$ is not $\hgate$-free. Thus,  if $h_i{=}1$ and $h_j{=}0$, then $W^{-1} F W$ is not in $\calB_n$ regardless of $S$.  This gives the rule B5.  Finally, if $h_i{=}h_j{=}1$, then $W^{-1} F W\in \calB_n$ iff $S(j){<}S(i)$. This gives the rule B3.

Consider a general operator $F{=}F(O,\Gamma,\Delta)\in \calB_n$ such that $\Gamma$ and $\Delta$ obey the rules B1-B5. We claim that $W^{-1} F W\in \calB_n$. Indeed, using \eq{Borel1} one can write $F\,{=}\,F_L \cdots F_2 F_1$, where $F_i$ are individual gates from the gate set $\{\xgate,\zgate,\pgate,\czgate,\cxgate^\downarrow\}$. Since we have already checked the rules B1-B5 for each individual gate, one gets $W^{-1} F_i W \in \calB_n$. Since $W^{-1} F W$ is a product of operators $W^{-1} F_i W$, one infers that $W^{-1} F W\in \calB_n$, as claimed. In other words, the rules B1-B5 are sufficient for the inclusion $W^{-1} F W\in \calB_n$.

It remains to check that the rules B1-B5 are also necessary.
Let $\calB_n'(h,S)$ be the set of all operators $F\,{\in}\, \calB_n$ that obey
B1-B5. The above shows that 
\be
\label{eq:B'B}
\calB_n'(h,S) \subseteq \calB_n(h,S).
\ee

We next prove that $\calB_n'(h,S)\,{=}\,\calB_n(h,S)$ for all pairs $h$ and $S$ by employing the counting argument.  Indeed, Bruhat decomposition implies that any Clifford operator $U$ can be written (possibly non-uniquely) as $U\,{=}\,FWF'$, where $F,\,F'\in \calB_n$ and $F$ is some canonical representative of the left coset $F\calB_n(h,S)$.  If $F\calB_n(h,S) = G\calB_n(h,S)$ for some $G\,{\in}\, \calB_n$, then $F\,{=}\,GL$ for some $L\,{\in}\, \calB_n(h,S)$. Thus $U\,{=}\,GWG'$, where $G'=(W^{-1}L W)F'$. Note that $G'\,{\in}\, \calB_n$ since both $W^{-1}L W$ and $F'$ are in $\calB_n$. Thus the number of triples $(W,F,F')$ must be at least $|\calC_n|$. Using this observation and \eq{B'B} one gets
\be
\label{eq:Csize}
|\calC_n|\le \sum_{h\in \{0,1\}^n} \; \sum_{S\in \calS_n} \; \frac{|\calB_n|^2}{|\calB_n(h,S)|}
 \le \sum_{h\in \{0,1\}^n} \; \sum_{S\in \calS_n} \; \frac{|\calB_n|^2}{|\calB_n'(h,S)|}.
\ee
If we can show that the right-hand side of \eq{Csize} coincides with $|\calC_n|$, this would imply $\calB_n'(h,S) =\calB_n(h,S)$ for all $h$ and $S$. By the definition of $\calB_n'(h,S)$, one has
\be
\label{eq:count1}
|\calB_n'(h,S)|= |\calB_n| \cdot 2^{-I_n(h,S)}, 
\ee
where $I_n(h,S)$ is the number of  independent constraints on $\Gamma$ and $\Delta$ imposed by the rules B1-B5 for a given $h$ and $S$.  The number of constraints enforced by each individual rule is summarized in the following table.
\begin{center}
\begin{tabular}{r|l}
\hline
 & Number of constraints on $\Gamma,\Delta$ \\
\hline
\hline
{\bf B1} &  $|h| + \sum_{i>j} h_i h_j$ \\
{\bf B2} & $\sum_{i>j,\; S(i)>S(j)} \bar{h}_i h_j + \sum_{i>j,\; S(i)<S(j)} h_i \bar{h}_j$ \\
{\bf B3} & $\sum_{i>j,\; S(i)>S(j)} h_i h_j$ \\
{\bf B4} & $\sum_{i>j,\; S(i)<S(j)} \bar{h}_i \bar{h}_j$ \\
{\bf B5} & $\sum_{i>j}  \bar{h}_j h_i$ \\
\end{tabular}
\end{center}
Summing up all terms in the table gives
\be
\label{eq:count2}
I_n(h,S) = n(n{-}1)/2 +|h| +  \sum_{\substack{1\le i<j\le n\\ S(i)<S(j) \\ }}\;  (-1)^{1+ h_i}.
\ee
Recall that $|\calC_n|=2^{n^2+2n}\prod_{i=1}^n (4^i{-}1)$ and $|\calB_n|=2^{n^2+2n}$.
Thus it suffices to check that 
\be
\label{eq:count3}
F(n):= \sum_{h\in \{0,1\}^n} \; \sum_{S\in \calS_n} 2^{I_n(h,S)} = \prod_{i=1}^n (4^i-1).
\ee
We use the induction in $n$. The base of induction is $n{=}1$. In this case, $I_1(h,S)=I_1(h)$ since $\calS_1$ has only one element, the identity. From \eq{count2} one gets $I_1(0){=}0$ and $I_1(1){=}1$. Thus $F(1)=2^{I_1(0)} + 2^{I_1(1)} =3$, as claimed in \eq{count3}.  

Next, show the induction step. Assume \eq{count3} holds for $F(n{-}1)$. Write $h=(h_1,h')$ with $h'\,{\in}\, \{0,1\}^{n-1}$. Let $m=S(1)$ and $S'\,{\in}\, \calS_{n-1}$ be a permutation of integers $\irange{2}{n}$ such that $S'(j) = S(j)+1$ if $S(j)<m$ and $S'(j)=S(j)$ if $S(j)>m$.  Simple algebra gives
\[
I_n(h,S) = I_{n-1}(h',S') + n{-}1 + h_1 + (n{-}m)(-1)^{1+h_1}.
\] 
Furthermore, the map $S\to (m,S')$ is a one-to-one map $\calS_n \to \irange{1}{n} \, {\times}\, \calS_{n-1}$. Thus
\begin{eqnarray*}
F(n) = F(n{-}1) \sum_{m=1}^n \; \sum_{h_1=0,1} 2^{ n-1 + h_1 + (n-m)(-1)^{1+h_1}} \\
=(4^n-1) F(n{-}1).
\end{eqnarray*}
This proves \eq{count3}. Combining Eqs.~(\ref{eq:count1},\ref{eq:count3}) one gets 
\begin{eqnarray*}
\sum_{h\in \{0,1\}^n} \; \sum_{S\in \calS_n} \; \frac{|\calB_n|^2}{|\calB_n'(h,S)|} = |\calB_n| F(n) \\
= 2^{n^2+2n}
\prod_{i=1}^n (4^i-1) = |\calC_n|.
\end{eqnarray*}
Substituting this into \eq{Csize} one concludes that $\calB_n'(h,S)\,{=}\,\calB_n(h,S)$ for all $h$ and $S$, implying that $\calB_n'(h,S)$ is a group since $\calB_n(h,S)$ is.  In other words, the rules B1-B5 are necessary and sufficient for the inclusion $W^{-1} FW\in \calB_n$.

Finally, let us finish the proof by establishing \eq{HHbar}.  Direct inspection shows that the only matrices $\Gamma$ and $\Delta$ that obey the rules B1-B5 for both $h$ and $\bar{h}$ are all-zero matrices.  Thus  $\calB_n'(h,S) \cap \calB_n'(\bar{h},S)=\calP_n$. However, we have already proved that $\calB_n'(h,S)=\calB_n(h,S)$ and $\calB_n'(\bar{h},S)=\calB_n(\bar{h},S)$.  This proves \eq{HHbar}.
\end{proof}

\begin{proof}[\bf Proof of Lemma~\ref{lem:HHbar}]
Let $F_1,F_2,\ldots,F_m$ be the list of all elements of $\calB_n(\bar{h},S)$ with trivial Pauli part.  Note that 
\[
m=\frac{|\calB_n(\bar{h},S)|}{|\calP_n|} = 4^{-n} |\calB_n(\bar{h},S)|.
\]
We claim that the right cosets $F_j\calB_n(h,S)$ are pairwise disjoint. Indeed, suppose $F_i \calB_n(h,S) \cap F_j \calB_n(h,S)\ne \emptyset$ for some $F_i \ne F_j$. Then $F_j^{-1} F_i \in \calB_n(h,S)$. From \eq{HHbar} one infers that $F_j F_i^{-1}$ is a Pauli operator.  However, by assumption, $F_i$ and $F_j$ have trivial Pauli part. Thus $F_i{=}F_j$, which is a contradiction.

To conclude the proof it suffices to show that 
\be
\label{eq:total_size}
|\calB_n(h,S)|\cdot |\calB_n(\bar{h},S)| = |\calP_n|\cdot  |\calB_n|= 4^n |\calB_n|.
\ee
Indeed, if this is the case, then the number of 
elements of $\calB_n$ that belong to some
right coset $F_j \calB_n(h,S)$ is
\[
m\cdot |\calB_n(h,S) | =  4^{-n} |\calB_n(\bar{h},S)| \cdot  |\calB_n(h,S)| =  |\calB_n|,
\]
that is, any element of $\calB_n$  is contained in some coset $F_j \calB_n(h,S)$. From \eq{count1} one gets $|\calB_n(h,S)|=|\calB_n| 2^{-I_n(h,S)}$ and $|\calB_n(\bar{h},S)|=|\calB_n| 2^{-I_n(\bar{h},S)}$, where the function $I_n(h,S)$ is defined in \eq{count2}.  Simple algebra gives $I_n(h,S)+I_n(\bar{h},S)=n^2$.  Recalling that $|\calB_n|=2^{n^2+2n}$ gives
\begin{eqnarray*}
 |\calB_n(\bar{h},S)| \cdot  |\calB_n(h,S)| = |\calB_n|^2 \cdot 2^{-I_n(h,S)-I_n(\bar{h},S)} \\
= |\calB_n|^2  2^{-n^2} = 2^{n^2 +4n} = 4^n |\calB_n|,
\end{eqnarray*}
proving \eq{total_size}.
\end{proof}

\subsection{How to compute the canonical form}
\label{ssub:part2}

First, let us introduce some terminology. Suppose $U\,{\in}\, \calC_n$ is a Clifford operator. We assume that $U$ is specified by its stabilizer tableaux~\cite{aaronson2004improved}, defined as a list of $2n$ Pauli operators $U\xgate_iU^{-1}$ and $U\zgate_i U^{-1}$.  It is well-known that the stabilizer tableaux uniquely specifies $U$ up to an overall phase~\cite{aaronson2004improved}.  We say that $U$ is  {\em non-entangling} if $U\xgate_i U^{-1}$ and $U\zgate_i U^{-1}$ are single-qubit Pauli operators for all $i\,{\in}\, \irange{1}{n}$.  Given integers $i,j\,{\in}\, \irange{1}{n}$, let us say that $U$ is {\em $(i,j)$-non-entangling} if $U\xgate_i U^{-1}$ and $U\zgate_i U^{-1}$ are single-qubit Pauli operators acting on the $j$-th qubit. 
\begin{lemma}[\bf Non-entangling Clifford operators]
\label{lem:algo1}
Any non-entangling operator $U\,{\in}\,\calC_n$ has the form
\be
\label{eq:nonentangling}
U=F_1 \left(\prod_{i=1}^n \hgate_i^{h_i} \right) S F_2
\ee
where $h\,{\in}\, \{0,1\}^n$, $S\,{\in}\, \calS_n$, and $F_1,F_2\in \calB_n$ are tensor products of single-qubit $\hgate$-free Clifford operators.  This decomposition can be computed in time $O(n^2)$.
\end{lemma}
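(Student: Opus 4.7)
The plan is to reduce the general non-entangling Clifford to a tensor product of single-qubit Cliffords and then invoke the already-established $n{=}1$ case of \thm{CliffordExact}. First, I would observe that the non-entangling hypothesis forces $U$ to permute qubit Pauli algebras: since $\xgate_i$ and $\zgate_i$ anticommute, so do their images $U\xgate_i U^{-1}$ and $U\zgate_i U^{-1}$, and two single-qubit Paulis can anticommute only when they act on the same qubit; both images thus lie on a common qubit, which I will call $S(i)$. Because $U$ is a Clifford, the operators $\{U\xgate_i U^{-1}, U\zgate_i U^{-1}\}_{i\in\irange{1}{n}}$ generate the full Pauli group modulo phases, which forces $i\mapsto S(i)$ to be a bijection, i.e., an element of $\calS_n$.

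Next, interpreting $S$ as the corresponding qubit permutation operator in $\calC_n$, set $V := U S^{-1}$. By construction $V\xgate_i V^{-1}$ and $V\zgate_i V^{-1}$ are single-qubit Paulis supported on qubit $i$, so $V$ preserves every single-qubit Pauli algebra; this is equivalent to saying $V$ factors as $V = \bigotimes_{i=1}^n V_i$ for some single-qubit Cliffords $V_i$. I would then apply \thm{CliffordExact} with $n{=}1$ to each $V_i$, obtaining a canonical decomposition $V_i = A_i \hgate^{h_i} B_i$ with $h_i \in \{0,1\}$, where $A_i, B_i$ are single-qubit $\hgate$-free Cliffords and $A_i$ has trivial Pauli part.

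Assembling, $U = V S = \bigl(\bigotimes_i A_i\bigr)\bigl(\prod_i \hgate_i^{h_i}\bigr)\bigl(\bigotimes_i B_i\bigr) S$. Commuting $\bigotimes_i B_i$ past $S$ only relabels the tensor factors according to the permutation, so the product rearranges into $F_1 \bigl(\prod_i \hgate_i^{h_i}\bigr) S F_2$ with $F_1 = \bigotimes_i A_i$ and $F_2$ equal to the tensor product of the $B_i$'s in the permuted order. Both $F_1$ and $F_2$ are tensor products of single-qubit $\hgate$-free Cliffords and therefore lie in $\calB_n$, since such operators are generated by $\xgate$, $\zgate$, $\pgate$ alone, all of which are generators of $\calB_n$.

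For the algorithm, the stabilizer tableau of $U$ directly provides all the data: for each $i$, inspect $U\xgate_i U^{-1}$ (a Pauli of length $n$) to locate the unique qubit on which it acts non-trivially, which reveals $S(i)$; the pair $(U\xgate_i U^{-1}, U\zgate_i U^{-1})$ together with their signs identifies $V_i$ among the $24$ single-qubit Cliffords, after which $h_i, A_i, B_i$ follow from a constant-time lookup for the $n{=}1$ canonical form. Verifying single-qubit support and reading each Pauli costs $O(n)$ per index $i$, giving total runtime $O(n^2)$. The main subtlety worth flagging is careful bookkeeping of the overall $\pm 1$ signs attached to tableau entries: they must be absorbed into the $B_i$'s so that the $A_i$'s retain the trivial Pauli part required by the canonical form.
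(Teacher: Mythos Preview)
Your proof is correct and follows essentially the same approach as the paper: extract the permutation $S$ from the single-qubit supports, factor it out to obtain a tensor product of single-qubit Cliffords, decompose each factor, and reassemble. The only cosmetic differences are that the paper sets $V=S^{-1}U$ rather than $V=US^{-1}$, and it writes the single-qubit decomposition explicitly as $V_i=\pgate^{a_i}\hgate^{b_i}\pgate^{c_i}\xgate^{d_i}\zgate^{e_i}$ instead of invoking the $n{=}1$ case of \thm{CliffordExact} (which is slightly cleaner, since \lem{algo1} is itself a step in the algorithmic half of that theorem).
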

\begin{proof}
Suppose $\tilde{X}_i=U\xgate_iU^{-1}$ and $\tilde{Z}_i=U\zgate_iU^{-1}$ are single-qubit Pauli operators.  Since $\tilde{X}_i$ anti-commutes with $\tilde{Z}_i$, they must act on the same qubit.  Let this qubit be $S(i)$, where $S\,{:}\, \irange{1}{n} \to \irange{1}{n}$ is some function.  We claim that $S$ is a permutation. Indeed, otherwise $S(i){=}S(j){=}k$ for some $i{\ne} j$.  Then $\tilde{X}_i$, $\tilde{Z}_i$, $\tilde{X}_j$, $\tilde{Z}_j$ are independent  Pauli operators acting on the $k$-th qubit.  This is impossible since there are only two independent single-qubit Pauli operators. Thus $S$ is a permutation of $n$ qubits. Let $V\,{=}\,S^{-1} U$.  Then $V\xgate_i V^{-1}, V\zgate_iV^{-1}\in \{\xgate_i,\ygate_i,\zgate_i\}$ for all $i$. Thus $V$ is a product of single-qubit Clifford operators, $V=\prod_{i=1}^n V_i$.  Any single-qubit Clifford operator $V_i$ can be written as
\[
V_i = \pgate^{a_i}_i \hgate^{b_i}_i \pgate^{c_i}_i \xgate^{d_i}_i \zgate^{e_i}_i
\quad \mbox{for some} \quad a_i,b_i,c_i,d_i,e_i\in \{0,1\}.
\]
Writing $U=SV$
and commuting all single-qubit gates $\pgate^{a_i}_i$ 
and $\hgate^{b_i}_i$ to the left one gets
\[
U = \left( \prod_{i=1}^n \pgate^{a_i}_{S(i)} \hgate^{b_i}_{S(i)} \right)
S \left( \prod_{i=1}^n \pgate^{c_i}_i \xgate^{d_i}_i \zgate^{e_i}_i \right).
\]
This is the desired decomposition Eq.~(\ref{eq:nonentangling})
with $h_{S(i)}=b_i$.
Clearly, all above steps can be performed in time $O(n^2)$, given the
stabilizer tableaux of $U$.
\end{proof}
Below we give an algorithm that takes as input a Clifford operator $U\,{\in}\, \calC_n$ and computes $B_1,B_2 \in \calB_n$ such that $B_1 U B_2$ is non-entangling. This is achieved by a sequence of elementary steps that ``disentangle" one qubit per time.  More formally, the first $m$ disentangling steps provide operators $B_1,B_2\in \calB_n$ such that $B_1 U B_2$ is $(j,k_j)$-non-entangling for $j\,{\in}\,\irange{1}{m}$ and some $m$-tuple of distinct integers $k_1,k_2,\ldots,k_m\in \irange{1}{n}$.  The full algorithm takes time $O(n^3)$.  First consider a simpler task of ``disentangling" a Pauli operator.
\begin{lemma}[\bf Disentangling a Pauli operator]
\label{lem:algo2}
Given a Pauli operator $O\,{\in}\, \calP_n$,  there exists $B\,{\in}\, \calB_n$ such that 
$BOB^{-1}$ is a single-qubit Pauli operator. 
One can compute both $B$ and $BOB^{-1}$ in time $O(n)$.
\end{lemma}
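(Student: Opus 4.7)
The plan is to represent $O$ in symplectic form as $O = \alpha X^a Z^b$ with $a,b \in \mathbb{F}_2^n$ and global phase $\alpha \in \{\pm 1, \pm i\}$, and to reduce $(a,b)$ to single-qubit support by conjugating with a sequence of at most $2n$ generators of $\calB_n$. The first step is to record how each relevant Borel generator acts on $(a,b)$: a direct calculation shows that $\cxgate_{c,t}^\downarrow$ (with $c<t$) implements $a_t \mapsto a_t \oplus a_c$ and $b_c \mapsto b_c \oplus b_t$ while leaving all other entries fixed; $\czgate_{i,j}$ implements $b_i \mapsto b_i \oplus a_j$ and $b_j \mapsto b_j \oplus a_i$ with $a$ unchanged; and neither gate introduces any extra $i$ or $-1$ factor on Pauli strings, so the phase $\alpha$ passes through unchanged and needs no separate bookkeeping.

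With this toolkit I split into two cases. If $a \neq 0$, let $k$ be the \emph{smallest} index with $a_k = 1$. For every $j > k$ with $a_j = 1$, conjugate by $\cxgate_{k,j}^\downarrow$: the condition $k<j$ keeps this gate in $\calB_n$, and the step zeroes $a_j$ while possibly flipping $b_k$ but leaving $a_i$ for $i \neq j$ and $b_i$ for $i \neq k$ intact. After at most $n-1$ such conjugations $a$ equals $e_k$. Next, for every $j \neq k$ with current $b_j = 1$, conjugate by $\czgate_{k,j}$: since $a = e_k$, this step flips exactly $b_j$ (to $0$) while $a$ and all other $b_i$ remain untouched. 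After at most $n-1$ further conjugations, $(a,b)$ is supported on qubit $k$ alone, so the conjugated operator is $\alpha X_k^{1} Z_k^{b_k}$, a single-qubit Pauli on qubit $k$. If instead $a = 0$ and $b \neq 0$, let $k$ be the \emph{largest} index with $b_k = 1$; for each $j < k$ with $b_j = 1$, conjugate by $\cxgate_{j,k}^\downarrow \in \calB_n$, which sends $b_j \mapsto b_j \oplus b_k = 0$ and leaves $a = 0$ and $b_k = 1$ intact. The remaining case $a = b = 0$ is trivial: $O$ is a scalar and we take $B = Id$.

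For the complexity claim, I will observe that each of the above conjugations modifies only $O(1)$ entries of $(a,b)$, the total number of gates used is at most $2n$, and the index sets that trigger a gate can be read off by one sweep of $a$ followed by one sweep of $b$. Hence both $B$ (output as this sequence of Borel generators, or equivalently packaged as its $F(O,\Gamma,\Delta)$ data) and $BOB^{-1}$ (read off from the final $(a,b,\alpha)$) are produced in time $O(n)$. There is no real obstacle here; the two points to be careful about are that every gate used genuinely lies in $\calB_n$ --- which is why one pivots on the \emph{smallest} index in Case 1 (so all CNOT targets exceed the control) and on the \emph{largest} index in Case 2 (so all CNOT controls are below the target) --- and that $\cxgate^\downarrow$ and $\czgate$ conjugations preserve the Pauli phase so the reduction is exact, not merely up to sign.
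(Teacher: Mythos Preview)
Your proof is correct and follows essentially the same approach as the paper: the same case split on whether the $X$-part is nonzero, the same pivot choices (smallest index when $a\neq 0$, largest when $a=0$), and the same gate sequences ($\cxgate^\downarrow$ to collapse the $X$-part followed by $\czgate$ to clear the $Z$-part, or $\cxgate^\downarrow$ alone in the diagonal case). One small caveat: your blanket claim that $\czgate$ conjugation never introduces a sign is not true in general (e.g.\ $\czgate_{i,j} X_i X_j \czgate_{i,j} = -X_iX_jZ_iZ_j$), but it \emph{is} true in your algorithm because you only apply $\czgate_{k,j}$ after $a$ has been reduced to $e_k$, so $a_j=0$ --- and in any case a residual sign would still leave a single-qubit element of $\calP_n$, so the lemma is unaffected.
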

\begin{proof}
Write 
\[
O = \prod_{j=1}^n \xgate_j^{\alpha_j} \zgate_j^{\beta_j},
\]
where $\alpha,\beta\in \{0,1\}^n$. We can assume without loss of generality that either $\alpha\ne 0^n$ or $\beta\ne 0^n$ (otherwise choose $B=Id$). Suppose first that $\alpha\ne 0^n$.  Let $i\,{\in}\, \irange{1}{n}$ be the first non-zero element of $\alpha$.  Define an operator
\[
B_1 = \prod_{j=i+1}^n \cxgate_{i,j}^{\alpha_j}.
\]
Using the identities $\cxgate_{i,j} \xgate_i \cxgate_{i,j}=\xgate_i \xgate_j$ and $\cxgate_{i,j} \zgate_j \cxgate_{i,j} = \zgate_i \zgate_j$ one gets
\[
B_1OB_1^{-1}= \xgate_i \zgate_i^\epsilon \prod_{j=1}^n \zgate_j^{\beta_j},
\quad \epsilon = \sum_{j=i+1}^n \alpha_j \beta_j {\pmod 2}.
\]
Define an operator
\[
B_2 = \prod_{j\in \irange{1}{n}\setminus i} \czgate_{i,j}^{\beta_j}.
\]
Using the identity $\czgate_{i,j} \xgate_i \czgate_{i,j}=\xgate_i \zgate_j$ one gets $B_2 B_1 O B_1^{-1} B_2^{-1} = \xgate_i \zgate_i^\epsilon$.  Thus the desired operator $B$ can be chosen as $B=B_2 B_1$.

Suppose now that $\alpha\,{=}\,0^n$. Then  $\beta\,{\ne}\, 0^n$.  Let $j\,{\in}\, \irange{1}{n}$ be the largest qubit index such that $\beta_j\,{=}\,1$.  Define an operator
\[
B = \prod_{i=1}^{j-1} \cxgate_{i,j}^{\beta_i}.
\]
Then $BOB^{-1} = \zgate_j$. Clearly, all the above steps take time $O(n)$.  In both cases the  operator $B$ is expressed using gates $\cxgate^{\downarrow}$ and $\czgate$. Thus $B\,{\in}\, \calB_n$.
\end{proof}

\begin{lemma}[\bf Disentangling a Clifford operator]
\label{lem:algo3}
For any $U\,{\in}\, \calC_n$ there exist  $k\,{\in}\, \irange{1}{n}$ and  $B_1,B_2 \in \calB_n$ such that $B_1 U B_2$ is $(1,k)$-non-entangling.  One can compute $(k,B_1,B_2)$ and $B_1 U B_2$  in time $O(n^2)$.
\end{lemma}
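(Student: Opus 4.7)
Set $\tilde{X}_1 := U\xgate_1 U^{-1}$ and $\tilde{Z}_1 := U\zgate_1 U^{-1}$; both can be read directly from the stabilizer tableau of $U$, and they anti-commute. The starting observation is a crucial asymmetry between $\xgate_1$ and $\zgate_1$ from the viewpoint of the right factor $B_2 \in \calB_n$. Every generator of $\calB_n$ --- namely $\pgate_i$, $\czgate_{i,j}$, and $\cxgate_{i,j}^\downarrow$ --- commutes with $\zgate_1$, the key point being that $\cxgate_{i,j}^\downarrow$ requires $i<j$, so qubit $1$ never appears as a target. Hence $B_2 \zgate_1 B_2^{-1} = \pm \zgate_1$ for every $B_2 \in \calB_n$. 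In contrast, $B_2 \xgate_1 B_2^{-1}$ can be made equal to any Pauli of the form $\pm \xgate_1 \zgate_1^{\gamma} \prod_{j>1} \xgate_j^{u_j} \zgate_j^{v_j}$ via a suitable product of $\pgate_1$, $\czgate_{1,j}$, and $\cxgate_{1,j}$ gates together with a Pauli prefix for the sign.

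The plan exploits this asymmetry by using $B_1$ to clean up $\tilde{Z}_1$ and $B_2$ to clean up $\tilde{X}_1$. First, apply \lem{algo2} to $\tilde{Z}_1$ to obtain $B_1 \in \calB_n$ and an index $k \in \irange{1}{n}$ such that $\bar{Z} := B_1 \tilde{Z}_1 B_1^{-1}$ is a single-qubit Pauli on qubit $k$; then update the full tableau of $U$ by conjugation with $B_1$, which in particular gives $\bar{X} := B_1 \tilde{X}_1 B_1^{-1}$. The Paulis $\bar{X}$ and $\bar{Z}$ anti-commute.

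Second, pick any single-qubit Pauli $P$ on qubit $k$ that anti-commutes with $\bar{Z}$, and set $\tilde{P} := U^{-1} B_1^{-1} P B_1 U$. Since $P$ anti-commutes with $\bar{Z}$, the Pauli $\tilde{P}$ anti-commutes with $\zgate_1$, so it has a non-trivial $\xgate_1$ factor and can be written in the form $\pm \xgate_1 \zgate_1^{\gamma} \prod_{j>1} \xgate_j^{u_j} \zgate_j^{v_j}$. Read off the parameters and assemble $B_2 \in \calB_n$ so that $B_2 \xgate_1 B_2^{-1} = \tilde{P}$. A direct calculation then yields $V := B_1 U B_2$ with $V \zgate_1 V^{-1} = \pm \bar{Z}$ and $V \xgate_1 V^{-1} = B_1 U \tilde{P} U^{-1} B_1^{-1} = P$, both single-qubit Paulis on qubit $k$, establishing that $V$ is $(1,k)$-non-entangling. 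For the runtime, each of $B_1$ and $B_2$ consists of $O(n)$ elementary gates, and conjugating the full $2n$-row stabilizer tableau by one such gate takes $O(n)$ time, giving $O(n^2)$ overall.

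The main obstacle --- and what separates this lemma from a direct application of \lem{algo2} --- is recognizing the asymmetric roles of $B_1$ and $B_2$. The naive attempt of disentangling $\tilde{X}_1$ first using \lem{algo2} would leave us stranded: because $B_2$ has essentially no action on $\zgate_1$, there is no remaining tool to clean up the $\tilde{Z}_1$-entry without disturbing the already-reduced $\tilde{X}_1$-entry. Flipping the roles --- using $B_1$ to reduce $\tilde{Z}_1$ and then exploiting the rich action of $B_2$ on $\xgate_1$ to absorb $\bar{X}$ into the desired $P$ --- is what makes the construction go through.
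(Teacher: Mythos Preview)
Your argument is correct and reaches the same conclusion with the same $O(n^2)$ runtime, but the construction of $B_2$ follows a genuinely different route from the paper's.

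Both proofs begin identically: apply \lem{algo2} to $U\zgate_1U^{-1}$ to obtain $B_1\in\calB_n$ reducing it to a single-qubit Pauli $\bar Z$ on some qubit $k$. For the right factor, however, the paper works \emph{forward} through the tableau: it first applies a layer of $\cxgate_{1,i}$ gates to make every $U\zgate_iU^{-1}$ with $i\ge 2$ act trivially on qubit $k$, then uses the resulting commutation structure to show that $U\xgate_1U^{-1}$ must have the explicit form $R_k\otimes\prod_{i\ge 2}\tilde Z_i^{\epsilon_i}$, and finally strips off the $\tilde Z_i$ factors with a layer of $\czgate_{1,i}$ gates. You instead work \emph{backward}: fix the desired target $P$ on qubit $k$, pull it back through the inverse conjugation to $\tilde P=(B_1U)^{-1}P(B_1U)$, observe that anti-commutation with $\zgate_1$ forces $\tilde P$ into the orbit of $\xgate_1$ under $\calB_n$, and assemble $B_2$ directly. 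Your route is cleaner conceptually (no need to touch the $i\ge 2$ columns or derive the intermediate structural equations), while the paper's route stays entirely within the forward tableau and is slightly more explicit.

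One small point worth making explicit for the runtime claim: computing $\tilde P$ requires the \emph{inverse} action of $B_1U$ on Paulis, which is not literally what the forward tableau stores. This is harmless because the tableau matrix $M$ is symplectic, so $M^{-1}=\Omega M^T\Omega$ and the required column of $M^{-1}$ is just a row of $M$ after a block swap, readable in $O(n)$; but you should say so rather than leave it implicit.
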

\begin{proof}
Let $O=U\zgate_1U^{-1}$. By \lem{algo2}, there exists
$B_1\,{\in}\, \calB_n$ such that $B_1OB_1^{-1}$ is a single-qubit Pauli operator acting on some qubit $k\,{\in}\, \irange{1}{n}$.  Perform an update $U\gets B_1 U$. Then $U\zgate_1U^{-1}$ acts only on the $k$-th qubit. Thus we can write
\be
\label{eq:disent1}
U\zgate_1 U^{-1} = O_k\otimes Id\,^{n-1}
\ee
for some $O\,{\in}\, \calP_1$. Here the tensor product separates the $k$-th qubit and the remaining qubits.

Note that $U \zgate_i U^{-1}$ and $U\xgate_iU^{-1}$ with $i\,{\ge}\, 2$ may act on the $k$-th qubit only by the identity or by $O$.  Indeed, these operators commute with $U\zgate_1 U^{-1}$ which has the form Eq.~(\ref{eq:disent1}).  Define $B_2 := \prod_{i\in Q} \cxgate_{1,i}$, where $Q$ is the set of qubits $2\,{\le}\, i\,{\le}\, n$ such that $U\zgate_i U^{-1}$ acts non-trivially on the $k$-th qubit.  Then $(UB_2) \zgate_i (UB_2)^{-1}$ acts trivially on the $k$-th qubit for $i\,{\ge}\, 2$. Furthermore, $(UB_2) \zgate_1 (UB_2)^{-1} = U\zgate_1 U^{-1}$ since $B_2$ commutes with $\zgate_1$.  Perform an update $U\gets UB_2$. The updated $U$ obeys Eq.~(\ref{eq:disent1}) and
\be
\label{eq:disent2}
U\zgate_i U^{-1} = I_k \otimes \tilde{Z}_i  \mbox{ and }
U\xgate_i U^{-1} = O_k^{\epsilon_i} \otimes \tilde{X}_i,
\quad 2\le i\le n
\ee
for some $\epsilon_i\,{\in}\, \{0,1\}$, and some Pauli operators $\tilde{Z}_i, \tilde{X}_i \in \calP_{n-1}$ that obey the same commutation rules as the Pauli operators $\zgate_i$, $\xgate_i$ on $n{-}1$ qubits. We claim that 
\be
\label{eq:disent3}
U\xgate_1U^{-1} = R_k \otimes \prod_{i=2}^n \tilde{Z}_i^{\epsilon_i}
\ee
for some Pauli operator $R_k\,{\in}\, \calP_1$ such that $OR_k=-R_kO$. Indeed, one can check that the operator defined in Eq.~(\ref{eq:disent3}) commutes with all operators $U\zgate_i U^{-1}$ and $U\xgate_i U^{-1}$ for $2\le i\le n$ and anti-commutes with $U\zgate_1 U^{-1}$.  Define
\[
B_3 = \prod_{i=2}^n \czgate_{1,i}^{\epsilon_i}.
\]
Clearly, $B_3$ commutes with $\zgate_i$ for all $i$. Furthermore, $B_3\xgate_1 B_3^{-1}= \xgate_1 \prod_{i=2}^n \zgate_i^{\epsilon_i}$.  Perform an update $U\gets UB_3$. From Eqs.~(\ref{eq:disent2},\ref{eq:disent3}) one gets $U\xgate_1 U^{-1} =R_k\otimes I^{n-1}$.  Combining this and Eq.~(\ref{eq:disent1}) one concludes that $U\xgate_1U^{-1}$ and $U\zgate_1U^{-1}$ are single-qubit Pauli operators acting on the $k$-th qubit. Thus $U$ is $(1,k)$-non-entangling.  All operators $B_i$ defined above can be expressed using $\{\czgate,\cxgate^\downarrow\}$ gates, i.e. $B_i\,{\in}\, \calB_n$. Updating the stabilizer tableaux of $U$ by applying a single gate takes time $O(n)$. Since all updates performed above require $O(n)$ gates,  the overall runtime is $O(n^2)$.
\end{proof}

Suppose $U\,{\in}\, \calC_n$ is $(1,k_1)$-non-entangling.  Note that all operators $U \xgate_i U^{-1}$ and $U\zgate_i U^{-1}$ with $2\le i\le n$ act trivially on the qubit $k_1$.  Indeed, these operators must commute with $U \xgate_1 U^{-1}$ and $U\zgate_1 U^{-1}$ that generate the full Pauli group on the qubit $k_1$. Ignoring the action of $U$ on the first input qubit and restricting Pauli operators $U \xgate_i U^{-1}$ and $U\zgate_i U^{-1}$ with $2\le i\le n$ onto the subset of qubits $\irange{1}{n} \setminus \{k_1\}$ defines a Clifford operator $U'\,{\in}\, \calC_{n-1}$.  Applying \lem{algo3} to $U'$ gives $B_1,B_2 \in \calB_{n}$ and $k_2\in \irange{1}{n}{\setminus} \{k_1\}$ such that $B_1 U B_2$ is both $(1,k_1)$-non-entangling and $(2,k_2)$-non-entangling.  Proceeding inductively, one constructs $B_1,B_2\in \calB_n$ such that $B_1 U B_2$ is  $(j,k_j)$-non-entangling for each $j\,{\in}\, \irange{1}{n}$ and some $n$-tuple of distinct integers $(k_1,k_2,\ldots,k_n)$. By definition, it means that $B_1 U B_2$  is non-entangling.  Since there are $n$ disentangling steps, each taking time $O(n^2)$, the overall runtime is $O(n^3)$.  Applying  \lem{algo1}  with $U$ replaced by $B_1UB_2$ and multiplying the decomposition Eq.~(\ref{eq:nonentangling}) on the left and on the right by $B_1^{-1}$ and $B_2^{-1}$ respectively one gets
\be
\label{eq:disent4}
U= B_1^{-1} F_1 \left(\prod_{i=1}^n \hgate_i^{h_i} \right) S F_2 B_2^{-1}
\equiv  L \left(\prod_{i=1}^n \hgate_i^{h_i} \right) SR.
\ee
Recall that $F_1$ and $F_2$ are tensor products of single-qubit $\hgate$-free operators. In particular, $F_1,F_2\in \calB_n$. Since $B_1,B_2\in \calB_n$, one infers that $L=B_1^{-1}F_1\in \calB_n$ and $R=F_2B_2^{-1}\in \calB_n$.

By \lem{HHbar}, the operator $L$ in Eq.~(\ref{eq:disent4}) can be uniquely written as
\be
\label{eq:disent5}
L=KM, \quad K \in \calB_n(\bar{h},S), \quad M\in \calB_n(h,S).
\ee
Given $L$, how to compute $K$ and $M$?  To this end, we need multiplication rules of the $\hgate$-free group.
\begin{proposition}
\label{prop:mult}
Suppose $\Gamma_1,\,\Gamma_2$ are symmetric and $\Delta_1,\,\Delta_2$ are invertible $n\times n$ Boolean matrices; let $O_1,O_2\in \calP_n$ be any Pauli operators.  Then 
\be
\label{eq:mult0}
F(O_2,\Gamma_2,\Delta_2) \cdot F(O_1,\Gamma_1,\Delta_1)= F(O,\Gamma,\Delta),
\ee
where
\be
\label{eq:mult1}
\Delta = \Delta_2 \Delta_1 \quad \mbox{and} \quad \Gamma = \Gamma_1 \oplus \Delta_1^T \Gamma_2 \Delta_1.
\ee
Furthermore,
\be
\label{eq:mult2}
F(O_1,\Gamma_1,\Delta_1)^{-1} = F(O', (\Delta_1^{-1})^T \Gamma_1 \Delta_1^{-1}, \Delta_1^{-1}),
\ee
where $O,O'\in \calP_n$ are some Pauli operators.
\end{proposition}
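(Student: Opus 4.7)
My plan is to verify both identities by applying them to an arbitrary computational basis state $|x\rangle$, using the defining relation $F(O,\Gamma,\Delta)|x\rangle = i^{x^T \Gamma x} O|\Delta x\rangle$ from \eq{Hfree1} (with the ket reduced modulo $2$ and the exponent evaluated over $\ZZ$). For the product formula, first expand $F_1|x\rangle = i^{x^T \Gamma_1 x} O_1 |\Delta_1 x\rangle$ and then apply $F_2$, pushing $O_1$ past $F_2$ as $F_2 O_1 = \tilde O_1 F_2$ with $\tilde O_1 := F_2 O_1 F_2^{-1} \in \calP_n$ (up to a $\pm 1$ sign to be absorbed). A second use of the defining relation yields
\begin{equation*}
F_2 F_1 |x\rangle \;=\; \pm\, i^{\,x^T \Gamma_1 x \,+\, y^T \Gamma_2 y}\, \tilde O_1 O_2\, |\Delta_2 \Delta_1 x\rangle, \qquad y := \Delta_1 x \bmod 2,
\end{equation*}
which immediately shows that $\Delta = \Delta_2 \Delta_1$ and exhibits an overall Pauli prefactor, reducing the problem to matching the integer phase exponent with $x^T \Gamma x$, where $\Gamma := \Gamma_1 \oplus \Delta_1^T \Gamma_2 \Delta_1$ is computed over $\FF_2$.

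The main technical step is this phase reconciliation, since the left-hand side mixes integer arithmetic with $\FF_2$ matrix operations. Using $y_i \in \{0,1\}$ (hence $y_i^2 = y_i$) and $y_i y_j \equiv \ell_i \ell_j \pmod 2$, where $\ell := \Delta_1 x$ is computed over $\ZZ$, one first verifies $y^T \Gamma_2 y \equiv \ell^T \Gamma_2 \ell \pmod 4$, reducing the calculation to a purely integer expression. Writing then $\ell^T \Gamma_2 \ell = x^T \Delta_1^T \Gamma_2 \Delta_1 x$ (over $\ZZ$) and applying the identity $a \oplus b \equiv a + b - 2ab \pmod 4$ entry by entry to pass from the integer matrix to its $\FF_2$ reduction, the off-diagonal quadratic corrections pair up by the symmetry of the matrices involved and contribute multiples of $4$. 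Only the diagonal corrections survive, producing a net discrepancy of the form $2 L(x) \pmod 4$ for some $\FF_2$-linear form $L$. Via $x = \Delta^{-1}(\Delta x)$ with $\Delta := \Delta_2 \Delta_1$, the resulting sign $(-1)^{L(x)}$ coincides with the eigenvalue of a specific $Z$-type Pauli acting on $|\Delta x\rangle$, and may be folded into the final Pauli to give $F_2 F_1 = F(O, \Gamma, \Delta)$ with $O \in \calP_n$.

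The inverse formula then follows as an immediate corollary of the product rule. Taking $\Delta' := \Delta_1^{-1}$ and $\Gamma' := (\Delta_1^{-1})^T \Gamma_1 \Delta_1^{-1}$, the product $F(O_1, \Gamma_1, \Delta_1) \cdot F(O', \Gamma', \Delta')$ has $\Delta$-part $\Delta_1 \Delta_1^{-1} = I$ and $\Gamma$-part $\Gamma' \oplus (\Delta')^T \Gamma_1 \Delta' = \Gamma' \oplus \Gamma' = 0$, so the product lies in $\calP_n$. One then selects $O' \in \calP_n$ uniquely so that this residual Pauli equals $Id$, yielding $F(O_1, \Gamma_1, \Delta_1)^{-1} = F(O', \Gamma', \Delta')$ as in \eq{mult2}.
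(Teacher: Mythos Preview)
Your argument is correct and complete. It differs from the paper's proof in approach: the paper reasons at the circuit level, writing each $F(O_i,\Gamma_i,\Delta_i)$ as a layered -P-CZ-CX- circuit, concatenating, and invoking the phase-polynomial formalism of~\cite{maslov2018shorter} to see that the phase layer $\Gamma_2$, once pushed past the $\cnotgate$ layer $\Delta_1$, acts on the original inputs as $\Delta_1^T\Gamma_2\Delta_1$; the XOR combination of phase layers and the product of $\cnotgate$ layers are then immediate, and \eq{mult2} is dispatched ``by verification.'' You instead work directly with the defining action on basis vectors and track the exponent of $i$ modulo~$4$. Your key observations---that $y_i\equiv \ell_i^2\pmod 4$ forces $y^T\Gamma_2 y\equiv \ell^T\Gamma_2\ell\pmod 4$, and that both the passage from the integer matrix $\Delta_1^T\Gamma_2\Delta_1$ to its $\FF_2$ reduction and the XOR with $\Gamma_1$ introduce only symmetric corrections whose off-diagonal parts cancel mod~$4$---yield the residual $(-1)^{L(x)}$ that you correctly absorb as a $Z$-type Pauli on $|\Delta x\rangle$. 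Your derivation of \eq{mult2} as a corollary of the product rule is cleaner than the paper's bare ``by verification.'' The paper's argument is shorter and more conceptual but leans on an external reference; yours is longer but entirely self-contained and makes the mod-$4$ bookkeeping explicit.
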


\begin{proof}
We ignore the Pauli stages since they play no role; all Pauli gates can in fact be `commuted' to the side at any time through a process that does not change any of $\pgate$, $\czgate$, or $\cnotgate$ gates. This leaves the task of multiplying $F(Id,\Gamma_2,\Delta_2)$ and $F(Id,\Gamma_1,\Delta_1)$. For that, write both unitaries as layered circuits, -P$_1$-CZ$_1$-CX$_1$- and -P$_2$-CZ$_2$-CX$_2$-, and consider the result of circuit concatenation, -P$_1$-CZ$_1$-CX$_1$-P$_2$-CZ$_2$-CX$_2$-. Operator $\Gamma_2$ described by -P$_2$-CZ$_2$- experiences the application of phases to the linear functions of variables transformed by the matrix $\Delta_1$.  This means that the action of $\Gamma_2$ when applied to original primary circuit inputs is described by $\Delta_1^T \Gamma_2 \Delta_1$~\cite{maslov2018shorter}. This combines with the -P$_1$-CZ$_1$- stage by bitwise EXOR, since each $\czgate$ and $\pgate$ is self-inverse (subject to possibly factoring out a proper Pauli-Z), to obtain the second equality \eq{mult1}.  The reduced concatenated circuit expression now looks as -P-CZ-CX$_1$-CX$_2$-. The first equality \eq{mult1} follows directly (noting that the orders of circuit concatenation and matrix multiplications are inverted).  \eq{mult2} follows by verification.
\end{proof}

Write operators $K$, $L$, and $M$ in \eq{disent5} as
\[
L {=} F(O_1,\Gamma_1,\Delta_1), \,
K^{-1} {=} F(O_2,\Gamma_2,\Delta_2), \,
M {=} F(O, \Gamma, \Delta).
\]
where $\Gamma$ and $\Gamma_i$ are symmetric matrices, and $\Delta$ and $\Delta_i$ are lower-triangular unit-diagonal matrices, see Eq.~(\ref{eq:Borel1}).  For now, we will ignore the Pauli parts $O_i$ and $O$.  Clearly, $L\,{=}\,KM$ is equivalent to $K^{-1}L \,{=}\, M$.  By \prop{mult}, the latter is equivalent to Eq.~(\ref{eq:mult1}). Recall that $L$ has already been computed.  Thus the matrices $\Gamma_1$ and $\Delta_1$ are known and Eq.~(\ref{eq:mult1}) defines a linear system of equations with variables $\Gamma, \Delta$ and $\Gamma_2,\Delta_2$ parameterizing $M$ and $K^{-1}$.  Clearly, $K\,{\in}\, \calB_n(\bar{h},S)$ iff $K^{-1}\,{\in}\, \calB_n(\bar{h},S)$ since $\calB_n(\bar{h},S)$ is a group.  From  \lem{FHS} one infers that $K^{-1}\,{\in}\, \calB_n(\bar{h},S)$ iff $\Gamma_2, \Delta_2$ obey the rules C1-C5. Likewise, $M\,{\in}\, \calB_n(h,S)$ iff $\Gamma,\Delta$ obey  the rules C1-C5 with every bit of $h$ negated.  Importantly, the rules C1-C5 impose linear constraints  on matrix elements
of $\Gamma$ and $\Delta$.
 Combining Eq.~(\ref{eq:mult1}) with the rules C1-C5 one obtains a linear system with $O(n^2)$ variables $\Gamma,\Delta,\Gamma_2, \Delta_2$ and $O(n^2)$ equations. This linear system has a unique solution due to \lem{HHbar}. Such linear system can be solved in time $O(n^6)$ using the standard  algorithms. 
Finally, $K$ is computed from $K^{-1}$ using Eq.~(\ref{eq:mult2}).

Combining Eqs.~(\ref{eq:disent4},\ref{eq:disent5}) one arrives at
\[
U = KM W R=KW(W^{-1} M W)R, \quad W\equiv \left(\prod_{i=1}^n \hgate_i^{x_i} \right) S.
\]
Note that $W^{-1} M W\in \calB_n$ since $M\,{\in}\, \calB_n(h,S)$, see Eq.~(\ref{eq:bruhat003}).  Denoting $K'=(W^{-1} M W)R\in \calB_n$ one obtains $U=K W K'$ where $K\,{\in}\, \calB_n(\bar{h},S)$ and $K'\,{\in}\, \calB_n$.  This is the canonical form  stated in \thm{CliffordExact}. The Pauli part of $K'$ can be fixed by comparing the conjugated action of $U$ and $KWK'$ on Pauli operators.

\section{Generation of random Clifford operators}\label{sec:RandomClifford}

We next describe an algorithm for generating random uniformly distributed Clifford operators that utilizes the canonical form established in \thm{CliffordExact}.  Our algorithm runs in time $O(n^2)$, consumes $\log_2{|\calC_n|}$ random bits, and outputs  a Clifford operator sampled from the uniform  distribution on $\calC_n$. The Clifford operator is specified by its canonical form.  If needed, the canonical form can be converted to the stabilizer tableaux in time $O(n^\omega)$, where $\omega\,{\approx}\, 2.3727$ is the matrix multiplication exponent~\cite{williams2012multiplying}. Indeed, one can easily check that an operator $F\,{=}\,F(O,\Gamma,\Delta)$ defined in Eqs.~(\ref{eq:Hfree1},\ref{eq:Borel1}) has the stabilizer tableaux
\[
\left[ \ba{cc}
\Delta & 0 \\
\Gamma \Delta & (\Delta^{-1})^T \\
\ea
\right].
\]
Here the first $n$ columns represent Pauli operators $F\xgate_i F^{-1}$ (ignoring the phase) and the last $n$ columns represent $F\zgate_i F^{-1}$. Stabilizer tableau of the Hadamard stage and qubit permutation layers in the canonical form can be computed in time $O(n)$.  Finally, computing the inverse of the matrices $\Delta,\Delta'$ and multiplying the stabilizer tableau over all layers takes time $O(n^\omega)$. 

A simplified version of our algorithm, described in Appendix~\ref{app:A}, samples the uniform distribution on the group of invertible $n{\times} n$ binary matrices $\mathrm{GL}(n)$.  This algorithm has runtime $O(n^\omega)$ and consumes exactly $\log_2{|\mathrm{GL}(n)|}$ random bits. This improves upon the state-of-the-art algorithm due to Randall~\cite{randall1993efficient} which consumes $\log_2{|\mathrm{GL}(n)|}\,{+}\,O(1)$ random bits.

Given a bit string $h\,{\in}\, \{0,1\}^n$ and a permutation $S\,{\in}\, \calS_n$, define
\be
\label{eq:Wagain}
W=  \left( \prod_{i=1}^n \hgate_i^{h_i} \right) S.
\ee
Recall that the full Clifford group $\calC_n$ is a disjoint union of subsets
$\calB_n W \calB_n$.  We will need a normalized probability
distribution
\be
P_n(h,S) = \frac{|\calB_n W \calB_n|}{|\calC_n|}.
\ee
By definition, $P_n(h,S)$ is the fraction of $n$-qubit Clifford operators $U$ such that the canonical form of $U$ defined in \thm{CliffordExact} contains a layer of $\hgate$ gates labeled by $h$ and a qubit permutation $S$. We can write $|\calB_n W \calB_n| = |\calB_n|^2 \cdot |\calB_n(h,S)|^{-1}$, where $\calB_n(h,S)=\{B\in \calB_n \, {:} \, W^{-1} B W \in \calB_n\}$.  As we have already shown in Subsection~\ref{ssec:part1}, $|\calB_n(h,S)| = |\calB_n| 2^{-I_n(h,S)}$, where
\[
I_n(h,S) = n(n{-}1)/2 +|h| +  \sum_{\substack{1\le i<j\le n\\ S(i)<S(j) \\ }}\;  (-1)^{1+ h_i},
\]
see Eqs.~(\ref{eq:count1},\ref{eq:count2}).
Recalling that $|\calC_n|=|\calB_n|\prod_{j=1}^n(4^j-1)$ one arrives at
\be
\label{QMallows}
P_n(h,S)=\frac{2^{I_n(h,S)}}{\prod_{i=1}^n(4^i-1)}.
\ee
Interestingly, Eq.~(\ref{QMallows}) is a natural `symplectic' analogue of the well-known Mallows distribution~\cite{mallows1957non} on the symmetric group $\calS_n$ describing qubit permutations.  The latter is defined as
\begin{eqnarray*}
P_n(S) = \frac{2^{I_n(S)}}{\prod_{i=1}^n(2^i-1)}, \\
I_n(S) = \#\{ i,j\in \irange{1}{n} \, : \, i<j \;\; \mbox{and} \;\; S(i)>S(j)\}
\end{eqnarray*}
(a more general version of the Mallows distribution has probabilities $P_n(S)\sim q^{I_n(S)}$ for some $q\,{>}\,0$).  In particular, one can easily check that $I_n(0^n,S)=I_n(S)$.  The Mallows distribution is relevant in the context of ranking algorithms~\cite{lu2011learning}. It also plays a central role in our algorithm for sampling the uniform distribution on the group $\mathrm{GL}(n)$, see Appendix~\ref{app:A} for details.  Accordingly, we will refer to Eq.~(\ref{QMallows}) as a {\em quantum Mallows distribution}. Consider the following algorithm.

{\centering
\begin{minipage}{1.0\linewidth}
\begin{algorithm}[H]
	\caption{Generating $h,S$ per quantum Mallows distribution $P_n(h,S)$\label{Sampling_QMallows}}
	\begin{algorithmic}[1]
		\State{$A\gets \irange{1}{n}$}
		\For{$i=1$ to $n$}
		\State{$m\gets |A|$}
		\State{Sample $h_i\,{\in}\, \{0,1\}$ and   $k\in \irange{1}{m}$ from the probability vector  
		\[
		p(h_i,k) = \frac{2^{m -1 + h_i + (m-k)(-1)^{1+h_i}}}{4^{m}-1}.
		\]}				
		\State{Let $j$ be the $k$-th largest element of $A$}
		\State{$S(i)\gets j$}
		\State{$A\gets A\setminus \{j\}$}
		\EndFor
		\State{\textbf{return} $(h,S)$}
		\end{algorithmic}
\end{algorithm}
\end{minipage}
}

We included a Python implementation in Appendix~\ref{app:C}.

\begin{lemma}
\label{lem:qMallows}
Algorithm~\ref{Sampling_QMallows} outputs a bit string $h\,{\in}\, \{0,1\}^n$ and a permutation $S\,{\in}\, \calS_n$ sampled from the quantum Mallows distribution $P_n(h,S)$ defined in Eq.~(\ref{QMallows}).  The algorithm
can be implemented in time $\tilde{O}(n)$.  
\end{lemma}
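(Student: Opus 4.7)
The plan has two parts: establish that the output $(h,S)$ is distributed according to the quantum Mallows distribution $P_n(h,S)$ of Eq.~(\ref{QMallows}), then verify the $\tilde O(n)$ runtime bound using standard data structures.

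For the distribution, I would first observe that the $n$ iterations are independent draws, so
\[
\Pr[\text{Algorithm~1 outputs }(h,S)] = \prod_{i=1}^n p(h_i, k_i),
\]
where $m_i = n-i+1$ and $k_i$ is the rank of $S(i)$ in the set $A$ present at the start of iteration $i$, under the algorithm's ranking rule. The denominators collapse to $\prod_{j=1}^n(4^j - 1)$, already matching the denominator of $P_n(h,S)$. Reading the power of $2$ from the numerators and using $\sum_i(m_i-1) = n(n-1)/2$ yields
\[
E = \tfrac{n(n-1)}{2} + |h| + \sum_{i=1}^n (m_i-k_i)(-1)^{1+h_i}.
\]
The key combinatorial observation is that $A \setminus \{S(i)\}$ at the start of iteration $i$ equals $\{S(i+1), \ldots, S(n)\}$, so with the appropriate interpretation of ``$k$-th largest'' we have $m_i - k_i = \#\{j > i : S(j) > S(i)\}$. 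Summing over $i$ gives
\[
\sum_{i=1}^n (m_i-k_i)(-1)^{1+h_i} = \sum_{\substack{1\le i<j\le n \\ S(i)<S(j)}} (-1)^{1+h_i},
\]
which is exactly the third term of \eq{count2}. Therefore $E = I_n(h,S)$ and the joint output is drawn from $P_n(h,S)$.

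For the runtime, each iteration samples $(h_i, k)$ from a $2m_i$-point distribution and then locates and deletes the $k_i$-th order statistic of $A$. Since $p(h_i, k) \propto 2^{\alpha_{h_i} + \beta_{h_i} k}$ with $\beta_{h_i} = \pm 1$, its CDF is a geometric sum in closed form, so inverse-CDF sampling needs only $O(\log m_i)$ arithmetic operations on $O(n)$-bit integers. Maintaining $A$ as a Fenwick (or order-statistic) tree on $\irange{1}{n}$ supports both rank queries and deletions in $O(\log n)$ time. Over $n$ iterations this totals $O(n \log n)$ word operations on polynomially bounded integers, i.e.\ $\tilde O(n)$.

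The main obstacle I expect is the sign bookkeeping in the exponent identification: one must fix the ``$k$-th largest'' convention so that $m_i - k_i$ lands on the side of $S(i)$ whose sign $(-1)^{1+h_i}$ matches the pair-sum $\sum_{i<j,\,S(i)<S(j)}$ in $I_n(h,S)$ and not the complementary sum. Once that accounting is pinned down, the remainder reduces to direct arithmetic and standard data-structure reasoning.
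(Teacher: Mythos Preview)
Your proposal is correct and follows essentially the same argument as the paper: the paper proves the factorization $P_n(h,S)=P_{n-1}(h',S')\,p(h_1,k)$ by induction via the exponent recursion $I_n(h,S)=I_{n-1}(h',S')+(n{-}1)+h_1+(n{-}k)(-1)^{1+h_1}$, which is exactly your direct-product computation unrolled one step at a time, and for the runtime the paper likewise reduces Step~4 to inverse-CDF sampling of a geometric series. Your treatment is in fact slightly more careful on the data-structure side (the paper does not spell out how to maintain $A$ for $O(\log n)$ rank queries and deletions), and your flagged ``$k$-th largest'' convention issue is real---it is precisely what makes $m_i-k_i$ count $\#\{j>i:S(j)>S(i)\}$ rather than the complementary set.
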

\begin{proof}
Let us first check correctness of the algorithm. We use an induction in $n$. The base of induction is $n{=}1$. Note that $\calS_1{=}\{Id\}$ contains a single element, the identity.  Direct inspection shows that $I_1(0,Id) \,{=}\,0$ and $I_1(1,Id)\,{=}\,1$. Accordingly, $P_1(0,Id)\,{=}\,1/3$ and $P_1(1,Id)\,{=}\,2/3$.  The probability vector $p(h_1,1)$ sampled at Step~3 of the algorithm is $p(0,1)\,{=}\,1/3$ and $p(1,1)\,{=}\,2/3$. Thus the algorithm returns a sample from $P_1(h,Id)$,
as claimed. 

Next consider some fixed $n\,{>}\,1$. Let $k:= S(1)$ and $S'$ be the permutation obtained from $S$ by removing the first column and the $k$-th row of $S$ (when written as a linear invertible matrix). Let $h' = (h_2,h_3,\ldots,h_n)$.
Simple algebra gives 
\be
\label{eq:Ihs1}
I_n(h,S) = I_{n-1}(h',S') + n{-}1 + h_1 + (n{-}k)(-1)^{1+h_1}.
\ee
It follows that $P_n(h,S)\, {=}\, P_{n-1}(h',S') p(h_1,k)$, where $p(h_1,k)$ is the distribution sampled at Step~4 of Algorithm~\ref{Sampling_QMallows} at the first iteration of the  {\bf for} loop (with $i{=}1$).  Note that all subsequent iterations of the  {\bf for} loop (with $i\,{\ge}\, 2$) can be viewed as applying Algorithm~\ref{Sampling_QMallows} recursively to generate $h'\,{\in}\,\{0,1\}^{n-1}$ and $S'\,{\in}\, \calS_{n-1}$.  By the induction hypothesis, the probability of generating the pair $(h',S')$ is $P_{n-1}(h',S')$. Thus the probability of generating the pair $(h,S)$ is $P_n(h,S)$.

We claim that Step~3 of the algorithm can be implemented in time $\tilde{O}(1)$.  Indeed, consider some fixed iteration of the {\bf for} loop and let $m=|A|$.  Define a probability vector 
\begin{eqnarray*}
P:= [\underbrace{p(1,1), p(1,2), \ldots ,p(1,m)}_{h_i=1}, \\
\underbrace{p(0,m),p(0,m-1),\ldots,p(0,1)}_{h_i=0}].
\end{eqnarray*}
Simple algebra shows that
\[
P_a = \frac{2^{2m-a}}{4^m-1}, \quad a\in \irange{1}{2m}.
\]
A sample $a\in \irange{1}{2m}$ from the probability vector $P$ can be 
obtained as 
\[
a = 2m+1- \lceil \log_2{\left( r(4^m-1)+1\right)} \rceil,
\]
where $r\in [0,1]$ is a random uniform real variable. Here we used the fact that $P$ is the geometric series. Since the integer $a$ is represented using $O(\log{n})$ bits, the runtime is $\tilde{O}(1)$ per each iteration of the {\bf for} loop.
\end{proof}
Python language implementation of the quantum Mallows distribution algorithm can be found in Appendix~\ref{app:C}.  Next let us describe our algorithm for sampling the uniform distribution on the Clifford group.  We included a Python implementation in Appendix~\ref{app:D}.

\begin{figure}
{\centering
\begin{minipage}{1.0\linewidth}
\begin{algorithm}[H]
	\caption{Random $n$-qubit Clifford operator}\label{random_clifford_optimal}
	\begin{algorithmic}[1]	
	\State{Sample $h\,{\in}\, \{0,1\}^n$ and $S\,{\in}\, \calS_n$ from the quantum Mallows distribution $P_n(h,S)$}		  
	\State{Initialize $\Delta$ and $\Delta'$ by $n{\times}n$ identity matrices}
	\State{Initialize $\Gamma$ and $\Gamma'$ by $n{\times}n$ zero matrices}
	\For{$i=1$ to $n$}
	\State{Sample $b\,{\in}\, \{0,1\}$ from the uniform distribution and assign $\Gamma'_{i,i}=b$}
	\If{$h_i=1$}
	\State{Sample $b\,{\in}\, \{0,1\}$ from the uniform distribution and assign $\Gamma_{i,i}=b$}.
	\EndIf
	\EndFor
		\For{$j=1$ to $n$}
		\For{$i=j+1$ to $n$}
		\State{Sample $b\,{\in}\, \{0,1\}$ from the uniform distribution and assign $\Gamma'_{i,j}=\Gamma'_{j,i}=b$}
		\State{Sample $b\,{\in}\, \{0,1\}$ from the uniform distribution and assign $\Delta'_{i,j}=b$}
		\If{$h_i=1$ \myAND  $h_j=1$}  
		\State{Sample $b\,{\in}\, \{0,1\}$ from the uniform distribution and assign $\Gamma_{i,j}=\Gamma_{j,i}=b$}
		\EndIf
		\If{$h_i=1$ \myAND  $h_j=0$ \myAND $S(i)<S(j)$}  
		\State{Sample $b\,{\in}\, \{0,1\}$ from the uniform distribution and assign $\Gamma_{i,j}=\Gamma_{j,i}=b$}
		\EndIf
		\If{$h_i=0$ \myAND  $h_j=1$ \myAND $S(i)>S(j)$}  
		\State{Sample $b\,{\in}\, \{0,1\}$ from the uniform distribution and assign $\Gamma_{i,j}=\Gamma_{j,i}=b$}
		\EndIf
		\If{$h_i=0$ \myAND  $h_1=1$}  
		\State{Sample $b\,{\in}\, \{0,1\}$ from the uniform distribution and assign $\Delta_{i,j}=b$}
		\EndIf
		\If{$h_i=1$ \myAND  $h_j=1$ \myAND $S(i)>S(j)$}  
		\State{Sample $b\,{\in}\, \{0,1\}$ from the uniform distribution and assign $\Delta_{i,j}=b$}
		\EndIf
		\If{$h_i=0$ \myAND  $h_j=0$ \myAND $S(i)<S(j)$}  
		\State{Sample $b\,{\in}\, \{0,1\}$ from the uniform distribution and assign $\Delta_{i,j}=b$}
		\EndIf
	
		\EndFor
		\EndFor
		\State{Sample Pauli operator $O'\,{\in}\, \calP_n$ from the uniform distribution}
		\State{\textbf{return} $F(Id,\Gamma,\Delta) \cdot \left( \prod_{i=1}^n \hgate_i^{h_i} \right) S \cdot F(O',\Gamma',\Delta')$}
		\end{algorithmic}
\end{algorithm}
\end{minipage}
}
\end{figure}

We claim that this algorithm outputs a random uniformly distributed Clifford operator $U\,{\in}\, \calC_n$.  Indeed, Step~1 samples a random subset $\calB_n W\calB_n\subseteq \calC_n$ with $W$ defined in Eq.~(\ref{eq:Wagain})  with probability $|\calB_n W\calB_n|/|\calC_n|$.  All subsequent steps of the algorithm sample a random uniformly distributed operator $U\in \calB_n W\calB_n$.  Indeed, by \thm{CliffordExact}, such operator has the form $U = F(Id,\Gamma,\Delta) \cdot \left( \prod_{i=1}^n \hgate_i^{h_i} \right) S \cdot F(O',\Gamma',\Delta')$, where $\Gamma,\Gamma'$ are symmetric binary matrices, $\Delta,\Delta'$ are lower-triangular unit-diagonal matrices, such that $\Gamma,\Delta$ obey the rules C1-C5 of \thm{CliffordExact}.  These rules are enforced at Steps 14-31 of the above algorithm.

One may also use a simplified version of the algorithm where Steps~14-31 are skipped and $\Gamma_{i,j}$, $\Delta_{i,j}$ are sampled from the uniform distribution similar to Steps~12,13.  This simplified version still generates a random uniformly distributed Clifford operator but it is not optimal in terms of the number of random bits.

\section{Clifford circuit reduction}
\label{sec:er}

In this section we explore modifications of the canonical form established in \thm{CliffordExact} for reducing the gate count in Clifford circuits. 
 
We first focus on the following problem: given a Clifford operation defined by the circuit $C$, and applied to an unknown computational input state $\ket{x}$, construct as small of a Clifford circuit $D$ as possible such that the state $D(C(\ket{x}))$ is a computational basis state with possible phases.  Such a circuit $D$ removes the observable quantumness, including both entanglement and superposition, introduced by the circuit $C$.  It is beneficial to use the circuit $D$ in the last step before the measurement in randomized benchmarking protocols \cite{emerson2005scalable}, when it is shorter than $C^{-1}$.  Naturally, $D{:=}C^{-1}$ accomplishes the goal, however, a shorter circuit may exist.  In particular, as follows from \lem{lemCXCZCZ}, a Clifford circuit/state can be ``unentangled'' with the computation -H-CZ-H-P- at the effective two-qubit gate depth of $2n{+}2$ over Linear Nearest Neighbor architecture \cite[Theorem 6]{maslov2018shorter}.

\begin{lemma}\label{lem:ESRemoval}
Suppose $C$ is a Clifford operator such that the canonical form
of $C$ contains $k$ Hadamard gates. There exists a Clifford circuit $D$ with at most $nk-\frac{k(k+1)}{2}$ gates such that $CD \,{\in}\, \calF_n$.
\end{lemma}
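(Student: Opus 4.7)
The plan is to combine the canonical form of Theorem~\ref{thm:CliffordExact} with an inductive disentangling procedure in the spirit of Lemmas~\ref{lem:algo2}--\ref{lem:algo3}. Write $C$ in its canonical form $C = F_1 H_T S F_2$ with $|T|=k$ and $F_1,F_2\in\calB_n$. Since $F_1\in\calF_n$ and $\calF_n$ is a group, the condition $CD\in\calF_n$ is equivalent to $D$ lying in the right coset $F_2^{-1}S^{-1}H_T\calF_n$, so the task reduces to exhibiting a representative of this coset built from few two-qubit gates.

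The geometric picture guiding the construction is that each state $C^{-1}|y\rangle$ is a stabilizer state whose support in the computational basis is an affine subspace of $\{0,1\}^n$ of dimension exactly $k$: the operator $F_1^{-1}|y\rangle$ is a basis vector (up to phase), $H_T$ creates a uniform superposition only on the qubits in $T$, and the subsequent Hadamard-free layers $S^{-1}$ and $F_2^{-1}$ permute basis states with phases. The condition $CD\in\calF_n$ then says precisely that $D$ implements a Clifford change of basis between the computational basis and the ``$k$-quantum'' basis $\{C^{-1}|y\rangle\}_y$. Only $k$ qubits carry genuine superposition in this basis, and this is what allows the two-qubit gate count to scale as $nk$ rather than $n^2$.

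The construction proceeds by induction on $k$, reducing the $(n,k)$ instance to $(n{-}1,k{-}1)$ at each step. The base case $k=0$ is immediate with $D=I$. For the inductive step, pick a qubit $j_1\in T$ and, mirroring \lem{algo3}, construct a Clifford $D_1$ containing at most $n-1$ two-qubit gates together with single-qubit gates (including one Hadamard on $j_1$) such that $CD_1 = U_{j_1}\otimes C'$ factorizes as a tensor product, where $U_{j_1}$ is a single-qubit Hadamard-free Clifford on $j_1$ and $C'$ is a Clifford on the remaining $n-1$ qubits whose canonical form contains exactly $k-1$ Hadamards. Concretely, at the level of the stabilizer tableau, the upper-right block $B_C$ of the tableau of $C$ has rank equal to $k$, and with at most $n-1$ CNOT/CZ gates (one per remaining qubit) we cancel the $j_1$-row of this rank-$k$ block; a Hadamard on $j_1$ then removes the residual $X_{j_1}$ contribution and decouples $j_1$ entirely.

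Applying the inductive hypothesis to $C'$ yields $D_2$ acting trivially on $j_1$ with at most $(n-1)(k-1) - \frac{(k-1)k}{2}$ two-qubit gates such that $C'D_2\in\calF_{n-1}$. Setting $D = D_1 D_2$ then gives $CD\in\calF_n$ with a total of
\[
(n-1) + (n-1)(k-1) - \frac{(k-1)k}{2} = nk - \frac{k(k+1)}{2}
\]
two-qubit gates, as required. The main obstacle is establishing the $n-1$ bound for a single disentangling step: one must verify that the row and column entries of the tableau associated with qubit $j_1$ can be simultaneously cleared by gates whose ``other endpoint'' ranges over the remaining $n-1$ qubits, exploiting the sparsity in the matrices $\Gamma$ and $\Delta$ imposed by the constraints C1--C5 of Theorem~\ref{thm:CliffordExact}. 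This sparsity is exactly what ensures that the $k$ rank-one contributions to $B_C$ coming from the $k$ Hadamards in $H_T$ can be peeled off one at a time using local Gaussian-elimination-style operations.
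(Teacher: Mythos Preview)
Your proposal is a plan rather than a proof: the inductive step is asserted but never established, and you acknowledge this yourself (``The main obstacle is establishing the $n-1$ bound\ldots one must verify that\ldots''). Concretely, there are three unproven claims bundled into that step: (i) that $CD_1$ tensor-factorizes as $U_{j_1}\otimes C'$; (ii) that $D_1$ can be built with at most $n-1$ two-qubit gates; and (iii) that $C'$ has exactly $k-1$ Hadamards in its canonical form. Your appeal to Lemmas~\ref{lem:algo2}--\ref{lem:algo3} is also off-target: those lemmas apply Borel operators on \emph{both} sides of $U$, whereas here $D_1$ multiplies only on the right, so the disentangling mechanism does not transfer directly. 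The tableau heuristic (``cancel the $j_1$-row of the rank-$k$ block with one gate per remaining qubit'') is suggestive but does not by itself give the full tensor factorization, which requires simultaneously decoupling qubit $j_1$ in all four blocks of the tableau.

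The paper's argument is entirely different and non-inductive. It writes $C = f_2\cdot \hgate^k\cdot f_1$ with $f_1\in\calB_n$ satisfying C1--C5 (after permuting the Hadamards to the top $k$ qubits), sets $D = f_1^{-1}\hgate^k$, and then shrinks $f_1$ by pushing as many gates as possible leftward through $\hgate^k$ into $f_2$, exploiting that $f_2$ need only lie in $\calF_n$, not $\calB_n$. The constraints C1--C5 force $\Delta$ to have a block-lower-triangular shape $\left[\begin{smallmatrix}\Delta_{k\times k} & 0\\ * & \Delta_{(n-k)\times(n-k)}\end{smallmatrix}\right]$; the block-diagonal part passes through $\hgate^k$ (conjugation by $\hgate^{\otimes k}$ sends $\cnotgate$ to $\cnotgate$ with control/target swapped), so only the $k(n-k)$ off-block CNOTs survive in the reduced $f_1$. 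After rewriting the diagonal layer as -CZ-P-CX-, every $\czgate$ touching a bottom qubit becomes a $\cxgate$ under $\hgate^k$-conjugation and is absorbed into $f_2$, leaving at most $\binom{k}{2}$ $\czgate$ gates on the top $k$ qubits. The total $k(n-k)+\binom{k}{2}=nk-\tfrac{k(k+1)}{2}$ then drops out directly, with no induction and no per-qubit disentangling.

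If you want to rescue the inductive route, you would need to prove the single-step claim from scratch; the cleanest way is probably to \emph{derive} it from the paper's explicit $D$ by isolating the gates incident to one fixed top qubit, but at that point you are essentially redoing the direct argument.
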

To illustrate the advantage, suppose $n\,{=}\,5$. Then  $\max_{\{k\}}{nk-\frac{k(k+1)}{2}}=10$ (and in fact, $9$, if -CZ- counts are taken from \cite[Table 1]{maslov2018shorter}), whereas the maximal number of the two-qubit gates required by a $5$-qubit Clifford circuit is $12$.  
\begin{proof}
First, consider the canonical form $C = B_2{\cdot}SH {\cdot}B_1 = f_2\cdot\hgate^k\cdot f_1$, where $f_1\,{:=}\,B_1$, and $f_2$, that combines $B_2$, $S$, and a permutation of qubits that maps $H$ into $\hgate^k$, are Hadamard-free, and $\hgate^k$ is a layer of Hadamard gates on the top $k$ qubits. The circuit $f_1$ already contains a subset of gates from the canonical decomposition \eq{Borel1}, due to restrictions C1-C5 in \thm{CliffordExact}.  However, more gates can be moved into $f_2$ by observing that $f_2$ need not be restricted to contain only $\cnotgate^\downarrow$ gates. Once this is accomplished, the desired $D$ is defined as $f_1^{-1}\hgate^k$.  Clearly, $CD = f_2\cdot\hgate^k\cdot f_1f_1^{-1}\hgate^k = f_2$ is Hadamard-free, and the number of two-qubit gates in $D$ equals that in the reduced $f_1$.

The reduction of the number of gates in $f_1=F(Id,\Gamma,\Delta)$ takes four steps:
\begin{enumerate}
\item $\notgate$, $\zgate$ removal. Per \thm{CliffordExact}, the canonical decomposition \eq{Borel1} already contains no Pauli gates.
\item $\cnotgate$ removal. Write the $\cnotgate$ part of the computation as an invertible Boolean $n{\times}n$ matrix $\Delta$. The matrix $\Delta$ has the following block structure, 

\[
\Delta=
\begin{bmatrix}
\Delta_{k{\times}k} & 0 \\
\Delta_{k{\times}(n{-}k)} & \Delta_{(n{-}k){\times}(n{-}k)}
\end{bmatrix}.
\]
To implement the reversible linear transformation given by the matrix $\Delta$ we first find a sequence of gates 
\begin{eqnarray}\label{eq:abcdtoab0d}
g_1g_2...g_s: 
\begin{bmatrix}
\Delta_{k{\times}k} & 0 \\
\Delta_{k{\times}(n{-}k)} & \Delta_{(n{-}k){\times}(n{-}k)}
\end{bmatrix} 
\nonumber \\
\mapsto 
\begin{bmatrix}
\Delta_{k{\times}k}^\prime & 0 \\
0 & \Delta_{(n{-}k){\times}(n{-}k)}^\prime
\end{bmatrix},
\end{eqnarray}
and then a circuit $F^k_{\cnotgate}$ that finishes the implementation by fully diagonalizing the matrix $\Delta$.  The overall $\cnotgate$ circuit implementing $\Delta$ is obtained by the concatenation, $F^k_{\cnotgate}\cdot g_sg_{s-1}...g_1$.  Note that this means that we can keep only the $g_sg_{s-1}...g_1$ piece in $f_1$ and merge $F^k_{\cnotgate}$ with $f_2$.  The number of the $\cnotgate$ gates sufficient to preform the mapping in \eq{abcdtoab0d} is upper bounded by the number of non-zero matrix elements, which is at most $k(n{-}k)$. 

\item $\czgate$ removal. To reduce $\czgate$ gates, rewrite the three-stage circuit -CX-CZ-P- with $\cnotgate$-, $\czgate$-, and $\pgate$-gate stages as the three stage computation -CZ-P-CX- using phase polynomials \cite{maslov2018shorter}.  Observe that such transformation does not change the linear reversible circuit in the stage -CX-, and thus the number of gates in it remains minimized.  This transformation exposes $\czgate$ gates, all of which commute, on the left and allows merging them into the circuit $f_2$.  The only gates that remain in $f_1$ are those operating on the top $k$ qubits, of which there are at most $\frac{k(k-1)}{2}$.
\item $\pgate$ removal. Finally, to reduce $\pgate$ gates, move them through the remaining $\czgate$ gates (recalling that all diagonal gates commute) and cancel all but possibly top $k$ Phase gates.
\end{enumerate}  
The reduced $f_1$ contains no more than $k(n{-}k)$ $\cnotgate$ gates, no more than $\frac{k(k-1)}{2}$ $\czgate$ gates (and no more than $k$ $\pgate$ gates), proving lemma. 
\end{proof}

\begin{lemma}\label{lem:lemCXCZCZ}
An arbitrary Clifford circuit can be decomposed into stages -X-Z-P-CX-CZ-H-CZ-H-P-.
\end{lemma}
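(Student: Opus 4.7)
The strategy is to derive the nine-stage layered form from the canonical expression of \thm{CliffordExact} by unfolding each Hadamard-free factor into its $-\cxgate-\czgate-\pgate-$ constituents and then applying several well-known circuit identities. By \thm{CliffordExact} we may write $U = F(Id,\Gamma,\Delta)\cdot (\prod_{i=1}^n \hgate_i^{h_i}) S\cdot F(O',\Gamma',\Delta')$; using \eq{Borel1} to expand both Borel factors, this becomes the time-ordered circuit
\[
-\cxgate-\czgate-\pgate-O'-S-\hgate_K-\cxgate-\czgate-\pgate-,
\]
with $\hgate_K$ the Hadamard layer on $K=\{i:h_i=1\}$.

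Three routine preprocessing moves follow. First, the Pauli $O'$ is conjugated leftward through the leading $-\cxgate-\czgate-\pgate-$ (since any Clifford conjugate of a Pauli is a Pauli), producing a new Pauli layer at the front that splits as $-\xgate-\zgate-$. Second, the permutation $S$ is pushed rightward using $S\hgate_K = \hgate_{S(K)}S$ and the analogous qubit-relabeling identities for $\cxgate$, $\czgate$, $\pgate$. Third, the residual trailing $S$ is absorbed into the nearest $-\cxgate-$ stage through the decomposition $\swapgate = 3\,\cxgate$. These three moves produce the intermediate nine-stage circuit $-\xgate-\zgate-\cxgate-\czgate-\pgate-\hgate-\cxgate-\czgate-\pgate-$.

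The reshaping into the claimed target form proceeds via two rewrites. The leftmost Hadamard-free block $-\cxgate-\czgate-\pgate-$ is recast as $-\pgate-\cxgate-\czgate-$ by invoking \prop{mult}, supplemented by the commutation $\pgate_k\cxgate_{j,k}=\cxgate_{j,k}\czgate_{j,k}\pgate_j\pgate_k$ derived from $\pgate\xgate = i\xgate\zgate\pgate$. The rightmost $-\cxgate-$ stage is then rewritten using $\cxgate_{i,j}=\hgate_j\czgate_{i,j}\hgate_j$, producing an $-\hgate-\czgate-\hgate-$ sandwich; one of the new $\hgate$ layers merges with the pre-existing middle $\hgate_K$ (consecutive Hadamard layers combine into a single layer on the symmetric difference of their subsets), and the other becomes the second Hadamard stage of the target form, yielding $-\xgate-\zgate-\pgate-\cxgate-\czgate-\hgate-\czgate-\hgate-\pgate-$.

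The main obstacle is the second rewrite: because the CX gates in $F_1$ can involve controls and targets in or outside $K$ in three distinct ways (as allowed by constraints C3--C5 of \thm{CliffordExact}), the $\cxgate\mapsto \hgate\czgate\hgate$ substitution has to be carried out in parallel with the appropriate commutations through $\hgate_K$---using $\hgate_j\cxgate_{i,j}\hgate_j=\czgate_{i,j}$ when the target lies in $K$ and $\hgate_i\hgate_j\cxgate_{i,j}\hgate_i\hgate_j=\cxgate_{j,i}$ when both endpoints lie in $K$---so that the Hadamard layers align properly and no stray $-\czgate-$ or $-\cxgate-$ stages are left behind. The structural constraints C3--C5 are exactly what make these alignments clean, and verifying them is where the detailed bookkeeping of the proof lies.
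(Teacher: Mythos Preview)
Your outline is on the right track up to the intermediate form $-\xgate-\zgate-\pgate-\cxgate-\czgate-\hgate_K-\cxgate_1-\czgate_1-\pgate_1-$, but the second rewrite does not go through as you describe, and the appeal to C3--C5 does not rescue it.

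Concretely, after you replace the right-hand $\cxgate_1$ layer by $\hgate'\!\cdot\!\czgate'\!\cdot\!\hgate'$ and merge the left $\hgate'$ with $\hgate_K$, you are left with
\[
-\xgate-\zgate-\pgate-\cxgate-\czgate-\hgate-\czgate'-\hgate'-\czgate_1-\pgate_1-,
\]
which has \emph{two} $\czgate$ layers to the right of the first Hadamard, separated by $\hgate'$. To reach the target form you must commute $\hgate'$ past $\czgate_1$ and merge $\czgate_1$ into $\czgate'$. But C1--C2 allow $\czgate_1$ to contain gates $\czgate_{i,j}$ with $i\in K$ and $j\notin K$, and your newly introduced $\hgate'$ lives on targets of $\cxgate_1$, which (by C3--C5) include qubits outside $K$; such a $\czgate_{i,j}$ and $\hgate'_j$ do not commute. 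A minimal instance: $n=2$, $h=(1,0)$, $S=Id$, $F_1=\pgate_1\czgate_{12}\cxgate_{12}$. All of C1--C5 hold, yet $F_1\hgate_1$ expands to $-\hgate_{\{1,2\}}-\czgate_{12}-\hgate_2-\czgate_{12}-\pgate_1-$, one $\czgate$ layer too many, and $\hgate_2$ does not commute with $\czgate_{12}$.

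There is a second, related issue with the $\cxgate_1$ layer itself: C3--C5 permit $\cxgate$ gates with both endpoints in $K$ (type (b)) and both endpoints outside $K$ (type (a)). Your two quoted identities handle type (b) by producing a reversed $\cxgate$ (not a $\czgate$) and do not address type (a) at all; neither can be absorbed into a single $-\czgate-$ stage.

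The paper's proof avoids both problems by inserting, before the Hadamard rewrite, the reduction step borrowed from \lem{ESRemoval}: one pushes the block-diagonal part of $\Delta$ and all $\czgate$/$\pgate$ gates touching a qubit outside $K$ leftward through $\hgate_K$ into $f_2$ (they emerge as $\cxgate$ or $\czgate$ gates there). After this reduction the residual $f_1$ has its $\cxgate$ stage consisting only of gates with control in $K$ and target outside $K$, and its $\czgate$ stage supported entirely on $K$. Now the $\cxgate\to\hgate\czgate\hgate$ substitution introduces Hadamards only outside $K$, which commute with the (top-$K$) $\czgate$ layer, and the merge goes through cleanly. Your proposal is missing exactly this reduction; C3--C5 alone do not provide it.
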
 
\begin{proof}
We start with the decomposition $f_2 \cdot \hgate^k \cdot f_1$, where $f_2,f_1 \in \calF_n$, and perform a slightly different reduction of $f_1$ than the one described above. Specifically, we first write the layered expression as -X-Z-P-CX-CZ-H-C$_1$-CZ-P- (where a -P- layer may contain only first and third powers of $\pgate$).  As discussed in the previous proof, the stage -C$_1$- can be implemented by the $\cnotgate$ gates with controls on the top $k$ qubits and targets on the bottom $n{-}k$ qubits.  We next write -X-Z-P-CX-CZ-H-C$_1$-CZ-P- as -X-P-CX-CZ-H-CZ$_1$-C$_1$-P-, where the reduced stage -CZ$_1$- applies $\czgate$ gates to the top $k$ qubits.  Recall that changing the order of -CX- and -CZ- stages does not change the -CX- stage.  Write the stage -C$_1$- as -H$_1$-CZ$_2$-H$_1$- to obtain the decomposition of the form -X-Z-P-CX-CZ-H-CZ$_1$-H$_1$-CZ$_2$-H$_1$-P-. Observe that the Hadamard gates in -H$_1$- operate on the bottom $n{-}k$ qubits, and recall that $\czgate$ gates in -CZ$_1$- stage operate on the top $k$ qubits.  Thus, these two stages can be commuted to obtain -X-Z-P-CX-CZ-H-H$_1$-CZ$_1$-CZ$_2$-H$_1$-P-, that is equal to -X-Z-P-CX-CZ-H-CZ$_1$-CZ$_2$-H$_1$-P- once the Hadamard gate stages are merged, and further reduces to -X-Z-P-CX-CZ-H-CZ-H$_1$-P- by combining the neighboring -CZ- stages.  This obtains the desired expression.  
\end{proof}

We remark that the decomposition in \lem{lemCXCZCZ} has only three two-qubit gate stages.  It is similar to and refines the one reported in \cite{duncan2019graph}. This decomposition can be used to implement arbitrary Clifford operation in the two-qubit gate depth $9n$ in the Linear Nearest Neighbor architecture by rewriting it as -X-P-C$\widehat{\text{-CZ-}}$H$\widehat{\text{-CZ-}}$H-P-, where $\widehat{\text{-CZ-}}$ is -CZ- plus qubit order reversal, similar to how it is done in \cite[Corollary 7]{maslov2018shorter} and improving the previously known upper bound of $14n{-}4$.

Further reductions can be obtained by employing synthesis algorithms that exploit the structure better than the naive algorithms do, applying local optimizations, and looking up the implementation in the meet-in-the-middle style optimal synthesis approach for the elements of $\calF_n$.  Note that since the circuits implementing the reduced stage $f_1$ are small at the offset, chances are their optimal implementations have a less than average cost, and thus they may be found even by an incomplete meet-in-the-middle algorithm incapable of finding an optimal implementation of arbitrary elements of $\calF_n$.

\section{Quantum advantage for CNOT circuits}
\label{sec:CNOT}

In \sec{main} we studied efficient decompositions of the Clifford group into layers of Hadamard gates and $\hgate$-free circuits.  In our constructions, the $\hgate$-free circuits were expressed efficiently using combinations of layers with $\notgate$, $\pgate$, $\czgate$, and $\cnotgate$ gates (sometimes, $\zgate$ and $\swapgate$ gates were also used). One may ask if Hadamard gates can themselves be employed to obtain more efficient implementations of the $\hgate$-free transformations?  Surprisingly, the answer turns out to be ``yes''; this Section is devoted to the exploration of the efficient use of Hadamard gates in the implementation of $\hgate$-free operations.  Specifically, we focus on linear reversible circuits.  We show that so long as one is concerned with the entangling gate count (considering only $\cnotgate$ and $\czgate$ gates), a linear reversible function may be implemented more efficiently as a Clifford circuit as opposed to a circuit relying on the $\cnotgate$ gates, and prove two lemmas giving rise to two algorithms for optimizing the number of two-qubit gates in the $\cnotgate$ circuits.  Our result implies that quantum computations by Clifford circuits are more efficient than classical computations by reversible $\cnotgate$ circuits and computations of the $\calF_n$ by $\notgate$, $\pgate$, $\czgate$, and $\cnotgate$ circuits---see Example \ref{ex:1} below for explicit construction.

Given a bit string $a\,{\in}\, \FF_2^n$ let $H(a)$ be the product of Hadamards over all qubits  $j$ with $a_j{=}1$.  Suppose $W$ is a CNOT circuit on $n$ qubits and $a,b\in \FF_2^n$.  Here we derive necessary and sufficient conditions under which $H(b)WH(a) \in \calF_n$, i.e., it is $\hgate$-free.  First, define a linear subspace (vector columns)
\[
\calL(a) := \{ x\in \FF_2^n: \, \mathrm{Supp}(x)\subseteq \mathrm{Supp}(a)\}.
\]
\begin{lemma}\label{lem:lem3}
Suppose $W\,{=}\,\sum_x  |Ux\ra\la x|$ for some binary invertible matrix $U\,{\in}\, \mathrm{GL}(n)$.  The operator \linebreak $H(b)WH(a) \in \calF_n$ if and only if $U\cdot \calL(a) = \calL(b)$.
\end{lemma}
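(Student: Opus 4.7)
The plan is to exploit the following characterization of $\calF_n$ that follows immediately from \eq{Hfree1}: a Clifford operator $F$ is $\hgate$-free if and only if $F^{-1} \zgate_j F$ is a $\zgate$-type Pauli (a signed product of $\zgate$'s and identities) for every $j\,{\in}\,\irange{1}{n}$. One direction is clear from the form $F|x\ra = i^{x^T\Gamma x} O|\Delta x\ra$; the other direction follows because an operator normalizing the group of diagonal Paulis must permute computational basis vectors up to phases. Since $H(b) W H(a)$ is manifestly Clifford, the only thing to check is the $\zgate$-type condition on each $(H(b)WH(a))^{-1} \zgate_j (H(b)WH(a))$.

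First I would record the standard conjugation rules. Writing $W|x\ra = |Ux\ra$, a direct calculation gives
\[
W^{-1} \zgate_j W = \zgate^{U^T e_j}, \qquad W^{-1} \xgate_j W = \xgate^{U^{-1} e_j},
\]
where $\zgate^{v}|x\ra = (-1)^{v\cdot x}|x\ra$ and $\xgate^{v}|x\ra = |x+v\ra$. Conjugation by $H(b)$ acts as $\hgate_j \zgate_j \hgate_j = \xgate_j$ on qubits $j$ with $b_j=1$ and trivially on the rest, so the computation
$(H(b)WH(a))^{-1}\zgate_j(H(b)WH(a)) = H(a) W^{-1} H(b) \zgate_j H(b) W H(a)$
splits into two cases.

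For $b_j=0$, the expression reduces to $H(a) \zgate^{U^T e_j} H(a)$. Since $H(a)$ flips $\zgate_k$ and $\xgate_k$ exactly on qubits $k\,{\in}\,\mathrm{Supp}(a)$, this is $\zgate$-type iff the support of $U^T e_j$ (the $j$-th row of $U$) is disjoint from $\mathrm{Supp}(a)$. Running over all $j$ with $b_j=0$, this says precisely that $(Ux)_j=0$ for every $x\in \calL(a)$ and every $j\notin \mathrm{Supp}(b)$, i.e., $U\calL(a) \subseteq \calL(b)$. For $b_j=1$, the expression reduces to $H(a) \xgate^{U^{-1}e_j} H(a)$, which is $\zgate$-type iff $\mathrm{Supp}(U^{-1} e_j)\subseteq \mathrm{Supp}(a)$, i.e., $U^{-1}e_j\in \calL(a)$. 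Running over $j\in \mathrm{Supp}(b)$, this is equivalent to $U^{-1}\calL(b) \subseteq \calL(a)$, or equivalently $\calL(b)\subseteq U\calL(a)$.

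Thus $H(b)WH(a)\in \calF_n$ is equivalent to the pair of inclusions $U\calL(a)\subseteq \calL(b)$ and $\calL(b)\subseteq U\calL(a)$, which together give $U\calL(a)=\calL(b)$. The only real care needed is in keeping transposes straight in the CNOT conjugation rules and verifying the simple fact that conjugation by $H(a)$ preserves $\zgate$-type Paulis iff the exponent is supported outside $\mathrm{Supp}(a)$ (and preserves $\xgate$-type Paulis by turning them into $\zgate$-type iff the exponent is supported inside $\mathrm{Supp}(a)$); I do not expect a serious obstacle beyond this bookkeeping.
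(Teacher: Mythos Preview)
Your proof is correct and complete; the bookkeeping with transposes and supports all checks out, and your characterization of $\calF_n$ via preservation of $\zgate$-type Paulis under conjugation is valid.

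The paper, however, takes a different route. Rather than working in the Heisenberg picture, it works in the Schr\"odinger picture: it observes that a Clifford operator lies in $\calF_n$ if and only if it sends $|0^n\rangle$ to a computational basis vector (this follows from the stabilizer argument, since the stabilizer of $H(b)WH(a)|0^n\rangle$ is generated by the conjugates of the $\zgate_j$'s). The paper then computes $WH(a)|0^n\rangle$ explicitly as the uniform superposition over the subspace $U\cdot\calL(a)$, and argues via amplitude matching that $H(b)$ applied to this state is a basis vector iff $U\cdot\calL(a)=\calL(b)$. Your approach has the advantage that the two inclusions $U\calL(a)\subseteq\calL(b)$ and $\calL(b)\subseteq U\calL(a)$ fall out directly and symmetrically from the cases $b_j=0$ and $b_j=1$, with no amplitude analysis required; the paper's approach is more concrete about what the state actually looks like, which is perhaps more illuminating if one wants to visualize why the condition is necessary. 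The two arguments are dual to each other via the stabilizer formalism, but yours is arguably cleaner.
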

\begin{proof}
Define a state 
\[
|\psi\ra = H(b)WH(a)|0^n\ra.
\]
One can easily check that $H(b)WH(a) \in \calF_n$ iff $|\psi\ra$ is proportional to a basis vector. From $U\cdot \calL(a) = \calL(b)$ one gets $|\psi\ra =|0^n\ra$, that implies $H(b)WH(a) \in \calF_n$.

Conversely, suppose $H(b)WH(a)  \in \calF_n$, that is, $|\psi\ra = e^{k\pi i/2}|c\ra$ for some $c\,{\in}\, \FF_2^n$ and $k \,{\in}\, \{0,1,2,3\}$. Then
\begin{eqnarray*}
WH(a)|0^n\ra \sim H(b)|c\ra = H(b) X(c) |0^n\ra \\
=  Z(b\cap c) X(c\setminus b) H(b) |0^n\ra.
\end{eqnarray*}
Define a linear subspace $\calM = U\cdot \calL(a)$. Then
\[
\sum_{x\in \calM} |x\ra \sim Z(b\cap c) X(c\setminus b) \sum_{x\in \calL(b)} |x\ra.
\]
Let lhs and rhs be the left- and right-hand sides in this equation. The lhs has a non-zero amplitude for $|0^n\ra$ and so must the rhs. Thus $c\setminus b\in \calL(b)$ and we can ignore the Pauli-$X(c\setminus b)$. All amplitudes of the lhs have the same sign and so must amplitudes of the rhs. This is only possible if $b\cap c \in \calL(b)^\perp$ and we can ignore the Pauli-$Z(b\cap c)$.  Thus
\[
\sum_{x\in \calM} |x\ra \sim \sum_{x\in \calL(b)} |x\ra,
\]
which is only possible if $\calM=\calL(b)$, that is, $U\cdot \calL(a)=\calL(b)$.
\end{proof}

Based on \lem{lem3}, we formulate the following algorithm that optimizes the number of two-qubit gates in a $\cnotgate$ circuit. Given the $\cnotgate$ circuit $W$, check whether multiplying $W$ on the left and on the right by Hadamard gates on some subsets of qubits gives a transformation $\tilde{W}{:=}H(b)WH(a) \in \calF_n$.  Rely on a compiler for the elements of $\calF_n$, such as an optimal compiler for small $n$, to generate an efficient implementation of $\tilde{W}$.  If the number of two-qubit gates used by $\tilde{W}$ is smaller than that in the $W$, replace $W$ with $H(b)\tilde{W} H(a)$, leading to the two-qubit gate count reduction. 

Next let us state a sufficient condition under which $H(b)WH(a)=W$.
\begin{lemma}\label{lem:lem4}
Suppose $W=\sum_x  |Ux\ra\la x|$ for some binary invertible matrix $U\,{\in}\, \mathrm{GL}(n)$.  Suppose that $U\cdot \calL(a) = \calL(b)$ and $Ux=(U^{-1})^Tx$ for any $x\,{\in}\, \calL(a)$. Then, $H(b)WH(a)=W$.
\end{lemma}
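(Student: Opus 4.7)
My plan is to check the identity on computational basis states, i.e., to show $WH(a)|x\rangle = H(b)|Ux\rangle$ for every $x\in\FF_2^n$; since $H(b)^2 = Id$, this is equivalent to $H(b)WH(a)=W$. For any $c\in\FF_2^n$, let $y_c$ and $y_{\bar c}$ denote the bit strings obtained from $y$ by zeroing the coordinates outside $\mathrm{Supp}(c)$ and $\mathrm{Supp}(\bar c)$, so that $y=y_c+y_{\bar c}$. The standard Hadamard expansion
\[
H(c)|y\rangle = 2^{-|c|/2}\sum_{w\in\calL(c)}(-1)^{y\cdot w}\,|y_{\bar c}+w\rangle
\]
together with $W|z\rangle=|Uz\rangle$ yields
\[
WH(a)|x\rangle = 2^{-|a|/2}\sum_{w\in\calL(a)}(-1)^{x\cdot w}\,|Ux_{\bar a}+Uw\rangle,
\]
\[
H(b)|Ux\rangle = 2^{-|b|/2}\sum_{z\in\calL(b)}(-1)^{Ux\cdot z}\,|(Ux)_{\bar b}+z\rangle.
\]
Hypothesis 1 gives $|a|=|b|$ and lets me change variables via $z=Uw$ in the first sum, so $w=U^{-1}z$ and $z$ ranges over $\calL(b)$. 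Matching the two sums term by term reduces the lemma to (i) $Ux_{\bar a}=(Ux)_{\bar b}$ and (ii) $x\cdot U^{-1}z = Ux\cdot z$ for every $z\in\calL(b)$.

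The heart of the proof is transferring hypothesis 2 from $\calL(a)$ to the orthogonal complement $\calL(\bar a)$. To this end I introduce the symmetric matrix $V=U^TU$; hypothesis 2 rearranges to $Vy=y$ for every $y\in\calL(a)$. A symmetric map that is the identity on a subspace automatically preserves the orthogonal complement: for $y'\in\calL(\bar a)=\calL(a)^\perp$ and $y\in\calL(a)$, one computes $\langle Vy',y\rangle=\langle y',Vy\rangle=\langle y',y\rangle=0$. Consequently $\langle Uy',Uy\rangle=\langle Vy',y\rangle=0$ for all $y\in\calL(a)$, so $Uy'\perp U\calL(a)=\calL(b)$, i.e., $U\calL(\bar a)\subseteq\calL(\bar b)$. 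Combined with hypothesis 1, the decomposition $Ux=Ux_a+Ux_{\bar a}$ has its two summands in $\calL(b)$ and $\calL(\bar b)$ respectively, hence $(Ux)_b = Ux_a$ and $(Ux)_{\bar b} = Ux_{\bar a}$, which settles (i).

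For the phase condition (ii), note that $U^{-1}z\in U^{-1}\calL(b)=\calL(a)$ has support in $\mathrm{Supp}(a)$, so $x\cdot U^{-1}z = x_a\cdot U^{-1}z = ((U^{-1})^Tx_a)\cdot z$. Applying hypothesis 2 to $x_a\in\calL(a)$ rewrites this as $(Ux_a)\cdot z$, and the identity $(Ux)_b=Ux_a$ from the previous step together with $z\in\calL(b)$ gives $(Ux_a)\cdot z = (Ux)_b\cdot z = Ux\cdot z$. Thus the two sums agree term by term. The main obstacle is the symmetric-matrix observation in the second paragraph; once $U$ is shown to respect the splitting $\FF_2^n=\calL(a)\oplus\calL(\bar a)$ in this dual way, the remaining identifications are routine bookkeeping between supports and their complements.
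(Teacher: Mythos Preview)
Your proof is correct and takes a genuinely different route from the paper. The paper works in the Heisenberg picture: it shows that $\tilde{W}:=H(b)WH(a)$ and $W$ conjugate every Pauli $X(\gamma)$ and $Z(\gamma)$ identically, splitting $\gamma=\gamma_{\mathrm{in}}+\gamma_{\mathrm{out}}$ according to the support of $a$ and tracking how Hadamards swap $X\leftrightarrow Z$ on the relevant qubits. You instead work in the Schr\"odinger picture, expanding $WH(a)|x\rangle$ and $H(b)|Ux\rangle$ explicitly and matching them term by term after the change of variables $z=Uw$. Both arguments hinge on the same structural fact, namely $U\calL(\bar a)\subseteq\calL(\bar b)$; the paper obtains it by writing an arbitrary $\beta\in\calL(b)$ as $(U^{-1})^T\delta$ for some $\delta\in\calL(a)$ and computing $\beta^T U\gamma_{\mathrm{out}}=\delta^T\gamma_{\mathrm{out}}=0$, while you package the same computation via the symmetric matrix $V=U^TU$ and the observation that a symmetric map fixing a subspace preserves its orthogonal complement. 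Your approach has the minor advantage of yielding exact equality $H(b)WH(a)=W$ directly at the level of matrix entries, whereas the Heisenberg argument a priori determines a Clifford operator only up to a global phase (though the phase is easily seen to be $+1$ by evaluating both sides on $|0^n\rangle$). The paper's approach, on the other hand, makes the role of the two hypotheses more transparent gate by gate and connects naturally to the stabilizer-tableaux viewpoint used elsewhere in the paper.
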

\begin{proof}
Define $U^*:= (U^{-1})^T$ and $\tilde{W}:= H(b)WH(a)$. We use the following well-known identities:
\begin{eqnarray} \label{PauliProp}
WX(\gamma)W^{-1} = X(U\gamma) \,\; \mbox{and} \,\;
WZ(\gamma)W^{-1} = Z(U^*\gamma) \nonumber \\
\mbox{for all $\gamma\,{\in}\, \FF_2^n$}.
\end{eqnarray}
Write $X(\gamma)=X(\gamma_{\mathrm{in}})X(\gamma_{\mathrm{out}})$, where $\gamma_{\mathrm{in}}=\gamma \cap a$ and $\gamma_{\mathrm{out}}=\gamma \setminus a$. Then,
\begin{eqnarray}
\label{eq1}
\tilde{W} X(\gamma_{\mathrm{in}}) \tilde{W}^{-1}
= H(b) W Z(\gamma_{\mathrm{in}})W^{-1} H(b) \nonumber \\
=H(b) Z(U^*\gamma_{\mathrm{in}}) H(b) 
= X(U^*\gamma_{\mathrm{in}})=X(U\gamma_{\mathrm{in}}).
\end{eqnarray}
Here we used Eq.~(\ref{PauliProp}) and noted that $U^*\gamma_{\mathrm{in}}\in \calL(b)$.  Likewise,
\begin{eqnarray}
\label{eq2}
\tilde{W} X(\gamma_{\mathrm{out}}) \tilde{W}^{-1}
= H(b) W X(\gamma_{\mathrm{out}})W^{-1} H(b) \nonumber \\
=H(b) X(U\gamma_{\mathrm{out}}) H(b).
\end{eqnarray}
We claim that $U\gamma_{\mathrm{out}}$ and $b$ have non-overlapping supports. Indeed, pick any vector $\beta\,{\in}\, \calL(b)$. By assumption, $\beta =U^* \delta$ for some $\delta\,{\in}\, \calL(a)$. Thus,
\[
\beta^T U \gamma_{\mathrm{out}}=(U^*\delta)^T U \gamma_{\mathrm{out}}
= \delta^T U^{-1} U \gamma_{\mathrm{out}} 
= \delta^T \gamma_{\mathrm{out}}=0,
\]
since $\delta$ and $\gamma_{\mathrm{out}}$ have non-overlapping supports.  This shows that $U\gamma_{\mathrm{out}}$  is orthogonal to any vector in $\calL(b)$, that is,  $U\gamma_{\mathrm{out}}$ and $b$ have non-overlapping supports.  Thus $X(U\gamma_{\mathrm{out}})$ commutes with $H(b)$ and Eq.~(\ref{eq2}) gives
\be
\label{eq3}
\tilde{W} X(\gamma_{\mathrm{out}}) \tilde{W}^{-1}= X(U\gamma_{\mathrm{out}}).
\ee
Combining Eqs.~(\ref{eq1},\ref{eq3}) one arrives at
\[
\tilde{W} X(\gamma) \tilde{W}^{-1}= X(U\gamma_{\mathrm{in}})X(U\gamma_{\mathrm{out}})=X(U\gamma).
\]
Exactly the same argument applies to show that 
\[
\tilde{W}Z(\gamma)\tilde{W}^{-1} = Z(U^*\gamma).
\]
Thus $\tilde{W}=W$.
\end{proof}

The benefit of \lem{lem4} in comparison to \lem{lem3} is the $\tilde{W}$ constructed is equal to the original $W$, and thus the result of \lem{lem4} can be applied to subcircuits of a given $\cnotgate$ circuit. This allows the introduction of the Hadamard gates that can be used to turn some of the neighboring $\cnotgate$ gates into $\czgate$ gates, and then induce those $\czgate$ gates via a set of phases, thus requiring no two-qubit gates to implement the given $\czgate$'s.  We illustrate this optimization algorithm with an example.

\begin{example}\label{ex:1}
Consider $\cnotgate$-optimal reversible circuit:
\[
\Qcircuit @C=0.7em @R=0.7em @!{
&\ctrl{3}	& \targ 	& \ctrl{1} 	&\targ		& \qw	 	& \qw 		& \qw 		& \qw 		& \qw \\
&\qw	 	& \qw		& \targ 	&\ctrl{-1} 	& \ctrl{1}	& \targ 	& \qw 		& \qw 		& \qw \\
&\qw	  	& \qw		& \qw 		&\qw 		& \targ 	& \ctrl{-1} & \ctrl{1} 	& \targ 	& \qw \\ 
&\targ		& \ctrl{-3} & \qw 		&\qw 		& \qw 		& \qw		& \targ 	& \ctrl{-1}	& \qw
}
\]
We established its optimality by breadth first search.  Notice that the subcircuit spanning gates $2$ through $8$ implements a transformation that satisfies the conditions of \lem{lem4} with $a=0001_2=1$ and $b=0010_2=2$. This allows to rewrite the above circuit in an equivalent form as
\[
\Qcircuit @C=0.5em @R=0.5em @!{
&\ctrl{3}	& \qw 			& \targ 	& \ctrl{1} 	&\targ		& \qw	 	& \qw 		& \qw 		& \qw 		& \qw 			& \qw \\
&\qw	 	& \qw 			& \qw		& \targ 	&\ctrl{-1} 	& \ctrl{1}	& \targ 	& \qw 		& \qw 		& \qw 			& \qw \\
&\qw	  	& \qw 			& \qw		& \qw 		&\qw 		& \targ 	& \ctrl{-1} & \ctrl{1} 	& \targ 	& \gate{\hgate}	& \qw \\ 
&\targ		& \gate{\hgate} & \ctrl{-3} & \qw 		&\qw 		& \qw 		& \qw		& \targ 	& \ctrl{-1}	& \qw 			& \qw
}
\]
We next commute leftmost $\cnotgate$ gate through leftmost Hadamard gate, by turning former into a $\czgate$ gate, to obtain
\[
\Qcircuit @C=0.5em @R=0.5em @!{
& \qw 			&\ctrl{3}		& \targ 	& \ctrl{1} 	&\targ		& \qw	 	& \qw 		& \qw 		& \qw 		& \qw 			& \qw \\
& \qw 			&\qw	 		& \qw		& \targ 	&\ctrl{-1} 	& \ctrl{1}	& \targ 	& \qw 		& \qw 		& \qw 			& \qw \\
& \qw 			&\qw	  		& \qw		& \qw 		&\qw 		& \targ 	& \ctrl{-1} & \ctrl{1} 	& \targ 	& \gate{\hgate}	& \qw \\ 
& \gate{\hgate}	&\gate{\zgate}	& \ctrl{-3} & \qw 		&\qw 		& \qw 		& \qw		& \targ 	& \ctrl{-1}	& \qw 			& \qw
}
\]
and notice that in the phase polynomial formalism the $\czgate(x,y)$ can be implemented as a set of Phase gates $\pgate(x)$, $\pgate(y)$, $\pgate^\dagger(x{\oplus}y)$. All three linear functions, $x$, $y$, and $x{\oplus}y$, already exist in the linear reversible circuit.  Thus, we can induce the desired $\czgate$ by inserting Phase gates (for simplicity, we insert Phase gates at the first suitable occurrence) to obtain the following optimized circuit:
\[
\Qcircuit @C=0.2em @R=0.2em @!{
& \qw 		&\gate{\pgate}	& \targ 	& \gate{\pgate^\dagger}	& \ctrl{1} 	&\targ		& \qw	 	& \qw 		& \qw 		& \qw 		& \qw 			& \qw \\
& \qw 			&\qw	 		& \qw		& \qw	& \targ 	&\ctrl{-1} 	& \ctrl{1}	& \targ 	& \qw 		& \qw 		& \qw 			& \qw \\
& \qw 			&\qw	  		& \qw		& \qw	& \qw 		&\qw 		& \targ 	& \ctrl{-1} & \ctrl{1} 	& \targ 	& \gate{\hgate}	& \qw \\ 
& \gate{\hgate}	&\gate{\pgate}	& \ctrl{-3} & \qw	& \qw 		&\qw 		& \qw 		& \qw		& \targ 	& \ctrl{-1}	& \qw 			& \qw
}
\]
This circuit contains only $7$ two-qubit gates, thus beating the best possible $\cnotgate$ circuit while faithfully implementing the underlying linear reversible transformation.  Note that this circuit can also be obtained using the algorithm resulting from \lem{lem3} with an access to proper $\calF(4)$ compiler, such as the meet-in-the-middle optimal compiler.   
\end{example}

Above we illustrated the advantage quantum Clifford circuits offer over linear reversible $\cnotgate$ circuits.  Next, we study the limitations by such an advantage.  
\begin{lemma}
Quantum advantage by Clifford circuits over reversible $\cnotgate$ circuits cannot be more than by a constant factor.
\end{lemma}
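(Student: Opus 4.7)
The plan is to prove the lemma by a counting argument. I will show that both the worst-case $\cnotgate$-count and the worst-case Clifford two-qubit gate count over all linear reversible functions $U \in \mathrm{GL}(n)$ scale as $\Theta(n^2/\log n)$, so their ratio is $O(1)$.

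The upper bound on $\cnotgate$ circuits is immediate: by the Patel--Markov--Hayes algorithm, every $U \in \mathrm{GL}(n)$ can be implemented with $O(n^2/\log n)$ $\cnotgate$ gates. Thus the worst-case $\cnotgate$-complexity over $\mathrm{GL}(n)$ is at most $O(n^2/\log n)$.

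For the matching lower bound $\Omega(n^2/\log n)$ on Clifford two-qubit gate counts, the key preparatory step is to \emph{normalize} an arbitrary Clifford circuit with $g$ two-qubit gates into the form $\tilde G_1 \tilde G_2 \cdots \tilde G_g \cdot L$, where each $\tilde G_i$ is a member of the two-qubit Clifford group $\calC_2$ acting on some pair of qubits and $L = V_1 \otimes \cdots \otimes V_n$ is a tensor product of single-qubit Cliffords. Starting from the layered form $L_0 G_1 L_1 G_2 \cdots G_g L_g$ with $G_i \in \{\cnotgate,\czgate\}$ and each $L_i$ a tensor product of single-qubit Cliffords, I propagate each $L_i$ ($i < g$) rightward through $G_{i+1}$: the factors on qubits touched by $G_{i+1}$ get absorbed into that two-qubit gate, while factors on the remaining qubits commute past $G_{i+1}$ and merge with $L_{i+1}$. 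The two-qubit gate count is preserved at every step, since $(A \otimes B) \cdot G \in \calC_2$ for any single-qubit Cliffords $A, B$ and any $G \in \calC_2$: absorbing a single-qubit operation into a two-qubit gate only changes the gate's type within the finite set $\calC_2$. The number of distinct normalized circuits is then at most
\[
\bigl(|\calC_2| \cdot \tbinom{n}{2}\bigr)^g \cdot 24^n \;=\; 2^{O(g \log n + n)},
\]
since $|\calC_2| = 11520$ and the single-qubit Clifford group has $24$ elements (both absolute constants). For the set of realizable operators to cover every $U \in \mathrm{GL}(n)$, we need $2^{O(g \log n + n)} \ge |\mathrm{GL}(n)| = 2^{\Omega(n^2)}$, which forces $g = \Omega(n^2/\log n)$ for the worst-case $U$. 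Combining with the $\cnotgate$ upper bound gives the claimed constant-factor ratio.

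The main technical obstacle is justifying the normalization without inflating the two-qubit gate count; this reduces to the closure property $(A\otimes B) \cdot G \in \calC_2$ noted above. A secondary subtlety is that the normalized gates $\tilde G_i$ range over all of $\calC_2$ rather than just $\{\cnotgate, \czgate\}$; but this only enlarges the counted set and hence yields a valid upper bound on the number of Clifford operators realizable with $g$ two-qubit gates, which is precisely what the lower bound argument needs.
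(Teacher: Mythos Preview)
Your counting argument is internally correct but proves a strictly weaker statement than the lemma. The lemma is a \emph{per-instance} claim: for every linear reversible $U$, the optimal $\cnotgate$-count $g(U)$ is at most a constant times the optimal Clifford entangling-gate count $G(U)$. (This is made explicit in the sentence following the lemma in the paper, ``We have shown that $\Omega(g)\le G\le g$.'') Your argument only establishes that $\max_U g(U)$ and $\max_U G(U)$ are both $\Theta(n^2/\log n)$. That comparison of worst cases says nothing about an individual $U$ whose Clifford cost $G(U)$ happens to be small---say $O(\log n)$---for which your upper bound on $g(U)$ is still only $O(n^2/\log n)$, leaving a potential ratio of $n^2/\log^2 n$. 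A counting argument cannot close this gap, because it is blind to which specific operators are hard.

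The paper's proof is instead a direct per-instance simulation. Given any Clifford circuit over $\{\pgate,\hgate,\cnotgate\}$ with $g$ $\cnotgate$s (hence $G=O(g)$ gates in total) realizing $U$, it notes that each gate acts on the $2n\times 2n$ stabilizer tableau by a fixed element of $\mathrm{GL}(2n)$ implementable with at most three $\cnotgate$s on $2n$ bits ($\pgate$: one column addition; $\hgate$: a column swap; $\cnotgate$: two column additions). Composing these yields a $\cnotgate$ circuit of size $\le 3G=O(g)$ on $2n$ bits whose top-left $n\times n$ block equals $U$, i.e., a $\cnotgate$ implementation of $U$ (with $n$ ancillas) of cost $O(G(U))$. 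This is exactly the per-instance bound $g(U)=O(G(U))$ that the lemma asserts, and it is not recoverable from your worst-case comparison.
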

\begin{proof}
Suppose $U$ is a classical reversible linear function on $n$ bits that can be implemented by a Clifford circuit $W$ with $g$ $\cnotgate$, and $G$ total gates from the library $\{\pgate, \hgate, \cnotgate \}$.  Trivially, $G \,{\in}\, O(g)$, as we can ignore all qubits affected by the single-qubit gates only.  We next show that there exists a $\cnotgate$ circuit composed of $O(g)$ $\cnotgate$ gates, spanning $2n$ bits, and implementing $U$, and thus quantum advantage by Clifford circuits cannot be more than by a constant factor.  Indeed, consider the operation $U$ written in the tableaux form \cite{aaronson2004improved}.  Each gate, $\pgate, \hgate$ and $\cnotgate$ is equivalent to a simple linear transformation of the $2n{\times}2n$ tableaux.  Specifically \cite{aaronson2004improved}, $\pgate$ corresponds to a column addition (one $\cnotgate$ gate over a linear reversible computation with $2n$ bits),  $\hgate$ corresponds to the column swap (three $\cnotgate$ gates over a linear reversible computation with $2n$ bits), and $\cnotgate$ corresponds to the addition of a pair of columns (two $\cnotgate$ gates over a linear reversible computation with $2n$ bits).  Thus, the total number of $\cnotgate$ gates in a $2n{\times}2n$ linear reversible computation implementing the tableau does not exceed $3G = O(g)$.  We conclude the proof by noting that the top left $n{\times}n$ block of the tableau coincides with the original linear transformation $U$. 
\end{proof}

We already demonstrated that classical reversible linear functions can be implemented with fewer $\cnotgate$ gates by making use of single-qubit Clifford gates.  More precisely, suppose $U$ is a classical reversible linear function.  Let $g=g(U)$ and $G=G(U)$ be the minimum number of $\cnotgate$s required to implement $U$ using gate libraries $\{\cnotgate \}$ and $\{\pgate, \hgate, \cnotgate \}$ respectively.  We have shown that $\Omega(g)\le G\le g$. Furthermore, $G<g$ in certain cases.

Let us say that a quantum circuit $W$ implements $U$ {\it modulo phases} if $W|x\ra = e^{i f(x)} |U(x)\ra$ for any basis state $x$, where $f(x)$ is an arbitrary function.  Note that the extra phases can be removed simply by measuring each qubit in the $0,1$ basis.  Let $G^*=G^*(U)$ be the minimum number of $\cnotgate$s required to implement $U$ modulo phases using the gate library $\{\pgate, \hgate, \cnotgate\}$.  By definition, $G^*\le G$. A straightforward example when $G^*<G$ is provided by the SWAP gate.  A simple calculation shows that
\be
\label{SWAPexample}
\swapgate\cdot  \czgate =  \hgate^{\otimes 2} \cdot \czgate\cdot  \hgate^{\otimes 2}\cdot \czgate \cdot  \hgate^{\otimes 2}.
\ee
The right-hand side can be easily transformed into a circuit with two $\cnotgate$s and two Hadamard gates while the left-hand side implements SWAP gate modulo phases showing that $G^*(\swapgate)\le 2$.  At the same time, it is well-known that $G(\swapgate)=3$. The following lemma provides a multi-qubit generalization of this example.
\begin{lemma}
Consider a classical reversible linear function $U$ on $n$ bits such that
$U(x)\,{=}\,Ax$ for some symmetric matrix $A\,{\in}\, \mathrm{GL}(n)$. 
Then 
\be
\label{G*bound}
G^*(U)\le \sum_{1\le i<j\le n} A_{i,j} + (A^{-1})_{i,j}.
\ee
\end{lemma}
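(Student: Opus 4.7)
My plan is to exhibit an explicit Clifford circuit implementing $U$ modulo phases whose $\cnotgate$-count equals the right-hand side of \eq{G*bound}. For any symmetric $M\in \FF_2^{n\times n}$ set
\[
U_M := \prod_{i=1}^n \pgate_i^{M_{ii}} \prod_{1\le i<j\le n} \czgate_{ij}^{M_{ij}},
\]
a diagonal operator built from $\pgate$ and $\czgate$ gates. Since $\pgate$ is single-qubit and each $\czgate$ equals a single $\cnotgate$ conjugated by two Hadamards, $U_M$ uses exactly $\sum_{i<j}M_{ij}$ $\cnotgate$ gates. The candidate circuit is the five-layer sandwich
\[
V := \hgate^{\otimes n}\cdot U_A\cdot \hgate^{\otimes n}\cdot U_{A^{-1}}\cdot \hgate^{\otimes n},
\]
whose $\cnotgate$-count is exactly $\sum_{i<j}A_{ij}+(A^{-1})_{ij}$, matching the claimed bound.

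To verify correctness I would work at the level of Pauli conjugation rather than amplitudes. The key ingredient is the identity $U_M X_j U_M^{-1}=X_j\,Z(Me_j)$ (modulo a scalar phase), which for symmetric $M$ packages the contribution of $\pgate_j^{M_{jj}}$ (giving $Z_j^{M_{jj}}$) together with each $\czgate_{ij}^{M_{ij}}$ (giving $Z_i^{M_{ij}}$) into precisely the $j$-th column of $M$; combined with $U_M Z_j U_M^{-1}=Z_j$ and $\hgate^{\otimes n} X_j \hgate^{\otimes n}=Z_j$, propagating $X_j$ and $Z_j$ through the five layers of $V$ yields, modulo phases,
\[
V X_j V^{-1}=Z_j\, X(Ae_j),\qquad V Z_j V^{-1}=Z(A^{-1}e_j).
\]
In the second identity the cancellation $A\cdot A^{-1}=I$ is used to rewrite $Z(A\cdot A^{-1}e_j)=Z_j$, which eliminates against the leading $Z_j$ produced by the outer Hadamard layer. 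Because $V Z_j V^{-1}$ is pure $Z$-type and $\{A^{-1}e_j\}_j$ spans $\FF_2^n$, the state $V|0^n\rangle$ is the unique simultaneous eigenstate of a complete commuting set of $Z$-observables, so $V|0^n\rangle=\lambda\,|y\rangle$ for some $y\in\FF_2^n$ and scalar $\lambda$.

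Writing a generic basis state as $|x\rangle=X(x)|0^n\rangle$ and using the product form $V X(x) V^{-1}= Z(x)\,X(Ax)$ (mod phases) then yields $V|x\rangle=\mathrm{phase}(x)\cdot |Ax+y\rangle$; i.e.\ $V$ realizes the affine map $x\mapsto Ax+y$ modulo phases. Prepending the $\cnotgate$-free Pauli operator $X(y)$ cancels the offset, producing a Clifford circuit $X(y)\cdot V$ that implements $U(x)=Ax$ modulo phases while using the stated number of $\cnotgate$ gates. The main obstacle I anticipate is the symmetry-dependent Pauli bookkeeping behind $U_M X_j U_M^{-1}=X_j\,Z(Me_j)$, in particular the $\pgate$-induced $i$-factors and the commutation signs when pushing $Z$ past $X$ in the middle of the sandwich; once these are accounted for, the remaining five-step conjugation is essentially mechanical.
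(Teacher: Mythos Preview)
Your proof is correct and arrives at exactly the same circuit as the paper: the five-layer sandwich $\hgate^{\otimes n}\,U_A\,\hgate^{\otimes n}\,U_{A^{-1}}\,\hgate^{\otimes n}$ (the paper writes it as $\hgate^{\otimes n}D_+\hgate^{\otimes n}D_-\hgate^{\otimes n}$ with $D_\pm$ the diagonal $\czgate/\pgate$ layers for $A,A^{-1}$), corrected by a Pauli. The only difference is presentational: the paper verifies the identity by multiplying $2n\times 2n$ stabilizer tableaux, whereas you verify it by tracking Pauli conjugation through the layers. Since the tableaux is precisely the matrix encoding of that conjugation action, the two verifications are the same computation in different notation; your use of $A\cdot A^{-1}=I$ to cancel the intermediate $Z$-factor is exactly the $\FF_2$ cancellation $I+AA^{-1}=0$ that occurs in the paper's matrix product.
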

\begin{proof}
Define Clifford operators $W=F(I,0,A)$, $D_+=F(I,A,0)$, and $D_-=F(I,A^{-1},0)$.  Here we used the $F(O,\Gamma,\Delta)$ notation introduced in \sec{main}.  Given a Clifford operator $C$, let $\tau(C)$ be the stabilizer tableaux of $C$.  A simple calculation gives
\begin{eqnarray*}
\tau(W)= \left[ \begin{array}{cc}
A & 0 \\
0 & A^{-1} \\
\end{array} \right], \;
\tau(D_+) = \left[ \begin{array}{cc}
I & 0 \\
A & I \\
\end{array} \right], \\
\tau(D_-) = \left[ \begin{array}{cc}
I & 0 \\
A^{-1} & I \\
\end{array} \right], \;
\tau(\hgate^{\otimes n}) = 
 \left[ \begin{array}{cc}
0 & I \\
I & 0 \\
\end{array} \right],
\end{eqnarray*}
and
\[
\tau(W)=\tau(\hgate^{\otimes n})\cdot
\tau(D_+)\cdot \tau(\hgate^{\otimes n})\cdot
\tau(D_-)\cdot \tau(\hgate^{\otimes n})\cdot \tau(D_+).
\]
Thus
\[
W = Q\cdot \hgate^{\otimes n} \cdot D_+ \cdot \hgate^{\otimes n} \cdot D_- \cdot \hgate^{\otimes n} \cdot D_+,
\]
where $Q$ is some Pauli operator. Equivalently,
\be
\label{WD+}
W\cdot D_+^{-1} =Q\cdot \hgate^{\otimes n} \cdot D_+ \cdot \hgate^{\otimes n} \cdot D_- \cdot \hgate^{\otimes n}.
\ee
The right-hand side can be implemented using 
\[
\sum_{1\le i<j\le n} A_{i,j} + (A^{-1})_{i,j}
\]
$\cnotgate$ gates. The left-hand side of Eq.~(\ref{WD+}) implements $U$ modulo phases.
\end{proof}
For example, if $U\,{=}\,\swapgate$ then $A=A^{-1}=\left[\begin{array}{cc} 0 & 1\\1 & 0 \\ \end{array}\right]$ and the bound in Eq.~(\ref{G*bound}) gives $G^*(\swapgate){\le}2$.

The following revisits some of the results discussed above, and formulates a new conjecture.
\begin{conj}
Define $\textup{Cost}$ to be the number of $\cnotgate$ gates in a given circuit.  For a linear reversible function $f$, consider the following optimal circuits, with respect to the above $\textup{Cost}$ metric:
\begin{itemize}
\item $A$ is an optimal $\{\cnotgate\}$ circuit implementing $f$;
\item $B$ is an optimal $\{\hgate, \pgate, \cnotgate\}$ circuit implementing $f$;
\item $C$ is an optimal $\{\hgate, \pgate, \cnotgate\}$ circuit implementing $f$ up to phases (i.e., $C: \ket{x} \mapsto i^k\ket{f(x)}$ for complex-valued $i$ and $k \,{\in}\, \{0,1,2,3\}$);
\item $D$ is an optimal $\{\cnotgate\}$ implementation of $f$ up to the SWAPping of qubits.
\end{itemize}
Then,
$$\textup{Cost}(A) \geq \textup{Cost}(B) \geq \textup{Cost}(C) \geq  \textup{Cost}(D).$$ 
\end{conj}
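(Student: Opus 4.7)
I would split the chain into its three inequalities. The first two are tautological: every $\{\cnotgate\}$-circuit is also a $\{\hgate,\pgate,\cnotgate\}$-circuit implementing the same $f$ exactly, so $\textup{Cost}(A)\ge\textup{Cost}(B)$; and every exact implementation of $f$ is an implementation up to phases (with the trivial phase assignment), so $\textup{Cost}(B)\ge\textup{Cost}(C)$. Both reduce to one-line containment arguments once the definitions are unpacked.

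The substantive inequality is $\textup{Cost}(C)\ge\textup{Cost}(D)$. My starting observation is structural: any Clifford $C$ with $C\ket{x}=i^{k(x)}\ket{Ax}$ is in fact Hadamard-free as a unitary. Indeed, conjugating $\xgate_i,\,\zgate_i$ through $C$ and reading off the stabilizer tableau gives the block form
\[
\tau(C)=\left[\begin{array}{cc} A & 0 \\ B & (A^{-1})^T \end{array}\right]
\]
for some $B$ with $BA^{-1}$ symmetric, which coincides with the tableau of $F(O,BA^{-1},A)\in\calF_n$. So $C$ equals an element of $\calF_n$ up to an overall Pauli, and by the canonical form of \thm{CliffordExact} applied to this element, the Hadamard layer of $C$ must be empty. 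The Hadamards present in the circuit realization of $C$ are therefore ``internal'' cancellations, and it is precisely their ability to locally rearrange the bit structure that permits the savings in the CNOT count.

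The approach I would try is to trade those internal Hadamards for qubit permutations. Using the canonical form as a guide, each internal Hadamard on qubit $i$ must be cancelled by another Hadamard on $i$ later in the circuit. One would attempt to slide each such matched pair together along the circuit using the identity $(\idgate\otimes\hgate)\cnotgate(\idgate\otimes\hgate)=\czgate$ and the fact that $\pgate$ and $\czgate$ commute past each other and are diagonal. If the sliding can always be performed at zero additional CNOT cost, and the net effect of the cancellations is precisely a qubit relabeling $S\in\calS_n$, then the resulting all-CNOT circuit implements $SA$ with at most $\textup{Cost}(C)$ gates, delivering the bound.

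The main obstacle is that this sliding is not clean in general: for example, $\hgate\pgate\hgate$ is Clifford but not an elementary generator, so pushing a Hadamard past a Phase gate requires a nontrivial rewriting that could introduce new two-qubit gates. Likewise, removing one matched pair can disturb the structure around another, so a per-pair induction is not obviously complete. A global invariant is needed, and I would look for one at the level of the symplectic matrix of $C$: match a Bruhat-like factorization of $C\in\calF_n$ termwise against an elementary-row-operation factorization of $SA$, where $S$ is read off from the Hadamard pairings, yielding a CNOT-only circuit for $SA$ using at most $\textup{Cost}(C)$ gates. Making this termwise correspondence rigorous enough to cover all the ways Hadamards and Phases can interact in a circuit realization is where I expect the argument to stall, and is presumably the reason the statement appears here as a conjecture rather than a theorem.
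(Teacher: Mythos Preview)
Your reading of the situation is correct: the paper does not prove this statement. It is explicitly labeled a \emph{Conjecture}, and the paper's entire discussion of it amounts to two sentences: the first two inequalities are declared ``straightforward,'' the strictness of those two is pointed to via the earlier examples (Example~\ref{ex:1} for $A>B$, the $\swapgate$ identity Eq.~(\ref{SWAPexample}) for $B>C$), and the third inequality $\textup{Cost}(C)\ge\textup{Cost}(D)$ is stated as open. So there is no paper proof to compare against for the substantive part.

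Your treatment of the first two inequalities matches the paper's: both are one-line containments of circuit classes, exactly as you say. Your structural observation---that any Clifford $C$ acting as $\ket{x}\mapsto i^{k(x)}\ket{Ax}$ has a block-lower-triangular tableau and hence lies in $\calF_n$---is correct and goes a bit beyond what the paper states, but it does not close the gap, and you are right about why: the optimal \emph{circuit} for $C$ over $\{\hgate,\pgate,\cnotgate\}$ may well use internal Hadamards, and the question is precisely whether those Hadamards buy you anything that a free output permutation cannot. Your ``slide matched Hadamard pairs together'' idea runs into exactly the obstruction you name ($\hgate\pgate\hgate$ is not a generator, and local rewrites can interact globally), and the paper offers no alternative mechanism. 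Your final sentence is accurate: this is why it is a conjecture.
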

First two inequalities are straightforward.  Less trivial is to establish that these are proper inequalities, but as shown earlier each is indeed a proper inequality.  We conjecture that optimizations of linear reversible $\{\cnotgate\}$ circuits by considering Clifford circuits and by admitting implementations up to a relative phase do not exceed optimizations from allowing to implement the $\{\cnotgate\}$ circuit up to a permutation/SWAPping of the qubits.

\section{Conclusion}
In this paper, we studied the structure of the Clifford group and developed exactly optimal parametrization of the Clifford group by layered quantum circuits.  We leveraged this decomposition to demonstrate an efficient $O(n^2)$ algorithm to draw a random uniformly distributed $n$-qubit Clifford unitary, showed how to construct a small circuit that removes the entanglement from Clifford circuits, and implemented Clifford circuits in the Linear Nearest Neighbor architecture in the two-qubit depth of only $9n$.  Hadamard-free operations play a key role in our decomposition.  We studied Clifford circuits for the Hadamard-free set and showed the advantage by implementations with Hadamard gates; in particular, the smallest linear reversible circuit admitting advantage by Clifford gates has an optimal number of $8$ $\cnotgate$ gates (as a $\{\cnotgate\}$ circuit), but it can be implemented with only $7$ entangling gates as a Clifford circuit.

\section*{Acknowledgements}
Authors thank Dr. Jay Gambetta and Dr. Kevin Krsulich from IBM Thomas J. Watson Research Center for their helpful discussions. 
SB acknowledges the support of the IBM Research Frontiers Institute.


%
%
%
%
%
%

\appendix
\section{Generation of a random uniformly distributed Boolean invertible matrix}
\label{app:A}

In this section we consider the task of generating a random uniform
element of the binary general linear group
\[
\mathrm{GL}(n) =\{ M\in \FF_2^{n\times n}: \det{(M)}={1 \pmod 2} \}.
\]
It is well-known that  picking a random uniform matrix $M\,{\in}\, \FF_2^{n\times n}$ and testing whether $M$ is invertible produces a random uniform element of $\mathrm{GL}(n)$ with the success probability $\alpha_n:=\prod_{j=1}^n \frac{2^j-1}{2^j} \approx 0.2887\ldots$. The success probability can be amplified to $1\,{-}\,(1{-}\alpha_n)^m$ by repeating the protocol $m$ times.

In practice, testing the invertibility takes time $O(n^3)$, which can be expensive for large $n$. A more efficient algorithm avoiding the invertibility test was proposed by Dana Randall~\cite{randall1993efficient}. This algorithm has the runtime $O(n^2) + M(n)$, where $M(n)$ is the time it takes to multiply a pair of binary matrices of size $n{\times}n$.  Here we describe a closely related algorithm for sampling $\mathrm{GL}(n)$ that is based on the Bruhat decomposition. The algorithm has the runtime $O(n^2)+M(n)$ with a small constant coefficient and consumes exactly $\log_2{|\mathrm{GL}(n)|}$ random bits, which is optimal. In contrast,  Randall's algorithm is asymptotically tight in the random bit requirement, but not exactly optimal; indeed, it consumes $\log_2{|\mathrm{GL}(n)|}+O(1)$ random bits.

Let $\calS_n$ be the symmetric group over the set of integers $\irange{1}{n}$.
We consider permutations $S\,{\in}\, \calS_n$ as bijective maps $S: \irange{1}{n}\to \irange{1}{n}$ and write $S(i)$ for the image of $i$ under the action of $S$. Recall that the inversion number $I_n(S)$ of a permutation $S\,{\in}\, \calS_n$ is defined as the number of integer pairs $(i,j)$ such that $1\le i<j\le n$ and $S(i)>S(j)$.  By definition, $I_n(S)$ takes values between $0$ and $n(n-1)/2$.  The Mallows measure~\cite{mallows1957non} is the  probability distribution on $\calS_n$ defined as
\be
\label{eq:Mallows}
P_n(S):=\frac{2^{I_n(S)}}{\prod_{j=1}^n (2^j-1)}
\ee
(a more general version of the Mallows distribution has probabilities $P_n(S)\sim q^{I_n(S)}$
for some $q>0$).
Below we describe a  nearly-linear time algorithm for sampling the distribution $P_n(S)$.  
It is used as a subroutine in the algorithm for sampling the uniform distribution on the
group $\mathrm{GL}(n)$.
We identify a permutation $S{\,{\in}\,}\calS_n$ with the permutation matrix $S{\,{\in}\,}\FF_2^{n\times n}$ such that $S_{j,i}\,{=}\,1$ if $S(i)\,{=}\,j$, and else $S_{j,i}\,{=}\,0$. We use the permutation $S$ and the matrix $S$ corresponding to it interchangeably, depending on the context.  

{\centering
	\begin{minipage}{1.0\linewidth}
		\begin{algorithm}[H]
			\caption{Generating $S$ per Mallows distribution $P_n(S)$\label{Sampling_Mallows}}
			\begin{algorithmic}[1]
				\State{$A\gets \irange{1}{n}$}
				\For{$i=1$ to $n$}
				\State{$m\gets |A|$}
				\State{Sample $k\in \irange{1}{m}$ from the probability vector  $p_k = 2^{k-1}/(2^{|A|}-1)$}		
				\State{Let $j$ be the $k$-th largest element of $A$}
				\State{$S(i) \gets j$ }
				\State{$A\gets A\setminus \{j\}$}
				\EndFor
				\State{\textbf{return} $S$}
			\end{algorithmic}
		\end{algorithm}
	\end{minipage}
}

\begin{lemma}
	\label{lem:Mallows}
	Algorithm~\ref{Sampling_Mallows} outputs a permutation matrix $S$ sampled from the distribution $P_n(S)$ defined in \eq{Mallows}.  The algorithm  can be implemented in time $\tilde{O}(n)$.
\end{lemma}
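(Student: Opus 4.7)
The plan is to induct on $n$, following the template of the already-proven \lem{qMallows} but for the classical (rather than quantum) Mallows distribution. The base case $n=1$ is immediate: $\calS_1$ contains only the identity and Step~4 returns $k=1$ with probability $p_1 = 1/(2^1-1) = 1 = P_1(\mathrm{Id})$.

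For the inductive step, I would set $k := S(1)$ (so that at the first iteration with $A = \irange{1}{n}$ the algorithm assigns $j = k$) and let $S' \in \calS_{n-1}$ be the permutation on $\irange{1}{n-1}$ obtained by restricting $S$ to positions $2,\ldots,n$ and then relabeling the image via the order-preserving bijection $\irange{1}{n}\setminus\{k\}\to\irange{1}{n-1}$. The combinatorial heart of the argument is the recursion
\begin{equation}
I_n(S) \;=\; (k-1) \;+\; I_{n-1}(S'),
\end{equation}
which splits inversions into those involving position $1$---exactly $k-1$, one per value in $\irange{1}{k-1}$ that must appear in some later position---and those not involving position $1$, which are preserved bijectively under the monotone relabeling. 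Using the factorization $\prod_{j=1}^n(2^j-1) = (2^n-1)\prod_{j=1}^{n-1}(2^j-1)$ this recursion yields
\begin{equation}
P_n(S) \;=\; \frac{2^{k-1}}{2^n-1}\cdot P_{n-1}(S') \;=\; p_k \cdot P_{n-1}(S'),
\end{equation}
which is precisely the first-iteration sampling probability times, by the induction hypothesis applied to the subproblem on $A\setminus\{k\} \cong \irange{1}{n-1}$, the probability that the remaining $n-1$ iterations produce $S'$.

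For the $\tilde{O}(n)$ runtime claim I would reuse the geometric-sampling trick from the proof of \lem{qMallows}: the distribution $p_k = 2^{k-1}/(2^m-1)$ on $\irange{1}{m}$ has cumulative $F(k) = (2^k-1)/(2^m-1)$, so a sample can be produced in $\tilde{O}(1)$ by inverse CDF as $k = \lceil \log_2(r(2^m-1)+1) \rceil$ for $r$ uniform in $[0,1]$, using only $O(\log n)$-bit arithmetic. Maintaining $A$ in an order-statistic data structure (balanced BST or an indexed Fenwick tree) supports both the rank lookup at Step~5 and the deletion at Step~7 in $O(\log n)$ amortized time, giving $\tilde{O}(n)$ over the $n$ iterations.

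The only real subtlety is pinning down the ``$k$-th largest'' convention so that $S(1) = k$ at the first iteration (i.e., interpreting the ordering of $A$ so that $k=1$ selects the element of smallest rank within the current set, which is what makes the target distribution come out right). This is a convention issue rather than a technical obstacle: once it is fixed, the inversion recursion above is essentially \eq{Ihs1} from the proof of \lem{qMallows} specialized to $h_1 = 0$, and I do not anticipate any further difficulty.
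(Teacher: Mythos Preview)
Your proposal is correct and follows essentially the same approach as the paper's own proof: induction on $n$ via the inversion recursion $I_n(S)=k-1+I_{n-1}(S')$ with $k=S(1)$, the factorization $P_n(S)=p_k\cdot P_{n-1}(S')$, and the inverse-CDF formula $k=\lceil\log_2(r(2^m-1)+1)\rceil$ for sampling the geometric step. Your extra remarks about the order-statistic data structure for maintaining $A$ and about the ``$k$-th largest'' convention are refinements the paper glosses over, but the argument is otherwise identical.
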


\begin{proof}
	Let us first check correctness of the algorithm. We use an induction in $n$. The base of induction is $n{=}1$,
	is trivial. Suppose we proved the lemma for permutations of size $n{-}1$.  Consider a fixed permutation $S\,{\in}\,\calS_n$. Let $k=S(1)$.  The probability that Algorithm~\ref{Sampling_Mallows} picks $k$ at the first iteration of the {\bf for} loop (with $i=1$) is $p_k = 2^{k-1}/(2^n-1)$.  Let $S'\,{\in}\, \calS_{n-1}$ be a permutation corresponding to the matrix obtained by removing the first column and the $k$-th row of $S$. Simple algebra shows that $I_n(S) = k{-}1 + I_{n-1}(S')$. Note that all subsequent iterations of the  {\bf for} loop (with $i\ge 2$) can be viewed as applying Algorithm~\ref{Sampling_Mallows} recursively to generate $S'$ for $S'\,{\in}\, \calS_{n-1}$. By the induction hypothesis, the probability of generating $S'$ is $P_{n-1}(S')$. Thus the probability of generating $S$ is given by 
	\begin{eqnarray*}
	p_k \cdot P_{n-1}(S') 
	=\frac{2^{k-1}}{2^n-1} \cdot \frac{2^{I_{n-1}(S')}}{\prod_{j=1}^{n-1} (2^j-1)} \\
	=\frac{2^{I_n(S)}}{\prod_{j=1}^n (2^j-1)}
	=P_n(S).
	\end{eqnarray*}
	Consider some fixed iteration of the  {\bf for} loop and let $m=|A|$.
	A sample $k\in \irange{1}{m}$ from the probability vector  $p$ can be 
	obtained as 
	\[
	k = \lceil \log_2{\left( r(2^m-1)+1\right)} \rceil
	\]
	where $r\in [0,1]$ is a random
	uniform real variable. Here we used the fact that $p$ is the geometric series.
	Since the integer $k$ is represented using $\log_2{n}$ bits,
	the runtime is $\tilde{O}(1)$ per each iteration of the {\bf for} loop.
\end{proof}

Our algorithm for sampling the uniform distribution on $\mathrm{GL}(n)$ is stated below.

{\centering
	\begin{minipage}{1.0\linewidth}
		\begin{algorithm}[H]
			\caption{Random Invertible $n{\times}n$ Matrix }\label{random_invertible_optimal}
			\begin{algorithmic}[1]	
				\State{Sample a permutation $S$ for $S\,{\in}\, \calS_n$ from the Mallows distribution $P_n(S)$.}		\State{Initialize $L$ and $R$ by $n{\times}n$ identity matrices}
				\For{$j=1$ to $n$}
				\For{$i=(j+1)$ to $n$}
				\State{Sample $b\,{\in}\, \{0,1\}$ from the uniform distribution and assign $L_{i,j}=b$}
				\If{$S(i)<S(j)$}
				\State{Sample $b\,{\in}\, \{0,1\}$ from the uniform distribution and assign $R_{i,j} = b$}
				\EndIf
				\EndFor
				\EndFor
				\State{\textbf{return} $LSR$}
			\end{algorithmic}
		\end{algorithm}
	\end{minipage}
}

Step~1 takes time $\tilde{O}(n)$, see \lem{Mallows}.  Steps~2-10 take time $O(n^2)$ while step~11 takes time $O(n^2) + M(n)$. Thus the algorithm has runtime $O(n^2) + M(n)$, as claimed.

In the rest of this section we prove the correctness and the optimality of Algorithm~\ref{random_invertible_optimal} in the random bit count. Let $\calL_n$ be the set of all $n{\times}n$ lower-triangular unit-diagonal matrices,
\be
\label{Ln}
\calL_n := \{ M {\in} \FF_2^{n\times n}:  \mathrm{det}(M){=}1, \; \mbox{and} \; M_{i,j}{=}0 \; \mbox{for all $i{<}j$}\}.
\ee
In a certain sense, $\calL_n$ is an analogue of the Borel group $\calB_n$
that we used to establish canonical form of Clifford operators. 
Given a matrix $S\,{\in}\, \calS_n$, define the set $\calL_n S \calL_n:= \{ ASB: \, A,B\in \calL_n\}$.
Note that $\calL_n$ and $\calS_n$ are subgroups of $\mathrm{GL}(n)$.
We will need the following three lemmas.
\begin{lemma}[\bf Bruhat decomposition]
	\label{lem:bruhatGL}
	The group $\mathrm{GL}(n)$ is a disjoint union of sets $\calL_n S  \calL_n$,
	$$\mathrm{GL}(n) = \bigsqcup_{S\in \calS_n} \calL_n S  \calL_n.$$
\end{lemma}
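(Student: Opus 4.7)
The plan is to prove the existence and disjointness parts of the decomposition separately.

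For existence, I would perform Gaussian elimination using only operations available from $\calL_n$. A direct calculation shows that left multiplication by a generator of $\calL_n$ is ``add row $b$ to row $a$'' for $b\,{<}\,a$, and right multiplication is ``add column $a$ to column $b$'' for $a\,{>}\,b$. Given $M\,{\in}\,\mathrm{GL}(n)$, I locate the topmost $1$ in the last column — say at row $r$ — then clear every other $1$ in column $n$ by adding row $r$ to each row beneath it, and clear every other $1$ in row $r$ by adding column $n$ to each column to its left. After this peeling step column $n$ and row $r$ each contain a single $1$, at the pivot position $(r,n)$. Removing row $r$ and column $n$ produces an $(n{-}1){\times}(n{-}1)$ invertible submatrix; iterating yields $L,R\,{\in}\,\calL_n$ with $LMR\,{\in}\,\calS_n$. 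Since $\calL_n$ is closed under inversion, this gives $M\,{\in}\,\calL_n \calS_n \calL_n$.

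For disjointness, I would introduce a bi-invariant rank statistic that separates the double cosets. For each $(i,j)\,{\in}\,\irange{1}{n}^2$ define
\[
\rho_{i,j}(M) := \mathrm{rank}\, M[1..i,\,j..n],
\]
the rank of the submatrix formed by the first $i$ rows and the last $n{-}j{+}1$ columns. Left multiplication by $L\,{\in}\,\calL_n$ replaces each row in the top-$i$ block by itself plus rows of strictly smaller index — which all lie in the same block — so the row span of $M[1..i,j..n]$ is unchanged; right multiplication is symmetric on the right block. Hence $\rho_{i,j}(M)$ is constant on $\calL_n M \calL_n$. For a permutation matrix $S$, direct counting gives $\rho_{i,j}(S) = |\{\ell\,{\ge}\,j : S(\ell)\,{\le}\, i\}|$, and the difference
\[
\rho_{i,j}(S) - \rho_{i,j+1}(S) = [S(j)\,{\le}\, i]
\]
recovers $S(j)$ from the statistic $\rho$ as $i$ varies. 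Thus $\rho$ distinguishes every pair of distinct permutations, so the double cosets $\calL_n S \calL_n$ for distinct $S$ are pairwise disjoint.

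The main subtlety is in the recursive step of the existence proof: after removing the processed row $r$ and column $n$, the operations used at subsequent iterations must still correspond to elementary matrices in the full-size group $\calL_n$. This works because the surviving rows and columns inherit the ambient linear order, so ``add smaller-index row to larger-index row'' is well defined after any deletions; and because such operations act as the identity on the already-cleared row $r$ and column $n$, the pivot structure built up in earlier iterations is preserved throughout the recursion.
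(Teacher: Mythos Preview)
Your proof is correct. The existence half is essentially the paper's argument: both perform Gaussian elimination using only the row operation ``add a higher row to a lower row'' and the column operation ``add a rightward column to a leftward column,'' reducing $M$ to a permutation matrix; you organize it as a right-to-left column sweep while the paper states the two kinds of elimination more abstractly, but the content is the same.

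For disjointness you take a genuinely different route. The paper argues by contradiction at the level of matrix entries: if $S\in\calL_n T\calL_n$ with $S\ne T$, then $S^{-1}LT\in\calL_n$ for some $L\in\calL_n$; picking $i$ with $U(i)<i$ for $U=S^{-1}T$ forces an above-diagonal entry of $S^{-1}LT$ to equal a diagonal entry of $L$, i.e.\ $1$, contradicting lower-triangularity. Your approach instead exhibits a complete invariant: the array of corner ranks $\rho_{i,j}(M)=\mathrm{rank}\,M[1..i,\,j..n]$ is constant on each double coset and recovers $S$ via $\rho_{i,j}(S)-\rho_{i,j+1}(S)=[S(j)\le i]$. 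This is the classical ``relative position'' argument from the theory of Schubert cells; it is slightly longer but more conceptual, and it gives you for free an explicit algorithm to read off the Bruhat permutation of any $M$ without performing the elimination. The paper's argument is shorter and entirely elementary but does not yield such an invariant.
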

\begin{proof}
	Multiplying a matrix $M \,{\in}\, \mathrm{GL}(n)$ on the left by a suitable element of $\calL_n$ one can add the $i$-th row of $M$ to the $j$-th row of $M$ modulo two for any $i\,{<}\,j$. This enables a version of Gaussian elimination that clears a column of $M$ downwards leaving the topmost nonzero element of the column untouched. Likewise, multiplying $M$ on the right by a suitable element of $\calL_n$ one can add the $j$-th column of $M$ to the $i$-th column of $M$ modulo two for any $i\,{<}\,j$. This enables a version of Gaussian elimination that clears a row of $M$ leftwards leaving the rightmost nonzero element of the row untouched.  Combined, a sequence of such operations can transform any invertible binary matrix to a permutation matrix. This shows that for any $M\,{\in}\, \mathrm{GL}(n)$ there exist $A,B \in \calL_n$ such that $AMB\,{\in}\, \calS_n$.  Thus $M\in A^{-1}\calS_n B^{-1}$. Since $\calL_n$ is a group, we conclude that $M\in \calL_n \calS_n \calL_n$, that is, $\mathrm{GL}(n)$ is a union of the subsets  $\calL_n S \calL_n$.
	
	It remains to check that the subsets $\calL_n S  \calL_n$ are pairwise disjoint.  Assume the contrary. Then $S\in \calL_nT \calL_n$ for some distinct permutation matrices $S,T\in \calS_n$. Equivalently, $S^{-1} L T\in \calL_n$ for some $L\,{\in}\, \calL_n$. Define $U:=S^{-1} T$. Since $U$ is a non-identity permutation, there exists an integer $i$ such that $i\,{>}\,U(i)$. Let $j\,{=}\,U(i)$. Then $S(j)\,{=}\,T(i)$ and $j\,{<}\,i$. This gives  $0=(S^{-1}LT)_{j,i}=L_{S(j),T(i)}=1$ since, by assumption, both $S^{-1} LT$ and $L$ are elements of $\calL_n$. We arrived at a contradiction. Thus $S\in \calL_n T \calL_n$ is possible only if $S\,{=}\,T$.
\end{proof}

\begin{lemma}
	\label{lem:double_coset_size}
	For all $S\,{\in}\, \calS_n$ the following equality holds
	\be
	\label{eq:double_coset_size}
	\frac{ |\calL_n S \calL_n|}{|\mathrm{GL}(n)|} = P_n(S),
	\ee
	where $P_n(S)$ is the Mallows distribution defined in \eq{Mallows}.
\end{lemma}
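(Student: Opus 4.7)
The plan is to compute $|\calL_n S \calL_n|$ exactly using a stabilizer-style counting argument, then divide by $|\mathrm{GL}(n)|$ and match the expression for $P_n(S)$.

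First, I would study when two pairs $(L_1,L_2)$ and $(L_1',L_2')$ in $\calL_n\times\calL_n$ give the same product $L_1SL_2 = L_1'SL_2'$. Rearranging yields $(L_1')^{-1}L_1 = S L_2'L_2^{-1} S^{-1}$, so if we define
\[
\calL_n(S) := \{ L \in \calL_n \,:\, S^{-1} L S \in \calL_n\},
\]
then this intersection of two groups is itself a group, and the multiplication map $(L_1,L_2)\mapsto L_1SL_2$ is exactly $|\calL_n(S)|$-to-one onto $\calL_n S\calL_n$. Hence
\[
|\calL_n S \calL_n| \;=\; \frac{|\calL_n|^2}{|\calL_n(S)|}.
\]

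Next, I would count $|\calL_n(S)|$ directly. Since $(S^{-1}LS)_{i,j} = L_{S(i),S(j)}$, the condition $S^{-1}LS\in\calL_n$ translates into $L_{S(i),S(j)} = 0$ for all $i<j$; when $S(i)<S(j)$ this is automatic from $L\in\calL_n$, and when $S(i)>S(j)$ it is a genuine constraint on one of the sub-diagonal free parameters of $L$. These constraints are independent because each picks off a distinct sub-diagonal entry $L_{a,b}$ with $a>b$. The number of such new constraints is precisely $\#\{(i,j):i<j,\ S(i)>S(j)\} = I_n(S)$. Since $|\calL_n| = 2^{n(n-1)/2}$, this gives
\[
|\calL_n(S)| \;=\; 2^{n(n-1)/2 - I_n(S)},
\qquad
|\calL_n S \calL_n| \;=\; 2^{n(n-1)/2 + I_n(S)}.
\]

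Finally, I would combine this with the standard formula $|\mathrm{GL}(n)| = \prod_{j=0}^{n-1}(2^n-2^j) = 2^{n(n-1)/2}\prod_{j=1}^n(2^j-1)$ to obtain
\[
\frac{|\calL_n S \calL_n|}{|\mathrm{GL}(n)|} \;=\; \frac{2^{I_n(S)}}{\prod_{j=1}^n (2^j-1)} \;=\; P_n(S),
\]
as claimed. The only nontrivial step is the independence count of the constraints $L_{S(i),S(j)}=0$, but this is clean once one notes that distinct inverted pairs $(i,j)$ produce distinct entries $(S(i),S(j))$ of the matrix $L$; the rest is just bookkeeping. The disjointness in Lemma~\ref{lem:bruhatGL} is what guarantees the identification $\sum_S|\calL_nS\calL_n|=|\mathrm{GL}(n)|$, providing a useful consistency check that the resulting $P_n(S)$ sums to one.
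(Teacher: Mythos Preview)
Your proof is correct and follows essentially the same route as the paper's: both compute $|\calL_n S\calL_n|=|\calL_n|^2/|\calL_n\cap S\calL_n S^{-1}|$ (the paper phrases this via the product-of-subgroups formula $|AB|=|A||B|/|A\cap B|$ after left-multiplying by $S^{-1}$, whereas you phrase it as a fiber count), and then both count the intersection by observing that the extra lower-triangular constraints are indexed exactly by the inversions of $S$. The only cosmetic difference is that the paper works with $S^{-1}\calL_nS\cap\calL_n$ rather than your $\calL_n\cap S\calL_nS^{-1}$, but these are conjugate and have the same size.
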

\begin{proof}
	Subsets $\calL_n$ and $S^{-1}  \calL_n S$ are the groups of size $|\calL_n|$. Thus 
	\be
	|\calL_n  S\calL_n| = 
	|(S^{-1}\calL_n  S) \calL_n|= 
	\frac{|\calL_n|^2}{| (S^{-1} \calL_n S) \cap \calL_n|}.
	\ee
	The set $(S^{-1} \calL_n S)\cap \calL_n$ includes all  unit-diagonal matrices $M$ such that $M_{i,j}=0$ for all pairs $(i,j)$ satisfying $i<j$ or $S(i)<S(j)$.  The number of such pairs is $n(n{-}1)/2 -I(S)$. Therefore $| (S^{-1} \calL_n S) \cap \calL_n|=2^{n(n-1)/2 - I(S)}$. Noting  that $|\calL_n|=2^{n(n-1)/2}$ and $|\mathrm{GL}(n)|=2^{n(n-1)/2} \prod_{i=1}^n (2^i-1)$ gives \eq{double_coset_size}.
\end{proof}

\lem{bruhatGL} and \lem{double_coset_size} immediately imply that a random uniform element of $\mathrm{GL}(n)$ can be generated as the matrix product $LSR$, where $L,R\in \calL_n$ are sampled from the uniform distribution and $S\,{\in}\, \calS_n$ is sampled from $P_n(S)$. Such simplified algorithm, however, is not optimal in terms of the number of random bits.  Indeed, it may happen that different choices of $L$ and $R$ give the same product $LSR$ (consider, as an example, the case $S\,{=}\,I$).  This is the reason why Algorithm~\ref{random_invertible_optimal} samples $R$ from a certain non-uniform distribution depending on $S$. Given a permutation $S\,{\in}\, \calS_n$ let $\calL_n(S)$ be the set of all matrices $R\,{\in}\, \calL_n$ such that $R_{i,j}\,{=}\,0$ for all pairs $i\,{>}\,j$ with $S(i)\,{>}\,S(j)$. Note that
\be
\label{eq:LnS}
|\calL_n(S)|=2^{I_n(S)}.
\ee
$\calL_n(S)$ can be viewed as a classical analogue of the group $\calB_n(h,S)$ that we used to establish canonical form of Clifford operators.  We next refine the statement of \lem{bruhatGL} obtaining an analogue of \thm{CliffordExact} for the group $\mathrm{GL}(n)$. 
\begin{lemma}[\bf Canonical Form]
	\label{lem:exact}
	Any element of $\mathrm{GL}(n)$ can be uniquely represented as $LSR$ for some $L\,{\in}\, \calL_n$, $S\,{\in}\, \calS_n$, and $R\,{\in}\, \calL_n(S)$.
\end{lemma}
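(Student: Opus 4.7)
The strategy is to start from the Bruhat decomposition of \lem{bruhatGL} and refine it by splitting off the redundancy in the right factor using a natural complementary subgroup. Define
\[
K := (S^{-1} \calL_n S) \cap \calL_n,
\]
which consists of the lower-triangular unit-diagonal matrices whose off-diagonal nonzero positions $(i,j)$ satisfy $i>j$ \emph{and} $S(i)>S(j)$. By contrast, $\calL_n(S)$ consists of such matrices whose off-diagonal nonzero positions satisfy $i>j$ and $S(i)<S(j)$. In particular $K \cap \calL_n(S) = \{I\}$, since no off-diagonal position can lie in both constraint sets, and from the cardinality computation in the proof of \lem{double_coset_size} together with \eq{LnS} one gets
\[
|K| \cdot |\calL_n(S)| \;=\; 2^{n(n-1)/2 - I_n(S)} \cdot 2^{I_n(S)} \;=\; 2^{n(n-1)/2} \;=\; |\calL_n|.
\]

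For existence, I would first apply Bruhat decomposition to obtain $M = L_0 S R_0$ with $S$ uniquely determined by $M$ and some $L_0, R_0 \in \calL_n$. The remaining task is to factor $R_0 = R'' R$ with $R'' \in K$ and $R \in \calL_n(S)$. A short case analysis on positions shows that $\calL_n(S)$ is closed under multiplication and hence is a subgroup of $\calL_n$; injectivity of the product map $K \times \calL_n(S) \to \calL_n$ then follows from $K \cap \calL_n(S) = \{I\}$, and the cardinality identity above promotes this injection to a bijection, yielding the desired factorization of $R_0$. Setting $L := L_0 \cdot (S R'' S^{-1}) \in \calL_n$, which is well-defined because $S R'' S^{-1} \in \calL_n$ by the definition of $K$, one obtains $M = L S R$ in the required canonical form.

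Uniqueness is then quick: suppose $L_1 S R_1 = L_2 S R_2$ with $L_i \in \calL_n$ and $R_i \in \calL_n(S)$ (the permutation $S$ is already forced by the disjointness in \lem{bruhatGL}). Rearranging yields $L_2^{-1} L_1 = S (R_2 R_1^{-1}) S^{-1}$; the left side lies in $\calL_n$, so $R_2 R_1^{-1} \in K$, while $R_2 R_1^{-1}$ also lies in the subgroup $\calL_n(S)$, forcing $R_2 R_1^{-1} = I$ and hence $L_1 = L_2$. The principal obstacle I anticipate is the verification that $\calL_n(S)$ is closed under multiplication; this amounts to checking that for $i > j$ with $S(i) > S(j)$, no intermediate index $k$ can simultaneously satisfy the nonzero-position constraints at $R_{i,k}$ and $R'_{k,j}$, which follows from a straightforward case split on whether $k$ equals $i$, equals $j$, or lies strictly between them.
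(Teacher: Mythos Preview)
Your proof is correct and follows essentially the same strategy as the paper: both identify the complementary subgroup $K=(S^{-1}\calL_n S)\cap \calL_n$, use $K\cap \calL_n(S)=\{I\}$ together with the cardinality identity $|K|\cdot|\calL_n(S)|=|\calL_n|$, and derive uniqueness from exactly the rearrangement you wrote. Two minor differences are worth noting. First, the paper avoids your anticipated case analysis entirely: it observes that with $\bar S=TS$ (where $T$ is the order-reversing permutation) one has $\calL_n(S)=(\bar S^{-1}\calL_n \bar S)\cap \calL_n$, so $\calL_n(S)$ is a subgroup as an intersection of two subgroups, and moreover $K=\calL_n(\bar S)$, making the trivial-intersection statement symmetric. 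Second, rather than constructing the factorization $R_0=R''R$ explicitly, the paper proves uniqueness first and then obtains existence by the global count $|\calL_n|\sum_{S}|\calL_n(S)|=|\mathrm{GL}(n)|$. Your constructive route has the modest advantage of actually producing the canonical form from a given Bruhat decomposition; the paper's route is slightly shorter on the page.
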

\begin{proof}
	Let us first check that the representation stated in the lemma is unique. Suppose $LSR=L'S'R'$ for some $L,L'\in \calL_n$, some $S,S'\in \calS_n$, and some $R,R'\in \calL_n(S)$. By \lem{bruhatGL}, $S\,{=}\,S'$.  Denoting $L''=(L')^{-1} L$ one gets $S^{-1} L''S=R'R^{-1}$.  Thus
	\be
	\label{eq:tricky1}
	K:= R'R^{-1} \in (S^{-1} \calL_n S) \cap \calL_n.
	\ee
	Note that $K\in (S^{-1} \calL_n S) \cap \calL_n$ iff $K$ has unit diagonal and 
	\be
	\label{eq:tricky2}
	K_{i,j}=0 \quad \mbox{if $i<j$ or $S(i)<S(j)$}.
	\ee
	Let $T\,{\in}\, \calS_n$ be the reflection such that $T(i)=n-i+1$. For any permutation $S\,{\in}\, \calS_n$ let $\bar{S}\,{=}\,TS$. Note that $S(i)\,{<}\,S(j)$ iff $\bar{S}(i)\,{>}\,\bar{S}(j)$. From \eq{tricky1} and \eq{tricky2} and the definition of $\calL_n(S)$ one gets 
	\be
	\label{eq:SSbar}
	\calL_n(S)= (\bar{S}^{-1} \calL_n \bar{S}) \cap \calL_n.
	\ee
	This shows that $\calL_n(S)$ is a group. Furthermore,
	\be
	\label{eq:interSSbar}
	\calL_n(S) \cap \calL_n(\bar{S})=\{I\}.
	\ee
	From \eq{tricky1} \eq{SSbar} we have  $K\,{=}\,R'R^{-1}\in \calL_n(\bar{S})$ while $R,R'\in \calL_n(S)$. From \eq{interSSbar} and the fact that $\calL_n(S)$ is a group one infers that $K\,{=}\,I$, that is, $R\,{=}\,R'$. From $S\,{=}\,S'$ and $R\,{=}\,R'$, one gets $S\,{=}\,S'$. Thus the decomposition claimed in the lemma is unique, whenever it exists.
	
	By counting, any element of $\mathrm{GL}(n)$ admits such decomposition. Indeed, using \eq{LnS} one concludes that  the number of distinct triples $(L,S,R)$ with $L \,{\in}\, \calL_n$, $S\,{\in}\, \calS_n$, and $R\,{\in}\, \calL_n(S)$ is 
	\begin{eqnarray}\label{eq:counting}
	|\calL_n| \sum_{S\in \calS_n} |\calL_n(S)| = 2^{n(n-1)/2} \sum_{S\in \calS_n}  2^{I_n(S)} \nonumber \\
	=2^{n(n-1)/2}  \prod_{j=1}^n (2^j-1) = |\mathrm{GL}(n)|.
	\end{eqnarray}
	Here the second equality follows from the normalization of the Mallows distribution, see \eq{Mallows}.
\end{proof}

Combining \lem{exact} and \eq{LnS} one infers that the fraction of elements $M\,{\in}\, \mathrm{GL}(n)$ representable as $M\,{=}\,LSR$ with a given $S\,{\in}\, \calS_n$ is given by the Mallows measure $P_n(S)$.  Thus a random uniform $M\,{\in}\, \mathrm{GL}(n)$ can be obtained by sampling $S$ from $P_n(S)$, sampling $L$ uniformly from $\calL_n$, sampling $R$ uniformly from $\calL_n(S)$, and returning $LSR$.  This is described in  Algorithm~\ref{random_invertible_optimal}.  The number of random bits consumed by the algorithm is exactly $\log_2{|\mathrm{GL}(n)|}$ since it is based on the exact parameterization of the group.

\section{Proof of the existence of exact parametrization of the Clifford group}
\label{app:B}

In this appendix, we report a shorter proof of the existence of exact parametrization of the Clifford group without an in-depth exploration of the Borel group elements, such as done in \thm{CliffordExact}.  

For each integer $k\,{\in}\, \irange{0}{n}$ define 
\[
H^k := \hgate_1 \hgate_2 \cdots \hgate_k.
\]
Here $\hgate_i$ is the Hadamard gate acting on the $i$-th qubit ($H^0 \,{=}\, Id$). From Bruhat decomposition~\cite{maslov2018shorter} one infers that any $U\,{\in}\, \calC_n$ can be written (non-uniquely) as 
\be
\label{eq:Bruhat1}
U = L H^k R
\ee
for some $L,R \in \calF_n$ and integer $k\,{\in}\, \irange{0}{n}$. We would like to examine conditions under which the decomposition in \eq{Bruhat1} is unique. To this end, define a group
\[
\calF_n(k) = \calF_n \cap (H^k \calF_n H^k).
\]
$\calF_n(k)$ is indeed a group, since it is the intersection of 
two groups $\calF_n$ and $H^k \calF_n H^k$
(the latter is a group since $H^k$ is a self-inverse operator).

Decompose $\calF_n$ into a disjoint union of right cosets of $\calF_n(k)$, that is,
\[
\calF_n=\bigsqcup_{j=1}^{m} \calF_n(k) V_j.
\] 
Here we fixed some set of coset representatives $\{ V_j\in \calF_n\}$
and $m$ is the number of cosets (which depends on $n$ and $k$).
\begin{lemma}
	\label{lem:lem1}
	Any Clifford operator $U\,{\in}\, \calC_n$ can be uniquely written as
	\[
	U = L H^k V_j
	\]
	for some integers  $k\,{\in}\, \irange{0}{n}$, $j\,{\in}\, \irange{1}{m}$, and some operator $L\,{\in}\, \calF_n$.
	Accordingly, the Clifford group is a disjoint union of subsets $\calF_nH^k V_j$.
\end{lemma}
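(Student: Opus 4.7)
The plan is to prove existence and uniqueness separately, leaning on the (non-unique) Bruhat-type decomposition $U = L H^k R$ quoted at the start of Appendix~B together with standard double-coset bookkeeping. For existence, I would start with any non-unique decomposition $U = L H^k R$ with $L, R \in \calF_n$, and use the right-coset decomposition $\calF_n = \bigsqcup_{j} \calF_n(k) V_j$ to write $R = W V_j$ with $W \in \calF_n(k)$ and a unique $j$. The key observation is that $W \in \calF_n(k) = \calF_n \cap H^k \calF_n H^k$ implies $W = H^k W' H^k$ for some $W' \in \calF_n$; since $(H^k)^2 = Id$ this yields $H^k W = W' H^k$, so
\[
U = L H^k W V_j = (L W')\, H^k V_j
\]
with $LW' \in \calF_n$, which is the desired form.

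For uniqueness I would first show that $k$ is uniquely determined by $U$. The cleanest route is to invoke the finer Bruhat decomposition \eq{bruhat001} of the main text, in which $U$ is attached a unique pair $(h,S)$. For any such $h$ one can pick a permutation $\sigma$ with $\left(\prod_i \hgate_i^{h_i}\right) = \sigma H^{|h|} \sigma^{-1}$, and since permutations lie in $\calF_n$ (being generated by $\cxgate$s), the cell $\calB_n \left(\prod_i \hgate_i^{h_i}\right) S \calB_n$ is contained in $\calF_n H^{|h|} \calF_n$. Because this holds for every cell and the cells are disjoint, the sets $\calF_n H^k \calF_n$ must be pairwise disjoint as $k$ varies, forcing $k = |h|$ to be an invariant of $U$.

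With $k$ fixed, suppose $L H^k V_j = L' H^k V_{j'}$. Rearranging gives
\[
(L')^{-1} L = H^k (V_{j'} V_j^{-1}) H^k.
\]
The left-hand side is in $\calF_n$ and $V_{j'} V_j^{-1} \in \calF_n$, hence $V_{j'} V_j^{-1} \in \calF_n \cap H^k \calF_n H^k = \calF_n(k)$. Thus $V_{j'}$ and $V_j$ represent the same right coset of $\calF_n(k)$, which by our choice of representatives forces $V_j = V_{j'}$, and then $L = L'$.

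The one genuinely subtle point I anticipate is the invariance of $k$: the bare decomposition quoted at the beginning of Appendix~B only delivers existence of some $k$ and does not rule out collisions across different values. Importing uniqueness from \eq{bruhat001} as above, or alternatively observing that the rank of the $X{\to}Z$ block of the symplectic tableau of $U$ is $\calF_n$-bi-invariant and equals $k$, dispatches this step; everything else is routine manipulation built on $(H^k)^2 = Id$ and the definition of $\calF_n(k)$.
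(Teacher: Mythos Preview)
Your existence argument and your coset-uniqueness argument (for $V_j$ and then $L$) are exactly what the paper does, phrased slightly differently: the paper writes the absorption step as the invariance of $LH^kR$ under $R\mapsto AR$, $L\mapsto L(H^kA^{-1}H^k)$ for $A\in\calF_n(k)$, which is your $H^kW=W'H^k$ identity.

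The point of divergence is the uniqueness of $k$, and your first route via \eq{bruhat001} has a gap as written. You show that each Bruhat cell $\calB_n\bigl(\prod_i\hgate_i^{h_i}\bigr)S\,\calB_n$ sits inside $\calF_nH^{|h|}\calF_n$, but this containment, together with disjointness of the cells, does \emph{not} force the sets $\calF_nH^k\calF_n$ to be pairwise disjoint: a priori two different $k$'s could give the same $\calF_n$-double coset, swallowing cells of both Hamming weights. What is missing is precisely an invariant that separates the double cosets. Your alternative---the rank of the off-diagonal block of the symplectic tableau---does supply such an invariant (the $Z{\to}X$ block of $\tau(U)$ has rank preserved under left/right multiplication by the block-lower-triangular tableaux of $\calF_n$, and equals $k$ for $H^k$), so your proof goes through once you commit to that route.

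The paper's argument for this step is shorter and worth knowing: since any $L,R\in\calF_n$ send basis vectors to basis vectors up to phase, every nonzero computational-basis matrix element of $U=LH^kR$ has magnitude exactly $2^{-k/2}$; picking any $x$ with $\la x|U|0^n\ra\neq0$ therefore reads off $k$ directly. This is an $\calF_n$-bi-invariant for the same reason your rank is, but it avoids the symplectic bookkeeping.
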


\begin{proof}
	We already know that $U=LH^kR$ for some $L,R\in \calF_n$, see \eq{Bruhat1}. Suppose
	\be
	\label{eq:unique1}
	U=L_1 H^{k_1} R_1 = L_2 H^{k_2} R_2
	\ee
	for some $L_i,R_i\in \calF_n$ and some integers $k_i\,{\in}\, \irange{0}{n}$.
	First we claim that $k_1{=}k_2$.  Indeed, 
	$|\la y|H^{k}|z\ra| \in \{0, 2^{-k/2}\}$ for any 
	basis vectors $y,z$ and any integer $k$. 
	Pick any basis vector $x$ such that
	$\la x|U|0^n\ra\ne 0$. 
	Since $L_i$ and $R_i$ map
	basis vectors to basis vectors (modulo phase factors), \eq{unique1} gives
	\[
	|\la x|U|0^n\ra| =2^{-k_1/2} = 2^{-k_2/2},
	\]
	that is, $k_1=k_2:= k$. From \eq{unique1} one gets $R_2 R_1^{-1} = H^k  L_2^{-1} L_1 H^k$ and thus $R_2R_1^{-1} \in H^k \calF_n H^k$. Since $R_2 R_1^{-1} \in \calF_n$, one infers that $R_2 R_1^{-1} \in \calF_n(k)$.  In particular, $R_1$ and $R_2$ belong to the same right coset of $\calF_n(k)$.  We conclude that the integer $k$ and the coset $\calF_n(k)R$ in the Bruhat decomposition \eq{Bruhat1} are uniquely determined by $U$.  Let the coset containing $R$ be $\calF_n(k)V_j$.
	
	The decomposition in \eq{Bruhat1} is invariant under the set of simultaneous transformations $R \mapsto A R$  and $L \mapsto L B$, where $A\,{\in}\, \calF_n(k)$ is arbitrary  and $B:= H^k A^{-1} H^k \in \calF_n(k)$.  Indeed, 
	\[
	(L B) H^k (A R) = L H^k A^{-1} H^k H^k A R = L H^k R.
	\]
	This transformation can be used to make $R=V_j$.
	Once the factors $R=V_j$ and $H^k$ are uniquely fixed,
	the remaining factor $L=U(H^k R)^{-1}$ is uniquely fixed. 
\end{proof}

We proceed to exploring the structure of $\calF_n(k)$.
\begin{lemma}
	Any element of $\calF_n(k)$ can be implemented by a quantum circuit shown in \fig{1}.
\end{lemma}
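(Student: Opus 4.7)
The plan is to identify $\calF_n(k) = \calF_n \cap (H^k \calF_n H^k)$ with the set of $F(O,\Gamma,\Delta)$ whose parameters obey explicit block-structure constraints, and then to read off a canonical circuit realizing every such operator. Split the $n$ qubits into a top block $T = \irange{1}{k}$ and a bottom block $B = \irange{k+1}{n}$, and partition
\[
\Gamma = \begin{pmatrix}\Gamma_T & \Gamma_{TB}\\ \Gamma_{TB}^T & \Gamma_B\end{pmatrix},\qquad
\Delta = \begin{pmatrix}\Delta_T & \Delta_{BT}\\ \Delta_{TB} & \Delta_B\end{pmatrix},
\]
accordingly. I expect the answer to be that $F(O,\Gamma,\Delta)\in \calF_n(k)$ iff $\Gamma_T = 0$ and $\Delta_{TB}=0$, i.e.\ $\Delta$ is block upper triangular with respect to the split $T\sqcup B$.

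To establish this characterization I would reuse the gate-by-gate case analysis in the proof of \lem{FHS}, specialized to the Hadamard pattern $h = 1^k 0^{n-k}$ and trivial permutation $S=\mathrm{Id}$, but dropping the Borel-ordering constraints (C3, C4 and the $\cxgate^\downarrow$ part of C5) because the target group here is $\calF_n$, not $\calB_n$. Running through the five generators of $\calF_n$ and their conjugates by $H^k$, exactly the same table as in Cases~2--4 of \lem{FHS}'s proof applies: $\hgate \pgate \hgate$ and $(\hgate\otimes\hgate)\czgate(\hgate\otimes\hgate)$ are not $\hgate$-free, forcing $\Gamma_T=0$; and $H^k\cxgate_{i,j}H^k$ fails to be $\hgate$-free exactly when $i\le k<j$, forcing $\Delta_{TB}=0$. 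All other generators conjugate into $\calF_n$, so the listed conditions are necessary. Sufficiency is automatic once we produce a circuit of the form in \fig{1} that realises every such $(O,\Gamma,\Delta)$, since its gates are individually stable under $H^k$-conjugation (up to $\cxgate\leftrightarrow\czgate$ swaps that stay in $\calF_n$).

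For the construction, given $(O,\Gamma,\Delta)$ satisfying the constraints, \prop{mult} lets me factor
\[
F(O,\Gamma,\Delta) \;=\; O \cdot F_{\Gamma_{TB}} \cdot F_{\Gamma_B} \cdot F_B \cdot F_{\Delta_{BT}} \cdot F_{\Delta_T}\cdot F_{\Delta_B},
\]
where $F_{\Delta_T}$ and $F_{\Delta_B}$ are $\cnotgate$ circuits on $T$ and $B$ realising $\Delta_T\in\mathrm{GL}(k)$ and $\Delta_B\in\mathrm{GL}(n-k)$; $F_{\Delta_{BT}}$ is a circuit of $\cnotgate$s with controls in $B$ and targets in $T$ implementing the block $\Delta_{BT}$; $F_{\Gamma_B}$ combines $\pgate$ gates (from the diagonal of $\Gamma_B$) and $\czgate$ gates on $B$ (from its off-diagonal); and $F_{\Gamma_{TB}}$ is a layer of $\czgate$s across the $T$--$B$ cut encoding $\Gamma_{TB}$. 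These pieces are precisely the stages depicted in \fig{1}. Verifying via \prop{mult} that their product has the prescribed $(\Gamma,\Delta)$ and that the layering respects the figure closes the proof.

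The main obstacle I expect is the parametric/counting bookkeeping: it must be checked that the proposed six-stage circuit spans all of $\calF_n(k)$ without undercount or overcount, i.e.\ that $|\calF_n(k)|$ matches the number of tuples $(O,\Gamma_{TB},\Gamma_B,\Delta_{BT},\Delta_T,\Delta_B)$. I would verify this by a counting argument analogous to \eq{Csize}--\eq{count3}, or, more cheaply, by noting that the multiplication rule \eq{mult1} is injective in each stage provided the preceding stages are fixed, so the total number of circuits in \fig{1} equals the product $|\calP_n|\cdot 2^{k(n-k)}\cdot 2^{(n-k)(n-k+1)/2}\cdot 2^{k(n-k)}\cdot |\mathrm{GL}(k)|\cdot |\mathrm{GL}(n-k)|$, which must then be matched to $|\calF_n(k)|$ obtained from the block characterization.
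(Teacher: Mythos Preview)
Your approach is genuinely different from the paper's, and in some ways more transparent. The paper never writes down the block characterization $\Gamma_T=0,\ \Delta_{TB}=0$; it simply defines $\calF_n'(k)$ as ``whatever the circuit in \fig{1} produces,'' checks the easy inclusion $\calF_n'(k)\subseteq\calF_n(k)$ gate by gate, and then forces equality by a \emph{global} count: from \lem{lem1} one has $|\calC_n|=|\calF_n|\sum_k |\calF_n|/|\calF_n(k)|$, and the Gaussian-binomial identity $\sum_k 2^{k(k+1)/2}\binom{n}{k}_2=\prod_k(1+2^k)$ shows the same sum with $|\calF_n'(k)|$ in the denominator already equals $|\calC_n|$. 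Your route instead pins down $\calF_n(k)$ intrinsically and then reads off the circuit; if it works, it avoids the binomial identity entirely.

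The gap is the necessity direction. The gate-by-gate analysis you borrow from \lem{FHS} only shows that each individual forbidden generator conjugates to something outside $\calF_n$; it does \emph{not} show that an arbitrary $F(O,\Gamma,\Delta)$ with $\Gamma_T\neq 0$ or $\Delta_{TB}\neq 0$ fails to lie in $\calF_n(k)$, because products of generators can cancel. (Indeed, in \lem{FHS} itself necessity is obtained \emph{only} through the counting step \eq{Csize}--\eq{count3}, not from the gate tables.) Your proposed counting check is circular as written: matching the Fig.~1 circuit count to the count of tuples $(O,\Gamma,\Delta)$ obeying $\Gamma_T=0,\ \Delta_{TB}=0$ confirms that the circuit parameterizes the constrained set faithfully, but it does not show the constrained set exhausts $\calF_n(k)$. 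To close the gap without reverting to the paper's global count, compute the stabilizer tableau of $H^kF(O,\Gamma,\Delta)H^k$ directly: conjugation by $H^k$ swaps rows $i\leftrightarrow i{+}n$ and columns $i\leftrightarrow i{+}n$ for $i\le k$ in $\tau(F)=\left[\begin{smallmatrix}\Delta&0\\\Gamma\Delta&(\Delta^{-1})^T\end{smallmatrix}\right]$, and demanding the resulting top-right $n{\times}n$ block vanish yields precisely $\Delta_{TB}=0$ and $(\Gamma\Delta)_T=\Gamma_T\Delta_T=0$, hence $\Gamma_T=0$. That one-line tableau computation gives necessity cleanly and makes your argument self-contained.
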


\begin{figure*}
\begin{center}
		\centerline{
			\Qcircuit @C=3em @R=1.2em{
				& & \ZZ_2^k & \ZZ_2^k & GL(k) & \ZZ_2^{k(n-k)} & \ZZ_2^{k(n-k)} & \\ 
				\lstick{k}		&{/} \qw	& \gate{\zgate}	& \gate{\notgate} 	& \gate{\cnotgate}& \ctrl{1} & \targ 	& \qw \\
				\lstick{n{-}k}	&{/} \qw	& \qw 			& \qw 				& \gate{\pgate{+}\notgate{+}\cnotgate}& \ctrl{-1} & \ctrl{-1} 	& \qw \\ 
				& &  &  & \calF_{n-k} &  &  & \\
			}
		}
		\caption{The subgroup $\calF_n(k)$ includes all $\hgate$-free operators $U$ such that conjugating $U$ by Hadamards on the first $k$ qubits gives another $\hgate$-free operator.  Illustrated are the circuit elements that can be used to construct an element of $\calF_n(k)$, as well as the groups they generate.}\label{fig:1}
\end{center}
\end{figure*}

\begin{proof}
	Firstly, it is well known that 
	\[
	|\calC_n| = 2^{2n+n^2}\prod_{j=1}^n (4^j-1).
	\]
	Here, we count the elements of the Clifford group up to the global phase (of $\pm 1, \pm i$). Likewise, we count the number of elements in all other groups considered here up to the global phase. 
	
	Let $\text{Aff}(n)$ be the affine linear group over the binary field. It is isomorphic to the group generated by $\notgate$ and $\cnotgate$ gates, and isomorphic to the product $\ZZ_2^n \rtimes GL(n)$, being groups generated by the $\notgate$ gate and the $\cnotgate$ gates, correspondingly.
	We have
	\[
	|\text{Aff}(n)|=2^n \cdot \prod_{k=0}^{n-1}(2^n - 2^k) = 2^{n(n+1)/2} \prod_{j=1}^n (2^j-1).
	\]
	$\text{Aff}(n)$ is a subgroup of $\calF_n$, and factoring out affine linear transformations from $\calF_n$ leads to the leftover group of diagonal Clifford operations, that is equivalent to the combination of a stage of $\pgate$ gates and a stage of $\czgate$ gates \cite{maslov2018shorter}.  Note that the layer of $\pgate$ gates is isomorphic to $\ZZ_4^{n}$ (each qubit can have an $Id, \pgate, \pgate^2,$ or $\pgate^3=\pgate^\dagger$ be applied to it) and the layer of $\czgate$ is isomorphic to $\ZZ_2^{n(n-1)/2}$ (each of $n(n-1)/2$ $\czgate$ gates is either present or not).  Thus,  
	\[
	\calF_n \cong \ZZ_4^{n} \rtimes \ZZ_2^{n(n-1)/2} \rtimes \ZZ_2^n \rtimes GL(n),
	\]
	where the first factor represents group generated by $\pgate$ gates, the second factor is the group generated by $\czgate$ circuits, the third factor corresponds to the layer of $\notgate$ gates, and the fourth factor is the general linear group, equivalent to the $\cnotgate$ gate circuits. Thus
	\[
	|\calF_n|=2^{2n+n^2} \prod_{j=1}^n (2^j-1).
	\]
	Let $\calF_n'(k)$ be a subset of $\calF_n(k)$ implementable by the quantum circuit shown in \fig{1}. Specifically,
\begin{equation}\label{eq:fk} 
\calF_n'(k) \cong \ZZ_2^{k} \times  \ZZ_2^{k} \times GL(k) \times \calF_{n-k}\times \ZZ_2^{k(n-k)} \times \ZZ_2^{k(n-k)},
\end{equation}
where
\begin{enumerate}
		\item[1.] first $\ZZ_2^{k}$ corresponds to the group of unitaries implementable by Pauli-$\zgate$ gates applied to any of the top $k$ qubits;
		\item[2.] second $\ZZ_2^{k}$ corresponds to the group of unitaries implemented with Pauli-$\xgate$ gates applied to any of the top $k$ qubits;
		\item[3.] $GL(k)$ is the general linear group on first $k$ qubits, it is obtainable by the $\cnotgate$ gates;
		\item[4.] $\calF_{n-k}$ is the group of $\hgate$-free Clifford circuits spanning $n{-}k$ qubits;
		\item[5.] first $\ZZ_2^{k(n-k)}$ is the group of unitaries implementable as $\czgate$ gates with one input in the set of top $k$ qubits, and the other input in the set of bottom $n{-}k$ qubits;
		\item[6.] second $\ZZ_2^{k(n-k)}$ is the group of unitaries implementable as $\cnotgate$ gates with target in the set of top $k$ qubits, and control in the set of bottom $n{-}k$ qubits.
\end{enumerate}
	
Note that the conjugation by $H^k$ maps groups in items 1. and 2. as well as 5. and 6. into each other.  The conjugation by $H^k$ is an automorphism of the groups listed under items 3. and 4.  It can be shown by inspection of the individual gates that belong to the stages listed in \eq{fk} that all these subgroups are contained in $\calF_n(k)$. Simple algebra gives
\begin{equation}
\label{eq:lower2}
|\calF_n'(k)|= 2^{2n+n^2 - k(k+1)/2} \prod_{j=1}^{n-k} (2^j-1) \prod_{j=1}^k ( 2^j-1).
\end{equation}
By \lem{lem1} one has
\be
\label{eq:lower3}
|\calC_n| = |\calF_n| \cdot  \sum_{k=0}^n \frac{|\calF_n|}{ |\calF_n(k)|}
\le  |\calF_n| \cdot  \sum_{k=0}^n \frac{|\calF_n|}{|\calF_n'(k)|},
\ee
since $|\calF_n(k)|\ge |\calF_n'(k)|$.
We next show that 
\begin{equation}\label{eq:main}
|\calC_n| = |\calF_n| \cdot  \sum_{k=0}^n \frac{|\calF_n|}{|\calF_n'(k)|}.
\end{equation}
Combining \eq{lower3} and \eq{main} we conclude that 
$\calF_n(k)=\calF_n'(k)$ for all $k$ proving the lemma.

To prove \eq{main} it is convenient to use Gaussian binomial coefficients~\cite{WikipediaArticle} defined as follows,
\[
{n \choose k}_2 := \frac{ \prod_{j=1}^n (2^j-1)}{\prod_{j=1}^k (2^j-1) \prod_{j=1}^{n-k}(2^j-1)},
\]
where $0\le k\le n$. We need the following identity~\cite{WikipediaArticle}:
\[
\sum_{k=0}^n 2^{k(k-1)/2} {n \choose k}_2 t^k = \prod_{k=0}^{n-1} (1+t2^k).
\]
Here $t$ is a formal variable. Setting $t{=}2$ one can rewrite the above as
\be
\label{eq:gauss1}
\sum_{k=0}^n 2^{k(k+1)/2} {n \choose k}_2  = \prod_{k=1}^n (1+2^k).
\ee
Using the expression for $|\calF_n|$ and \eq{lower2} one gets
\[
\frac{|\calF_n|}{|\calF_n'(k)|}
= 2^{k(k+1)/2}{n \choose k}_2.
\]
The identity in \eq{gauss1} gives
\begin{eqnarray*}
|\calF_n| \cdot  \sum_{k=0}^n \frac{|\calF_n|}{|\calF_n'(k)|}
=|\calF_n|\cdot \prod_{k=1}^n (1+2^k) \\
=2^{2n+n^2} \prod_{k=1}^n (2^k+1)(2^k-1)
= 2^{2n+n^2} \prod_{k=1}^n (4^k-1)
=|\calC_n|,
\end{eqnarray*}
confirming \eq{main}. 
\end{proof}

\section{Python implementation of Algorithm~\ref{Sampling_QMallows}}
\label{app:C}

Here we provide a Python implementation of Algorithm~\ref{Sampling_QMallows}
in the main text.  Python language implementation is also \href{https://qiskit.org/documentation/_modules/qiskit/quantum_info/operators/symplectic/random.html#_sample_qmallows}{included in Qiskit} \cite{Qiskit} as a function $\_sample\_qmallows$.

\begin{lstlisting}[language=Python]
import numpy as np
def _sample_qmallows(n):

# Hadamard layer
h = np.zeros(n, dtype=int)

# Permutation layer
S = np.zeros(n, dtype=int)

A = list(range(n))

for i in range(n):
# number of elements in A
m = n - i 
r = np.random.uniform(0,1)
index = int(np.ceil(np.log2(1 + 
       (1 - r) * (4 ** (-m)))))
h[i] = 1*(index<m)
if index<m:
k = index
else:
k = 2*m - index -1
S[i] = A[k]
del A[k]
return h,S

\end{lstlisting}

\section{Python implementation of Algorithm~\ref{random_clifford_optimal}}
\label{app:D}

Here we provide a Python implementation of Algorithm~\ref{random_clifford_optimal} in the main text.  Python language implementation is also \href{https://qiskit.org/documentation/stubs/qiskit.quantum_info.random_clifford.html}{included in Qiskit} \cite{Qiskit} as a function $random\_clifford$.

\begin{lstlisting}[language=Python]
import numpy as np
def random_clifford(n):

assert(n<=200)

# constant matrices
ZR = np.zeros((n,n), dtype=int)
ZR2 = np.zeros((2*n,2*n), dtype=int)
I = np.identity(n, dtype=int)

h,S = _sample_qmallows(n)

Gamma1 = np.copy(ZR)
Delta1 = np.copy(I)
Gamma2 = np.copy(ZR)
Delta2 = np.copy(I)

for i in range(n):
Gamma2[i,i] = np.random.randint(2)
if h[i]:
Gamma1[i,i] = np.random.randint(2)

for j in range(n):
for i in range(j+1,n):
b = np.random.randint(2)
Gamma2[i,j] = b
Gamma2[j,i] = b
Delta2[i,j] = np.random.randint(2)
if h[i]==1 and h[j]==1:
b = np.random.randint(2)
Gamma1[i,j] = b
Gamma1[j,i] = b
if h[i]==1 and h[j]==0 and S[i]<S[j]: 
b = np.random.randint(2)
Gamma1[i,j] = b
Gamma1[j,i] = b
if h[i]==0 and h[j]==1 and S[i]>S[j]: 
b = np.random.randint(2)
Gamma1[i,j] = b
Gamma1[j,i] = b
if h[i]==0 and h[j]==1: 
Delta1[i,j] = np.random.randint(2)
if h[i]==1 and h[j]==1 and S[i]>S[j]:
Delta1[i,j] = np.random.randint(2)
if h[i]==0 and h[j]==0 and S[i]<S[j]:
Delta1[i,j] = np.random.randint(2)

# compute stabilizer tableaux 
PROD1 = np.matmul(Gamma1,Delta1)
PROD2 = np.matmul(Gamma2,Delta2)
INV1 = np.linalg.inv(np.transpose(Delta1))
INV2 = np.linalg.inv(np.transpose(Delta2))
F1 = np.block([[Delta1, ZR],[PROD1, INV1]])
F2 = np.block([[Delta2, ZR],[PROD2, INV2]])
F1 = F1.astype(int) % 2
F2 = F2.astype(int) % 2

# compute the full stabilizer tableaux
U = np.copy(ZR2)
# apply qubit permutation S to F2
for i in range(n):
U[i,:] = F2[S[i],:]
U[i+n,:] = F2[S[i]+n, :]
# apply layer of Hadamards
for i in range(n):
if h[i]==1:
U[(i,i+n),:] = U[(i+n,i),:]
# apply F1
return np.matmul(F1,U) % 2
\end{lstlisting}

\end{document}